\theoremstyle{plain} 
\newtheorem{theorem}{Theorem}
\newtheorem{lemma}{Lemma}
\newtheorem{fact}{Fact}
\newtheorem{corollary}{Corollary}
\theoremstyle{remark}
\newtheorem{obs}{Observation}
\newtheorem{remark}{Remark}
\begin{document}

\title{Rooted Uniform Monotone Minimum Spanning Trees}

\author{Konstantinos Mastakas} \author{Antonios Symvonis}
\affil{School of Applied Mathematical and Physical Sciences\\ National Technical University of Athens, Greece \\ \{kmast,symvonis\}@math.ntua.gr}

\date{\today}

\maketitle

\begin{abstract}
We study the construction of the minimum cost spanning geometric graph of a given rooted point set $P$ where each point of $P$ is connected to the root by a path that satisfies a given property.
We focus on two properties, namely the monotonicity w.r.t.{} a single direction (\emph{$y$-monotonicity}) and the monotonicity w.r.t.{} a single pair of orthogonal directions (\emph{$xy$-monotonicity}).
We propose algorithms that compute the rooted $y$-monotone ($xy$-monotone) minimum spanning tree of $P$ in $O(|P|\log^2 |P|)$ (resp.{} $O(|P|\log^3 |P|)$) time when the direction (resp.{} pair of orthogonal directions) of monotonicity is given, and in $O(|P|^2\log|P|)$ time when the optimum direction (resp.{} pair of orthogonal directions) has to be determined.
We also give simple algorithms which, given a rooted connected geometric graph, decide if the root is connected to every other vertex by paths that are all monotone w.r.t.{} the same direction (pair of orthogonal directions). 
\end{abstract}

\section{Introduction}

A geometric path $W = (w_0$, $w_1$, \ldots, $w_l)$ is \emph{monotone in the direction of} $y$, also called $y-$\emph{monotone}, if it is $y$-\emph{decreasing}, i.e.{} $y(w_0) \geq y(w_1) \geq$ \ldots $\geq y(w_l)$ or if it is $y-$\emph{increasing}, i.e.{} $y(w_0) \leq y(w_1) \leq$ \ldots $\leq y(w_l)$, where $y(p)$ denotes the $y$ coordinate of a point $p$.
 $W$ is \emph{monotone} if there exists an axis $y'$ s.t.{} $W$ is $y'-$monotone.
Arkin et al.~\cite{ArkCM89} proposed a polynomial time algorithm which connects two given points by a geometric path that is monotone in a given (an arbitrary) direction and does not cross a set of obstacles, if such a path exists.
Furthermore, the problem of drawing a directed graph as an \emph{upward graph}, i.e.{} a directed geometric graph such that each directed path is $y-$increasing, has been studied in the field of graph drawing, e.g.{} see~\cite{BatT88,GarT01}.

A geometric graph $G = (P,E)$ is \emph{monotone} if every pair of points of $P$ is connected by a monotone geometric path, where the direction of monotonicity does not need to be the same for each pair.
If there exists a single direction of monotonicity, we denote the graph as \emph{uniform monotone}.
Uniform monotone graphs were also denoted as \emph{$1-$monotone graphs} by Angelini~\cite{Ang15}.
When the direction of monotonicity is known, say $y$, the graph is called \emph{$y-$monotone}.
Monotone graphs were introduced by Angelini et al.~\cite{AngCBFP12}.
The problem of drawing a graph as a monotone graph has been studied in the field of graph drawing; e.g.{} see~\cite{Ang15,AngCBFP12,AngDKMRSW15,HeH16}.
The reverse problem, namely, given a point set $P$ we are asked to construct a monotone spanning geometric graph on the points of $P$, has trivial solutions, i.e.{} the complete graph $K_{|P|}$ on the points of $P$ as well as the path graph $W_{|P|}$ which visits all points of $P$ in increasing order of their $y$ coordinates are both $y-$monotone spanning geometric graphs of $P$.

The \emph{Euclidean minimum spanning tree problem}, i.e.{} the problem of constructing the minimum cost spanning geometric tree of a plane point set $P$ (where the cost of the tree is taken to be the sum of the Euclidean lengths of its edges), has also received attention~\cite{preparata1988computational}. 
Shamos and Hoey~\cite{ShaH75} showed that it can be solved in $\Theta(|P|\log|P|)$ time. 

Combining the Euclidean minimum spanning tree problem with the notion of monotonicity leads to a large number of problems that, to the best of our knowledge, have not been previously investigated. 
The most general problem can be stated as follows: \emph{``Given a point set $P$ find the minimum cost monotone spanning geometric graph of $P$, i.e.{} the geometric graph such that every pair of points of $P$ is connected by a monotone path''.}
Since in a monotone graph the direction of monotonicity need not be the same for all pairs of vertices, it is not clear whether the minimum cost monotone spanning graph is a tree.
We call this problem the \emph{Monotone Minimum Spanning Graph problem.}
We note that there exist point sets for which the Euclidean minimum spanning tree is not monotone and hence does not coincide with the monotone minimum spanning graph.
Consider for example the point set with four points depicted in Figure~\ref{fig:EMSTvsMonotone} for which the Euclidean minimum spanning tree is a geometric path that is not monotone.

\begin{figure}[hbtp]
\centering
\includegraphics[scale=0.5]{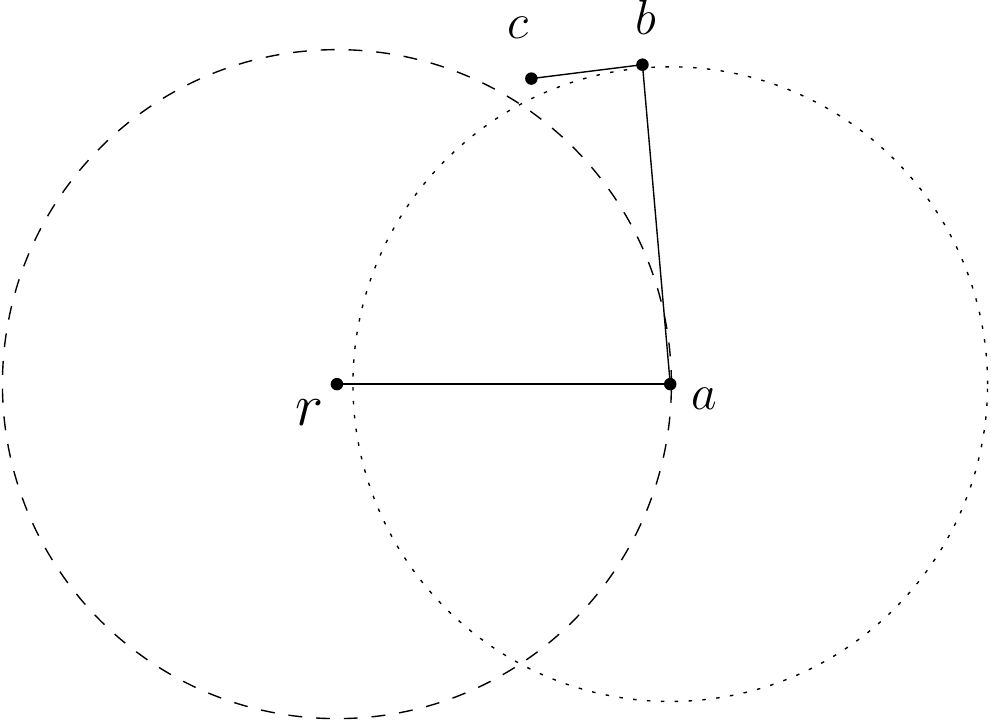}
\caption{The Euclidean minimum spanning tree which is depicted is not monotone}
\label{fig:EMSTvsMonotone}
\end{figure}

We focus on a simple variant of the general monotone minimum spanning graph problem. 
Let $P$ be a rooted  point set, i.e.{} a point set having a designated point, say $r$, as its root.
We do not insist on having monotone paths between every pair of points of $P$ but rather only between the root $r$ with all other points of $P$. 
Moreover, we insist that all paths are \emph{uniform} in the sense that they are all monotone with respect to the same direction, i.e.{} we build \emph{rooted uniform monotone graphs}.
Actually, as it turns out (Corollary~\ref{cor:y-tree}), in this problem the sought graphs are trees and, thus, we refer to it as the \emph{rooted Uniform Monotone Minimum Spanning Tree} (for short, \emph{rooted UMMST}) \emph{problem}.
In the rooted UMMST problem we have the freedom to select the direction of monotonicity.
When we are restricted to have monotone paths in a specific direction, say $y$, we have the \emph{rooted $y$-Monotone Minimum Spanning Tree} (for short, \emph{rooted $y$-MMST}) \emph{problem}.
Figure~\ref{fig:illYrSpanning}(a) illustrates a rooted $y-$monotone spanning graph of a rooted point set $P$, while the rooted $y-$MMST of $P$ is given in Figure~\ref{fig:illYrSpanning}(b).

\begin{figure}[htbp]
\begin{minipage}{0.45\textwidth}
\centering
\includegraphics[width=0.45\textwidth,keepaspectratio]{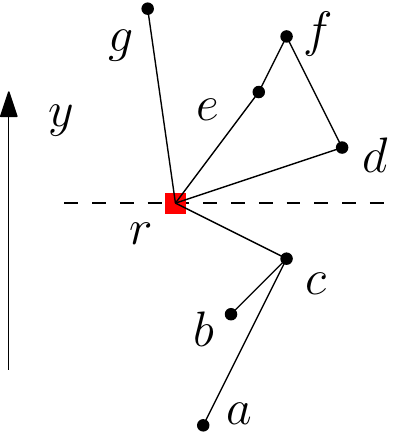}  
 \\ (a)
\end{minipage}
\hfill
\begin{minipage}{0.45\textwidth}
\centering
\includegraphics[width=0.45\textwidth,keepaspectratio]{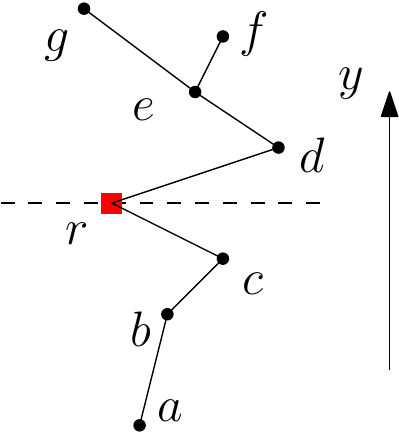}
\\ (b)
\end{minipage}

\caption{Illustration of rooted $y-$monotone spanning graphs
}
\label{fig:illYrSpanning}
\end{figure}

Rooted point sets have been previously studied in the context of minimum spanning trees.
The \emph{capacitated minimum spanning tree} is a tree that has a designated vertex $r$ (its root) and each of the subtrees attached to $r$ contains no more than  $c$ vertices.
$c$ is called the \emph{tree capacity}.
Solving the capacitated minimum spanning tree problem optimally has been shown by Jothi and Raghavachari to be NP-hard~\cite{Jothi:2005:AAC:1103963.1103967}.
In the same paper, they have also presented approximation algorithms for the case where the vertices correspond to points on the Euclidean plane. 

If the geometric path $W$ is both $x-$monotone and $y-$monotone then it is denoted as $xy-$\emph{monotone}.
Furthermore, if there exists a Cartesian System $x'y'$ s.t.{} $W$ is $x'y'-$monotone then $W$ is \emph{2D-monotone}.
Based on $xy-$monotone geometric paths and in analogy to the (rooted) monotone, uniform monotone and $y-$monotone graphs, we define the (\emph{rooted}) \emph{2D-monotone}, \emph{uniform 2D-monotone} and \emph{$xy-$monotone graphs}.
2D-monotone paths/graphs were also recently denoted by Bonichon et al.~\cite{BonBCKLV16} as \emph{angle-monotone paths/graphs}.
Bonichon et al.~\cite{BonBCKLV16} gave a $O(|P|\cdot|E|^2)$ time algorithm that decides if a geometric graph $G = (P,E)$ is 2D-monotone.
In order to do so, Bonichon et al.~\cite{BonBCKLV16} gave a $O(|E|^2)$ time algorithm which is used as a subroutine and decides if the graph is rooted 2D-monotone, where the root is a specified vertex. 
Bonichon et al.~\cite{BonBCKLV16} also noted that it is not always feasible to construct a planar 2D-monotone spanning geometric graph of a given point set.
Similarly to the rooted UMMST and rooted $y$-MMST problems we define the corresponding \emph{rooted Uniform 2D-Monotone Minimum Spanning Tree}  (for short, \emph{rooted 2D-UMMST}) and \emph{rooted $xy$-Monotone Minimum Spanning Tree}  (for short, \emph{rooted $xy$-MMST}) problems, which ask for the minimum cost rooted Uniform 2D-Monotone spanning tree and rooted $xy$-monotone spanning tree of a given rooted point set, respectively. 

A path/curve $W$ is \emph{increasing-chord} (see~\cite{LarM72,Rot94}) if for every four points $p_1, p_2, p_3, p_4$ traversed in this order along it, it holds that $d(p_2, p_3) \leq d(p_1, p_4)$ where $d(p,q)$ denotes the Euclidean distance between the points $p$ and $q$. 
A geometric graph $G = (P,E)$ is \emph{increasing-chord} if each two points of $P$ are connected by an increasing-chord path.
Increasing-chord graphs were introduced by Alamadari et al.~\cite{AlaCGLP13}.
Alamdari et al.~\cite{AlaCGLP13} noted that any 2D-monotone path/graph is also increasing-chord.
Drawing a graph as an increasing-chord graph is studied in~\cite{AlaCGLP13,NolPR16}. 
On the other hand, constructing increasing-chord graphs that span a given point set is studied in~\cite{AlaCGLP13,DehFG15,MasS15}.
In all the papers that construct increasing-chord spanning graphs of a point set, i.e.{} in~\cite{AlaCGLP13,DehFG15,MasS15}, the constructed increasing-chord paths connecting the vertices are additionally 2D-monotone.

\textbf{Our contribution:}

Let $P$ be a rooted point set. 
We give algorithms that produce the rooted $y-$MMST of $P$ and the rooted $xy-$MMST of $P$ in $O(|P|\log^2 |P|)$ time and $O(|P|\log^3 |P|)$ time, respectively. 
We also propose algorithms that build the rooted UMMST of $P$ and the rooted 2D-UMMST of $P$ in $O(|P|^2 \log |P|)$ time when the optimum direction and the optimum pair of directions has to be determined, respectively.
For all these four problems, we provide a $\Omega(|P| \log |P|)$ time lower bound which is easily derived.

We also propose simple algorithms that decide whether a given connected geometric graph on a rooted point set is (i) rooted $y-$monotone, (ii) rooted uniform monotone, (iii) rooted $xy-$monotone and (iv) rooted uniform 2D-monotone.

\section{Definitions and Preliminaries}

In this article we deal with the Euclidean plane, i.e.{} every point set is a subset of $\mathbb{R}^2$, and we consider only rooted point sets.
Let $x,y$ be the axes of a Cartesian System.
The $x$ and $y$ coordinates of a point $p$ are denoted by $x(p)$ and $y(p)$, respectively.
W.l.o.g.{}, we assume that the root $r$ of the point sets coincides with the origin of the Cartesian System, i.e.{}, $x(r) = y(r) = 0$.
We also assume that the point sets are \emph{in general position}, i.e.{} no three points are collinear.

Let $P$ be a point set.
$P$ is called \emph{positive} (\emph{negative}) w.r.t.{} the direction of $y$ or $y-$positive ($y-$negative) if for each $p \in P$, $y(p) \geq 0$ (resp.~$y(p) \leq 0$). 
 Let $a$ be a real number.
 By $P_{y \leq a}$ we denote the set of points of $P$ that have $y$ coordinate less than or equal to $a$.
Subsets $P_{y \geq a}$, $P_{x \leq a}$, $P_{x \geq a}$, $P_{|y| \leq a}$, $P_{|y| \geq a}$, $P_{|x| \leq a}$ and $P_{|x| \geq a}$ are similarly defined.
$P_{x \leq a,y \leq a}$ denotes the set $P_{x \leq a}\cap P_{y \leq a}$.
Subsets $P_{x \geq a,y \geq a}$, $P_{x \leq a, y \geq a}$ and $P_{x \geq a, y \leq a}$ are defined similarly.
The Euclidean plane is divided into four quadrants, i.e.{} the quadrants $\mathbb{R}^2_{x \geq 0, y \geq 0}$, $\mathbb{R}^2_{x \leq 0, y \geq 0}$, $\mathbb{R}^2_{x \leq 0, y \leq 0}$ and $\mathbb{R}^2_{x \geq 0, y \leq 0}$.

Let $P$ be a point set and $p$ be a point of the plane, then $d(p,P)$ denotes the Euclidean distance from $p$ to the point set $P$, i.e.{} $d(p,P) = \min\limits_{q \in P} d(p,q)$.

The line segment with endpoints $p$ and $q$ is denoted as $\overline{pq}$.
The \emph{slope} of a line $L$ is the angle that we need to rotate the $x$ axis counterclockwise s.t.{} the $x$ axis becomes parallel to $L$. 
Each slope belongs to the range $[0,\pi)$.

A \emph{geometric graph} $G = (P,E)$ consists of a set $P$ of points which are denoted as its vertices and a set $E$ of line segments with endpoints in $P$ which are denoted as its edges.
If $P$ is rooted then $G$ is a \emph{rooted geometric graph}.
The cost of a geometric graph $G = (P,E)$, denoted as $\text{cost}(G)$, is the sum of the Euclidean lengths of its edges. 
Let $G_1 = (P_1, E_1), G_2 = (P_2, E_2), \ldots, G_n = (P_n ,E_n)$ be $n$ geometric graphs then the union $G_1\cup G_2\cup \ldots \cup G_n$ is the geometric graph $G = (P, E)$ s.t.{} $P = P_1 \cup P_2 \cup \ldots \cup P_n$ and $E = E_1 \cup E_2 \cup \ldots \cup E_n$.

The \emph{closest point} (or \emph{nearest neighbor}) \emph{problem} is an important problem in computational geometry.
It was initially termed as the \emph{post-office problem} by Knuth~\cite{Knu73}. 
In this problem there exists a set $S$ of points that is static (it cannot be changed by inserting points to it or deleting points from it) and the goal is to find the closest point (or nearest neighbor) from $S$ to a given query point.
This problem is usually reduced to the problem of locating in which region of a planar subdivision the query point is located~\cite{Sha75,DobL76}.
Efficient static data structures have been constructed to answer these queries in logarithmic time by performing fast preprocessing algorithms, e.g.{} see~\cite{LipT80,Kir83}.
Concerning the semi-dynamic version of the closest point problem, in which insertions of points to $S$ are allowed,
Bentley~\cite{Ben79} gave a very useful semi-dynamic data structure. 

\begin{fact}[Bentley~\cite{Ben79}]\label{fact:BentleyStructure}
There exists a semi-dynamic data structure that allows only two operations, the insertion of a point and a closest point query.
Where, a closest point query takes $O(\log ^2 n)$ time (with $n$ denoting the size of the structure) and inserting $n$ elements in the structure takes $O(n \log ^2 n)$ total time.
\end{fact}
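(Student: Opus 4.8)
The plan is to derive Fact~\ref{fact:BentleyStructure} from a purely \emph{static} nearest-neighbor structure by means of the standard logarithmic method (as in~\cite{Ben79}), exploiting the fact that a closest-point query is a \emph{decomposable} search problem. First I would recall the static ingredient: for a fixed set of $m$ points one can build, in $O(m\log m)$ time, the Voronoi diagram together with a planar point-location structure, after which the nearest neighbor of any query point is found in $O(\log m)$ time; this is precisely the ``post-office'' solution referenced above, and it is the only geometric input needed --- everything else is bookkeeping.

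Next I would set up the dynamization. Let $n$ be the current number of inserted points and write $n=\sum_i b_i 2^i$ in binary. The structure maintains, for each $i$ with $b_i=1$, a static nearest-neighbor structure $D_i$ built on a block $B_i$ of exactly $2^i$ of the inserted points, the blocks forming a partition of the current set $S$. A closest-point query is answered by querying every $D_i$ and returning the smallest of the $O(\log n)$ answers; correctness is immediate since $d(p,S)=\min_i d(p,B_i)$, i.e.{} the query decomposes over the partition. As there are at most $\lceil\log n\rceil+1$ blocks, each queried in $O(\log n)$ time, a query costs $O(\log^2 n)$.

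To insert a point I would mimic incrementing a binary counter: create a singleton block; as long as a block of the current size already exists, discard both old static structures, form the union of their point sets (of size $2^{i+1}$), and rebuild a single static structure on it, carrying to the next level. The only cost is rebuilding: constructing a level-$i$ structure takes $O(2^i\cdot i)$ time, and over a sequence of $n$ insertions a level-$i$ structure is rebuilt at most $O(n/2^i)$ times, so the total rebuilding cost is $\sum_{i=0}^{\lceil\log n\rceil} O(n/2^i)\cdot O(2^i i)=\sum_i O(n\,i)=O(n\log^2 n)$. Hence inserting $n$ elements takes $O(n\log^2 n)$ total time, exactly as asserted.

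The main obstacle --- really the only subtlety --- is twofold: first, observing that the nearest-neighbor query is decomposable in the above sense, which is what legitimizes answering it blockwise and taking a minimum; and second, being careful that the claimed insertion bound is \emph{amortized over the whole sequence} rather than worst-case per operation (a single insertion triggering a long carry chain is expensive, but the $O(n\log^2 n)$ total is unaffected, and this is all that the later algorithms in the paper use). If a worst-case per-operation guarantee were desired one would additionally spread each rebuild over the following insertions by the usual global-rebuilding trick, but that refinement is not needed here.
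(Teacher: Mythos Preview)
Your argument is correct and is precisely Bentley's logarithmic method applied to the (decomposable) nearest-neighbor problem; the paper itself does not reproduce a proof but simply cites this as a known fact from~\cite{Ben79}, so you have supplied more detail than the paper does. There is nothing to compare against beyond noting that your sketch faithfully reconstructs the construction the citation points to.
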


A variant of the closest point problem that was studied in the fourth section of the article of Bentley~\cite{Ben79}, comes in handy when we study the rooted $xy-$MMST problem. 
More specifically, in this variant, each point in $S$ is associated with a numerical attribute value.
Given a query point $q$, the goal is to find the closest point to $q$ from the subset of points of $S$ for which their attribute value belongs to a specified range.
For the static version, i.e.{} when $S$ cannot be mutated, Bentley~\cite{Ben79} gave a static data structure that adds a multiplicative logarithmic factor in the query and preprocessing time of the static data structure for the closest point problem.
Concerning the semi-dynamic version of this problem, Bentley~\cite{Ben79} implicitly produced, i.e{} from his corresponding static data structure (section 4 of~\cite{Ben79}) and his results about the decomposable problems (section 3 of~\cite{Ben79}), a very useful semi-dynamic data structure.

\begin{fact}[Bentley~\cite{Ben79}]\label{fact:rangeClosest}
There exists a semi-dynamic data structure that supports the operations (i) insertion of a point, and (ii) the closest point to a query point from the subset of points in the data structure whose attribute value is within a given range.
Where, a closest point query takes $O(\log ^3 n)$ time (with $n$ denoting the number of points in the data structure) and inserting $n$ points in the data structure takes $O(n \log ^3 n)$ total time.
\end{fact}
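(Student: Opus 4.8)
The plan is to derive the claimed structure by combining two results of Bentley~\cite{Ben79}: a \emph{static} data structure for the range-restricted closest point problem, and the generic transformation that turns a static data structure for a \emph{decomposable searching problem} into a semi-dynamic one that supports insertions.

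First I would observe that the range-restricted closest point query is decomposable: if the stored point set $S$ is split as $S = S_1 \cup S_2$, the answer for $S$ is simply the closer of the two answers obtained for $S_1$ and for $S_2$, each restricted to the points whose attribute value lies in the query range. Next I would recall Bentley's static structure (Section~4 of~\cite{Ben79}): build a balanced binary search tree over the attribute values and attach to every node a static closest point structure --- a Voronoi diagram together with a planar point-location structure --- on the points stored in that node's subtree. A query then descends to the $O(\log n)$ canonical nodes whose subtrees partition the queried attribute range and performs one closest point query in each, for a total of $O(\log n)\cdot O(\log n)=O(\log^2 n)$ time, while the preprocessing costs $O(n\log n)$ per level of the tree and hence $O(n\log^2 n)$ in total. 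This is precisely the ``extra logarithmic factor'' over the plain closest point structure noted earlier in this section.

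Then I would apply the logarithmic method (the Bentley--Saxe transformation for decomposable searching problems, Section~3 of~\cite{Ben79}): maintain the current point set as a disjoint union of $O(\log n)$ blocks whose sizes are distinct powers of two, each block held in its own copy of the static structure above. A closest point query is answered by querying all blocks and returning the minimum, costing $O(\log n)\cdot O(\log^2 n)=O(\log^3 n)$; an insertion repeatedly merges two equal-sized blocks and rebuilds the resulting static structure from scratch. For the cost of the insertions, the standard amortized argument charges each inserted point for the $O(\log n)$ rebuilds it ever takes part in, and rebuilding a block of $m$ points costs $O(m\log^2 m)=O(m\log^2 n)$, i.e.\ $O(\log^2 n)$ per point; summing over all $n$ points yields the claimed $O(n\log^3 n)$ total time.

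The hard part is essentially the bookkeeping: one must check that bounding the \emph{total} (amortized) cost of the $n$ insertions suffices for the intended application, so that no de-amortization of the logarithmic method is required, and one must confirm that the static range-restricted structure genuinely attains $O(\log^2 n)$ query time together with $O(n\log^2 n)$ preprocessing --- both of which follow from Bentley's analysis in Sections~3 and~4 of~\cite{Ben79}.
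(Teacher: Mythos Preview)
Your proposal is correct and follows precisely the approach the paper itself indicates: the paper does not give a proof but states (just before the Fact) that the structure is ``implicitly produced'' by combining Bentley's static range-restricted closest point structure (Section~4 of~\cite{Ben79}) with his decomposable-problem transformation (Section~3 of~\cite{Ben79}). Your write-up simply fleshes out that one-line justification with the standard logarithmic-method bookkeeping, and the time bounds you derive match.
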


\section{The Rooted $y$-Monotone Minimum Spanning Tree (rooted $y-$MMST) Problem}\label{sec:y-uMMST}

In this section we study the construction of the rooted $y-$MMST of a rooted point set $P$.
We initially show that we can deal with $P_{y \leq 0}$ and $P_{y\geq 0}$ separately.
Then, we provide a characterization of the rooted $y-$MMST of rooted $y-$positive (or $y-$negative) point sets.
Using the previous two, we develop an algorithm that constructs the rooted $y-$MMST of $P$.
We also provide a lower bound for the time complexity of any algorithm that solves the rooted $y-$MMST problem as well as a simple recognition algorithm for rooted $y-$monotone graphs. 

We recall that we assume that the root $r$ of a rooted point set is located at the point $(0,0)$.

\begin{obs}\label{obs:unnesEdge}
Let $P$ be a rooted point set and $G = (P,E)$ be a rooted $y-$monotone spanning graph of $P$ and let $\overline{p_dp_u} \in E$ with $y(p_d) < 0 < y(p_u)$. 
Then, every path from the root $r$ to a point $p \in P\setminus\{r\}$ that contains $\overline{p_dp_u}$ is not $y-$monotone since it moves ``south'' to $p_d$ and then ``north'' to $p_u$, or vice versa.
\end{obs}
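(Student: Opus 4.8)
The plan is a short case analysis on where and in which direction the path crosses the edge $\overline{p_dp_u}$, using only that the root $r$ lies at $y$-level $0$, strictly between $y(p_d)$ and $y(p_u)$.

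First I would fix an arbitrary path $W = (r = w_0, w_1, \ldots, w_l = p)$ from $r$ to some $p\in P\setminus\{r\}$ that uses the edge $\overline{p_dp_u}$, and let $i$ be the index with $\{w_i, w_{i+1}\} = \{p_d, p_u\}$. Since $y(p_d) < 0$ and $y(p_u) > 0$ while $y(r) = 0$, neither endpoint of the edge can be the root, so $i \geq 1$; in particular $w_0 = r$ strictly precedes both $w_i$ and $w_{i+1}$ along $W$. This is the one bookkeeping point worth stating explicitly, and it is where the hypothesis $y(p_d) < 0 < y(p_u)$ is genuinely used; the rest is immediate.

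Then I would split according to the traversal direction of the edge. If $w_i = p_d$ and $w_{i+1} = p_u$, then $W$ cannot be $y$-increasing, because that would force $0 = y(w_0) \le y(w_i) = y(p_d) < 0$; and $W$ cannot be $y$-decreasing, because that would force $y(p_d) = y(w_i) \ge y(w_{i+1}) = y(p_u)$, contradicting $y(p_d) < 0 < y(p_u)$. The case $w_i = p_u$, $w_{i+1} = p_d$ is the mirror image: $y$-decreasing would force $0 = y(w_0) \ge y(w_i) = y(p_u) > 0$, and $y$-increasing would force $y(p_u) = y(w_i) \le y(w_{i+1}) = y(p_d) < 0$; both are impossible. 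Hence $W$ is neither $y$-increasing nor $y$-decreasing, so it is not $y$-monotone, which is exactly the claim. There is essentially no obstacle here; the only care needed is to record that $r$ occurs on $W$ strictly before the offending edge.
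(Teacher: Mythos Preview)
Your argument is correct and follows exactly the reasoning the paper indicates; the paper states this as a self-evident observation (the justification ``since it moves `south' to $p_d$ and then `north' to $p_u$, or vice versa'' is embedded in the statement itself), and your case analysis simply spells out that one-line remark rigorously.
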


\begin{corollary}\label{cor:unEdges}
Let $P$ be a rooted point set, $G^{\text{opt}} = (P,E)$ be the  rooted $y-$monotone minimum spanning graph of $P$ and
$p_d,p_u \in P$ such that $y(p_d) < 0 < y(p_u)$.
Then, $\overline{p_dp_u} \notin E$.
\end{corollary}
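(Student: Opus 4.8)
The plan is to argue by contradiction, with essentially all the work already done in Observation~\ref{obs:unnesEdge}. Suppose, for the sake of contradiction, that $\overline{p_dp_u} \in E$ with $y(p_d) < 0 < y(p_u)$, and consider the geometric graph $G' = (P, E \setminus \{\overline{p_dp_u}\})$ obtained by deleting this single edge. Since the point set is in general position, the points $p_d$ and $p_u$ are distinct, so the segment $\overline{p_dp_u}$ has strictly positive Euclidean length and hence $\text{cost}(G') < \text{cost}(G^{\text{opt}})$. Thus, to contradict the minimality of $G^{\text{opt}}$, it suffices to show that $G'$ is still a rooted $y$-monotone spanning graph of $P$.

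To do so, I would fix an arbitrary point $p \in P \setminus \{r\}$. Because $G^{\text{opt}}$ is a rooted $y$-monotone spanning graph, there is a $y$-monotone path $W$ from $r$ to $p$ in $G^{\text{opt}}$. By Observation~\ref{obs:unnesEdge}, $W$ cannot use the edge $\overline{p_dp_u}$ (traversed in either direction), since any path containing it moves ``south'' to $p_d$ and then ``north'' to $p_u$, or vice versa, and therefore fails to be $y$-monotone. Consequently every edge of $W$ belongs to $E \setminus \{\overline{p_dp_u}\}$, i.e.\ $W$ is a $y$-monotone path of $G'$ from $r$ to $p$. As $p$ was arbitrary, $G'$ connects $r$ to every other vertex by a $y$-monotone path, so $G'$ is a rooted $y$-monotone spanning graph of strictly smaller cost than $G^{\text{opt}}$, a contradiction; hence $\overline{p_dp_u} \notin E$.

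I do not anticipate a genuine obstacle here: the geometric content is entirely contained in Observation~\ref{obs:unnesEdge}, and what remains is the standard ``a redundant edge can be deleted without destroying the required connectivity, contradicting minimality'' argument. The only two points needing a word of care are (i) recording that the deleted segment has positive length, so that the cost strictly decreases, and (ii) reading ``contains $\overline{p_dp_u}$'' in Observation~\ref{obs:unnesEdge} as traversal in either direction, which is exactly how that observation is phrased.
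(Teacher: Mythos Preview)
Your proof is correct and is exactly the argument the paper has in mind: the paper states the result as an immediate corollary of Observation~\ref{obs:unnesEdge} without spelling out the details, and your contradiction argument (delete the edge, invoke the observation to see that no $y$-monotone path from $r$ uses it, hence the smaller graph is still rooted $y$-monotone) is precisely the natural way to fill in those details.
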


Corollary~\ref{cor:unEdges} implies that the root $r$ of a rooted point set $P$ splits the problem of finding the rooted $y-$monotone minimum spanning graph of $P$ into two independent problems.
Namely, producing the rooted $y-$monotone minimum spanning graph of (i) $P_{y\leq 0}$, and (ii) $P_{y\geq 0}$.
Hence, we obtain the following Lemma.

\begin{lemma}\label{lem:splitUpDown}
Let $P$ be a rooted point set and $G^{\text{opt}}$ be the rooted $y-$monotone minimum spanning graph of $P$.
Furthermore, let $G_{y \leq 0}^{\text{opt}}$ and $G_{y \geq 0}^{\text{opt}}$ be the rooted $y-$monotone minimum spanning graphs of $P_{y \leq 0}$ and $P_{y \geq 0}$, respectively.
Then, $G^{\text{opt}}$ is the union of $G_{y \leq 0}^{\text{opt}}$ and $G_{y \geq 0}^{\text{opt}}$.
\end{lemma}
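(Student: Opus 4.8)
The plan is to establish the claimed equality by a two-sided cost comparison. First I would show that restricting $G^{\text{opt}}$ to each side of the $x$-axis produces rooted $y$-monotone spanning graphs of $P_{y\le 0}$ and of $P_{y\ge 0}$; then, conversely, I would glue $G^{\text{opt}}_{y\le 0}$ and $G^{\text{opt}}_{y\ge 0}$ at the root to get a rooted $y$-monotone spanning graph of $P$. Comparing the costs obtained from the two directions squeezes everything into equality.

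For the first direction I would write $G^{\text{opt}}=(P,E)$, let $E_{\le}$ (resp.{} $E_{\ge}$) consist of the edges of $E$ whose two endpoints both lie in $P_{y\le 0}$ (resp.{} $P_{y\ge 0}$), and set $H_{\le}:=(P_{y\le 0},E_{\le})$ and $H_{\ge}:=(P_{y\ge 0},E_{\ge})$. By Corollary~\ref{cor:unEdges} no edge of $E$ joins a point strictly below the $x$-axis to one strictly above it, so $E=E_{\le}\cup E_{\ge}$ and hence $G^{\text{opt}}=H_{\le}\cup H_{\ge}$. To see that $H_{\le}$ is a rooted $y$-monotone spanning graph of $P_{y\le 0}$, take any $p\in P_{y\le 0}\setminus\{r\}$ and a $y$-monotone $r$--$p$ path $W$ in $G^{\text{opt}}$; since $y(r)=0\ge y(p)$, the path $W$ must be $y$-decreasing, so every vertex $w$ of $W$ satisfies $y(w)\le y(r)=0$, i.e.{} $W$ stays inside $P_{y\le 0}$ and is a $y$-monotone path of $H_{\le}$. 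Therefore $\text{cost}(H_{\le})\ge\text{cost}(G^{\text{opt}}_{y\le 0})$, and symmetrically $\text{cost}(H_{\ge})\ge\text{cost}(G^{\text{opt}}_{y\ge 0})$.

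For the second direction I would note that $G^{\text{opt}}_{y\le 0}\cup G^{\text{opt}}_{y\ge 0}$ spans $P=P_{y\le 0}\cup P_{y\ge 0}$ and connects $r$ to each point of $P_{y\le 0}$ (resp.{} $P_{y\ge 0}$) by a $y$-monotone path already contained in $G^{\text{opt}}_{y\le 0}$ (resp.{} $G^{\text{opt}}_{y\ge 0}$), so it is a rooted $y$-monotone spanning graph of $P$ and $\text{cost}(G^{\text{opt}})\le\text{cost}(G^{\text{opt}}_{y\le 0}\cup G^{\text{opt}}_{y\ge 0})$. Combining this with $\text{cost}(G^{\text{opt}})=\text{cost}(H_{\le}\cup H_{\ge})$ and the two inequalities of the previous paragraph (and, modulo the shared-edge bookkeeping discussed next, the fact that the cost of each union equals the sum of the costs of its two pieces) forces all of these quantities to be equal. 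In particular $\text{cost}(H_{\le})=\text{cost}(G^{\text{opt}}_{y\le 0})$ and $\text{cost}(H_{\ge})=\text{cost}(G^{\text{opt}}_{y\ge 0})$, so $H_{\le}$ and $H_{\ge}$ are themselves minimum and hence equal $G^{\text{opt}}_{y\le 0}$ and $G^{\text{opt}}_{y\ge 0}$, which gives $G^{\text{opt}}=H_{\le}\cup H_{\ge}=G^{\text{opt}}_{y\le 0}\cup G^{\text{opt}}_{y\ge 0}$.

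The step I expect to need the most care is not the geometry but the bookkeeping where the two halves overlap: $P_{y\le 0}\cap P_{y\ge 0}$ contains $r$, and by general position at most one further point $q$ of $P$ can lie on the $x$-axis. Here I would argue that in \emph{every} rooted $y$-monotone spanning graph of $P$, of $P_{y\le 0}$, or of $P_{y\ge 0}$, such a $q$ is necessarily joined to $r$ by the single edge $\overline{rq}$ (any $y$-monotone $r$--$q$ path has all of its vertices at $y$-coordinate $0$, and $r$ and $q$ are the only such points), so $\overline{rq}$ appears in $G^{\text{opt}}$, in $G^{\text{opt}}_{y\le 0}$ and in $G^{\text{opt}}_{y\ge 0}$ simultaneously; consequently the correction that subtracts the length of the shared edge is the same on both sides of the squeeze and cancels, while if no such $q$ exists the two halves are edge-disjoint and no correction is needed. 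Everything else is routine; the final identifications $H_{\le}=G^{\text{opt}}_{y\le 0}$ and $H_{\ge}=G^{\text{opt}}_{y\ge 0}$ use that the rooted $y$-monotone minimum spanning graph is uniquely determined.
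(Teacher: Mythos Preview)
Your argument is correct and is exactly the approach the paper has in mind: the paper does not spell out a proof of this lemma at all but simply states that it follows from Corollary~\ref{cor:unEdges}, and your two-sided cost squeeze is precisely the standard way to make that implication rigorous. Your careful handling of the possible extra point $q$ on the $x$-axis and of the shared edge $\overline{rq}$ is more detailed than anything the paper provides; the only point worth flagging is that your final step invokes uniqueness of the rooted $y$-monotone minimum spanning graph, which the paper also tacitly assumes (it writes ``the'' throughout) but only establishes later via Lemma~\ref{lem:yMonotoneChar} and Corollary~\ref{cor:parentCharact}---so this is a forward reference you share with the paper rather than a gap in your argument.
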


We now study the construction of the rooted $y-$monotone minimum spanning graph of a rooted $y-$positive (or $y-$negative) point set $P$ with root $r$.
The main idea is to connect each $p \in P \setminus\{r\}$ to its closest point with absolute $y$ coordinate less than the absolute $y$ coordinate of $p$.
However, since we have assumed that there might be (at most) two points with the same $y$ coordinate the analysis becomes a little more complicated.
We define $S[P,y]$ to be the sequence  of points of $P$ ordered by the following rule: \emph{``The points of $S[P,y]$ are ordered w.r.t.{} their absolute $y$ coordinates and, if two points have the same $y$ coordinate, then they are ordered w.r.t.{} their distance from the preceding points in $S[P,y]$.''}.
More formally, $S[P,y] = (r = p_0,p_1,p_2, \ldots, p_n)$ s.t.{} $|y(p_0)| \leq |y(p_1)| \leq |y(p_2)| \leq \ldots \leq |y(p_n)|$ and $|y(p_i)| = |y(p_{i+1})|$ implies that $d(p_i, \{p_0, p_1, \ldots, p_{i-1}\}) \leq d(p_{i+1}, \{p_0, p_1, \ldots, p_{i-1}\})$ and $P = \{p_0,p_1,p_2, \ldots, p_n\}$.
We now give a characterization of the rooted $y-$monotone minimum spanning graph of $P$. 

\begin{lemma}\label{lem:yMonotoneChar}
Let $G=(P,E)$ be a rooted geometric graph where $P$ is a rooted $y-$positive (or $y-$negative) point set with $S[P,y]$ $= (r$ $= p_0,p_1,p_2$ , \ldots, $p_n)$.  
Then, $G$ is the rooted $y-$monotone minimum spanning graph of $P$ if and only if (i) $p_n$ is connected in $G$ only with its closest point (or nearest neighbor) from $\{p_0$, $p_1$, \ldots, $p_{n-1}\}$, i.e.{} the point $p_j$ such that $d(p_n,p_j) = d(p_n$, $\{p_0$, $p_1$, \ldots, $p_{n-1}\})$, and (ii) $G\setminus\{p_n\}$ is the rooted $y-$monotone minimum spanning graph of $P\setminus\{p_n\}$.
\end{lemma}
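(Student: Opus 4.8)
The plan is to argue by induction on $|P|$, the case $|P|=1$ being immediate. By applying the reflection $y\mapsto -y$ we may assume $P$ is $y$-positive; since $S[P,y]$ is defined through \emph{absolute} $y$-coordinates it is unchanged by this reflection. The first thing I would record is a reformulation of the monotonicity requirement: as $r=p_0$ has the smallest $y$-coordinate in $P$, any $y$-monotone path in a spanning graph of $P$ from $r$ to a point with positive $y$-coordinate must in fact be $y$-increasing, so ``rooted $y$-monotone spanning graph of $P$'' means the same as ``spanning graph in which every vertex is reachable from $r$ by a $y$-increasing path'' (the sole low-level exception, a second point of $P$ lying on the $x$-axis, is permitted by general position and is dispatched directly).

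Next I would isolate the structural fact that powers both directions: in every rooted $y$-monotone spanning graph $G$ of $P$, the last point $p_n$ of $S[P,y]$ is incident in $G$ to at least one point of $\{p_0,\dots,p_{n-1}\}$. Indeed, take a $y$-increasing path from $r$ to $p_n$ and consider its last edge $\overline{ap_n}$: then $|y(a)|\le|y(p_n)|$, and because $p_n$ is the \emph{last} element of $S[P,y]$, the tie-breaking rule (among equal absolute $y$-coordinates the point farther from the preceding points comes later) forces $a$ to precede $p_n$ even when $|y(a)|=|y(p_n)|$; hence $a\in\{p_0,\dots,p_{n-1}\}$. I would also note that, $p_n$ having the maximum absolute $y$-coordinate, a $y$-increasing path can never be prolonged beyond $p_n$ except possibly to one other point of the same height; this is the observation that lets me pass between $P$ and $P\setminus\{p_n\}$.

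For the implication ``(i) and (ii) $\Rightarrow$ $G$ optimal'', I would first verify that $G$ is a rooted $y$-monotone spanning graph: by (ii) the subgraph $G\setminus\{p_n\}\subseteq G$ reaches every $p_i$, $i<n$, from $r$ by a $y$-increasing path, and appending the edge $\overline{p_jp_n}$ of (i) to the path reaching $p_n$'s nearest neighbour $p_j$ produces a $y$-increasing path to $p_n$ (valid since $|y(p_j)|\le|y(p_n)|$). For optimality I would take an arbitrary rooted $y$-monotone spanning graph $H$ of $P$, use the structural fact to locate in $H$ an edge at $p_n$ of length at least $d(p_n,\{p_0,\dots,p_{n-1}\})=d(p_n,p_j)$, argue that $H$ with $p_n$ removed is a rooted $y$-monotone spanning graph of $P\setminus\{p_n\}$ — so its cost is at least $\text{cost}(G\setminus\{p_n\})$ by (ii) and the induction hypothesis — and add the two estimates to obtain $\text{cost}(H)\ge\text{cost}(G)$. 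For the reverse implication ``$G$ optimal $\Rightarrow$ (i) and (ii)'' I would show that $p_n$ is a leaf of $G$ (a further edge at $p_n$ could be deleted, or rerouted, without spoiling $y$-increasing reachability of any vertex, contradicting optimality) and that its unique neighbour must be its nearest neighbour in $\{p_0,\dots,p_{n-1}\}$ (swapping that edge for the edge to the nearest neighbour otherwise strictly lowers the cost), which is (i); then $G\setminus\{p_n\}$ is a rooted $y$-monotone spanning graph of $P\setminus\{p_n\}$ and must be optimal there, since a cheaper one augmented by $\overline{p_jp_n}$ would beat $G$, which is (ii). By induction these pin $G$ down uniquely (up to the usual non-degeneracy of distances under general position), justifying the definite article.

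The step I expect to be the genuine obstacle is the case where two points attain the maximal absolute $y$-coordinate. There a $y$-increasing path to the second-to-last point $p_{n-1}$ may legitimately pass through $p_n$, so deleting $p_n$ need not at once leave a spanning graph of $P\setminus\{p_n\}$, and similarly $p_n$ may appear to require two edges; one then has to reroute $p_{n-1}$ through a point of smaller absolute $y$-coordinate. The heart of the matter is that the tie-breaking rule built into $S[P,y]$ — placing the point \emph{farther} from the earlier points last — is precisely what makes this rerouting cost no more than before, which is exactly what is needed to keep both the optimality estimate and the induction running, and I would handle this case by a short separate exchange argument inserted into each direction.
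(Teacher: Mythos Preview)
Your argument is correct and follows essentially the same line as the paper's. The paper is terser: it dispatches the $(\Rightarrow)$ direction ``by contradiction'' without detail, and for $(\Leftarrow)$ it compares $G$ only to the actual optimum $G^{\mathrm{opt}}$ rather than to an arbitrary rooted $y$-monotone spanning graph $H$. That shortcut lets the paper assert without further comment that $G^{\mathrm{opt}}\setminus\{p_n\}$ is rooted $y$-monotone, since the already-established $(\Rightarrow)$ direction forces $p_n$ to be a leaf in $G^{\mathrm{opt}}$; by contrast, because you quantify over all $H$, you genuinely need the exchange argument in the tie case, and your observation that the ordering rule in $S[P,y]$ gives $d(p_{n-1},\{p_0,\dots,p_{n-2}\})\le d(p_n,\{p_0,\dots,p_{n-1}\})\le d(p_n,p_{n-1})$ is exactly what makes that reroute free. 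So your version is a little more work but also a little more self-contained.
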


\begin{proof}
The ($\Rightarrow$) direction can be easily proved by contradiction.
We now prove the ($\Leftarrow$) direction. 
 $G$ is a rooted $y-$monotone graph that spans $P$ since $G\setminus\{p_n\}$ is a rooted $y-$monotone graph that spans $P\setminus\{p_n\}$ and $p_n$ is connected with another point in the graph. 
Let $G^{\text{opt}}$ be the rooted $y-$monotone minimum spanning graph of $P$.
Then, $p_n$ is connected in $G^{\text{opt}}$ with some other point in $P$, and thus $G^{\text{opt}}$ contains an edge of cost at least $d(p_n, \{p_0, p_1, \ldots, p_{n-1}\})$.
Furthermore, the graph $G^{\text{opt}} \setminus\{p_n\}$ is also rooted $y-$monotone, hence its cost is at least the  cost of the rooted $y-$monotone minimum spanning graph of $P\setminus\{p_n\}$.
Thus, $G^{\text{opt}}$ has at least the same cost as $G$.  
\end{proof}

Lemma~\ref{lem:splitUpDown} and Lemma~\ref{lem:yMonotoneChar} lead to the next Corollary.

\begin{corollary}\label{cor:y-tree} 
The rooted $y-$monotone minimum spanning graph of a rooted point set $P$ is a geometric tree.
\end{corollary}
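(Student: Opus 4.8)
The plan is to combine the two structural results just established. First I would use Lemma~\ref{lem:splitUpDown} to reduce the statement to the $y$-positive (or $y$-negative) case: since the rooted $y$-monotone minimum spanning graph $G^{\text{opt}}$ of $P$ is the union of the rooted $y$-monotone minimum spanning graphs $G^{\text{opt}}_{y\leq 0}$ and $G^{\text{opt}}_{y\geq 0}$ of $P_{y\leq 0}$ and $P_{y\geq 0}$, and since by Corollary~\ref{cor:unEdges} no edge joins a point of negative $y$ coordinate to a point of positive $y$ coordinate, these two graphs share no vertex other than the root $r$ and no edge at all. Thus it suffices to prove that the rooted $y$-monotone minimum spanning graph of a rooted $y$-positive (or $y$-negative) point set is a tree, and then argue that the union of the two halves along $r$ is again a tree.

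For the $y$-positive (or $y$-negative) case, with $S[P,y]=(r=p_0,p_1,\ldots,p_n)$, I would induct on $n$ using Lemma~\ref{lem:yMonotoneChar}. The base case $n=0$ is the trivial one-vertex graph. For the inductive step, part~(ii) of Lemma~\ref{lem:yMonotoneChar} gives that $G^{\text{opt}}\setminus\{p_n\}$ is the rooted $y$-monotone minimum spanning graph of $P\setminus\{p_n\}$, which by the inductive hypothesis is a tree on $n$ vertices with $n-1$ edges; part~(i) gives that $p_n$ is incident to exactly one edge of $G^{\text{opt}}$, joining it to its nearest neighbour among $p_0,\ldots,p_{n-1}$. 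Hence $G^{\text{opt}}$ has $n+1$ vertices and exactly $n$ edges, and it is connected because, being a rooted $y$-monotone spanning graph, it contains a path from $r$ to every other point. A connected graph on $n+1$ vertices with $n$ edges is a tree.

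Finally I would reassemble the two halves. Writing $m_- = |P_{y\leq 0}|$ and $m_+ = |P_{y\geq 0}|$, we have $m_- + m_+ = |P| + 1$ since $r$ belongs to both sets. By the previous paragraph $G^{\text{opt}}_{y\leq 0}$ and $G^{\text{opt}}_{y\geq 0}$ are trees with $m_- - 1$ and $m_+ - 1$ edges, and they meet only in the vertex $r$; therefore $G^{\text{opt}} = G^{\text{opt}}_{y\leq 0} \cup G^{\text{opt}}_{y\geq 0}$ has $(m_- - 1) + (m_+ - 1) = |P| - 1$ edges, is connected through $r$, and so is a geometric tree.

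I do not anticipate any real difficulty; the only points requiring a little care are the edge count in the induction and the observation, supplied by Corollary~\ref{cor:unEdges}, that the two halves share nothing but the root, so that forming their union neither introduces a cycle nor breaks connectivity. Everything else is immediate from Lemmas~\ref{lem:splitUpDown} and~\ref{lem:yMonotoneChar}.
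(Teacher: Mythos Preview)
Your proposal is correct and is exactly the argument the paper has in mind: the paper does not spell out a proof but simply notes that the corollary follows from Lemma~\ref{lem:splitUpDown} and Lemma~\ref{lem:yMonotoneChar}, and you have filled in precisely those details via induction on $n$ using Lemma~\ref{lem:yMonotoneChar} for each half and then gluing the two halves at $r$ via Lemma~\ref{lem:splitUpDown}. The only (very minor) point to watch is the implicit assumption that $r$ is the unique point with $y$-coordinate $0$, so that $P_{y\le 0}\cap P_{y\ge 0}=\{r\}$; the paper itself tacitly relies on this when it says $r$ ``splits'' the problem.
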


Let $P$ be a rooted $y-$positive (or $y-$negative) point set and $S[P,y] =$ $(r =$ $p_0$, $p_1$, \ldots, $p_n)$.
We call the closest point to $p_i$ from $\{p_0, p_1, \ldots, p_{i-1}\}$ the \emph{parent} of $p_i$ and we denote it as \emph{par$(p_i)$}.
More formally, par$(p_i) = p_j$ if and only if $p_j \in \{p_0, p_1, \ldots, p_{i-1}\}$ and $d(p_i, p_j) = d(p_i, \{p_0, p_1, \ldots, p_{i-1}\})$.
Then, Lemma~\ref{lem:yMonotoneChar} implies the following Corollary.

\begin{corollary}\label{cor:parentCharact}
The edges of the rooted $y-$MMST of $P$ are exactly the line segments $\overline{\text{par}(p_i)p_i},$ for  $i =1, 2, \ldots, n$.
\end{corollary}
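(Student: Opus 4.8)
The plan is to prove the statement by induction on $n$, the number of points of $P$ other than the root, using Lemma~\ref{lem:yMonotoneChar} as the inductive engine. The base case $n=0$ is vacuous: $P=\{r\}$ has no edges and there are no indices $i$ to consider. For the inductive step, I assume the claim holds for every rooted $y$-positive (or $y$-negative) point set with $n-1$ non-root points, and I take $P$ with $S[P,y]=(r=p_0,p_1,\ldots,p_n)$ and $n\geq 1$.

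First I would record two ``heredity'' facts about how $S[\cdot,y]$ and $\mathrm{par}(\cdot)$ behave when the last point $p_n$ is deleted: (a) $S[P\setminus\{p_n\},y]=(p_0,p_1,\ldots,p_{n-1})$, and (b) for every $i\leq n-1$, the parent of $p_i$ computed within $P\setminus\{p_n\}$ equals $\mathrm{par}(p_i)$. Both are immediate from the definitions, since the ordering of $p_0,\ldots,p_{n-1}$ by absolute $y$-coordinate is unchanged, any tie $|y(p_i)|=|y(p_{i+1})|$ with $i+1\leq n-1$ is broken using distances to $\{p_0,\ldots,p_{i-1}\}$ (a set not containing $p_n$), and $\mathrm{par}(p_i)$ depends only on $\{p_0,\ldots,p_{i-1}\}$.

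Next I would apply Lemma~\ref{lem:yMonotoneChar} to $G$, the rooted $y$-MMST of $P$ (which is a tree by Corollary~\ref{cor:y-tree}). Part (i) says $p_n$ is joined in $G$ exactly to the point of $\{p_0,\ldots,p_{n-1}\}$ nearest to it, i.e.\ to $\mathrm{par}(p_n)$; hence $\overline{\mathrm{par}(p_n)p_n}$ is an edge of $G$, and it is the only edge incident to $p_n$. Part (ii) says $G\setminus\{p_n\}$ is the rooted $y$-MMST of $P\setminus\{p_n\}$, so by the heredity facts and the induction hypothesis its edge set is exactly $\{\overline{\mathrm{par}(p_i)p_i}:i=1,\ldots,n-1\}$. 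Re-attaching the unique edge at $p_n$ shows the edge set of $G$ is exactly $\{\overline{\mathrm{par}(p_i)p_i}:i=1,\ldots,n\}$, completing the induction. I do not expect any real obstacle; the only point needing care is verifying the heredity facts (a) and (b), and these follow directly from the fact that every tie-break and every parent choice in the definitions refers only to points appearing earlier in $S[P,y]$, so deleting the last point leaves them untouched.
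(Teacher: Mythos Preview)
Your proof is correct and follows exactly the route the paper intends: the paper states that Lemma~\ref{lem:yMonotoneChar} implies the corollary without further argument, and your induction on $n$ is precisely the natural way to unfold that recursive characterization. The heredity facts (a) and (b) you isolate are the only points requiring a moment's thought, and your justification for them is sound.
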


Corollary~\ref{cor:parentCharact} implies a $O(|P|^2)$ time algorithm for producing the rooted $y-$MMST of $P$.
However, using the semi-dynamic data structure for closest point queries given by Bentley~\cite{Ben79}, the time complexity of our rooted $y-$MMST algorithm becomes $O(|P|\log^2|P|)$.
Our rooted $y-$MMST algorithm is described in Algorithm~\ref{alg:GIyMonotone}.

\begin{algorithm}[!hbt]
\caption{rooted $y-$MMST}\label{alg:GIyMonotone}
\textbf{Input:} A rooted $y-$positive (or $y-$negative) point set $P$ with root $r$. 

\textbf{Output:} The rooted $y$-Monotone Minimum Spanning Tree of $P$.
\begin{algorithmic}[1]
\State $T \gets$ the geometric graph with $P$ as its vertex set and $\emptyset$ as its edge set.  
\State Sort the points of $P$ w.r.t.{} their absolute $y$ coordinates, constructing the sequence $S$.
If some points tie, use an arbitrary order (their relative order may be changed later on, during the execution of the algorithm, since $S$ is used in order to calculate the $S[P,y]$, i.e.{} at the end of the algorithm $S$ equals to $S[P,y]$). More formally, $S$ = $(r$ = $S[0]$, $S[1]$, \ldots, $S[n])$ with $|y(S[i])| \leq |y(S[i+1])|$. 
\State ProximityDS is an (initially empty) semi-dynamic data structure for closest point queries. 
\State Insert $r$ in ProximityDS.
\State $i \gets 1$ \Comment{\parbox[t]{0.75\textwidth}{\emph{$i$ indicates the index of the next point we need to process, i.e.{} find its parent and then insert it in the ProximityDS}}} 
\While{$i \leq n$}
  \State par$(S[i]) \gets $ closest point to $S[i]$ from ProximityDS, obtained by performing a closest point query in ProximityDS.
  \newline 
  \Comment{\emph{Check if $S[i]$ has the correct order in $S$, i.e.{} its order in $S$ is the same as its order in $S[P,y]$. This might not be true if $y(S[i])$ is equal to $y(S[i+1])$}}
  \If{$y(S[i]) \neq y(S[i+1])$} \Comment{\emph{$S[i]$ has the correct order in $S$}} 
  \State insert $S[i]$ in ProximityDS and insert the edge $\overline{\text{par}(S[i])S[i]}$ in $T$.
  \State $i \gets i + 1$
  \Else \Comment{\parbox[t]{0.75\textwidth}{\emph{$y(S[i]) = y(S[i+1])$, hence we do not know if the points with indices $i$ and $i+1$ in $S$ have the correct order in $S$}}}
  \State par$(S[i+1]) \gets $ closest point to $S[i+1]$ from ProximityDS, obtained by performing a closest point query in ProximityDS.\newline
  \Comment{\emph{Check if $S[i]$ and $S[i+1]$ have the correct order in $S$ or have to be swapped.}}
  \If{$d(\text{par}(S[i]), S[i]) > d(\text{par}(S[i+1]), S[i+1])$} 
  \State Swap the points with indices $i$ and $i+1$ in $S$ 
    \EndIf
  \newline  \Comment{\emph{Check if par($S[i+1]$) is $S[i]$}}     
      \If{$d(S[i], S[i+1]) < d(\text{par}(S[i+1]), S[i+1]) $}         \State par$(S[i+1]) \gets S[i]$. 
      \EndIf 
      \State insert $S[i]$ and $S[i+1]$ in ProximityDS
    \State Insert the edges $\overline{\text{par}(S[i])S[i]}$ and $\overline{\text{par}(S[i+1])S[i+1]}$ in $T$. \newline 
    \Comment{\emph{Both $S[i]$ and $S[i+1]$ were processed}} 
  \State $i \gets i + 2$  
  \EndIf
\EndWhile
\State \textbf{return } $T$.
\end{algorithmic}
\end{algorithm}

\begin{theorem}\label{thm:yRootedMonotoneMain}
The rooted $y-$MMST  of a rooted point set $P$ can be computed in $O(|P|\log^2 |P|)$ time.
\end{theorem}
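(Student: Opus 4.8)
The plan is to combine the structural results already established—namely Lemma~\ref{lem:splitUpDown}, which lets us treat $P_{y\le 0}$ and $P_{y\ge 0}$ independently, and Corollary~\ref{cor:parentCharact}, which says the rooted $y$-MMST of a $y$-positive (or $y$-negative) set is exactly the set of edges $\overline{\text{par}(p_i)p_i}$—with the running-time analysis of Algorithm~\ref{alg:GIyMonotone} and Fact~\ref{fact:BentleyStructure}. So first I would argue correctness: Algorithm~\ref{alg:GIyMonotone} is invoked on $P_{y\le 0}$ and on $P_{y\ge 0}$ separately, and the two output trees are unioned; by Lemma~\ref{lem:splitUpDown} this union is the rooted $y$-MMST of $P$, provided each call correctly returns the rooted $y$-MMST of its (single-signed) input.

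Next I would verify that a single call of Algorithm~\ref{alg:GIyMonotone} indeed produces the edge set described in Corollary~\ref{cor:parentCharact}, i.e.\ that it correctly computes $S[P,y]$ and, for each $p_i$, the parent $\text{par}(p_i)=$ the nearest neighbour of $p_i$ among $\{p_0,\dots,p_{i-1}\}$. The only subtlety is the tie-breaking rule in the definition of $S[P,y]$: when $|y(S[i])| = |y(S[i+1])|$ the relative order of these two points must be chosen so that the one closer to the already-processed prefix comes first, and moreover that first point may then serve as the parent of the second. I would walk through the \textbf{else} branch of the while-loop and check exactly this: it queries ProximityDS for both candidates' nearest neighbours in the current prefix (which at that moment is $\{p_0,\dots,p_{i-1}\}$, correctly excluding both tied points, since general position guarantees at most two points share a $y$-coordinate), swaps them if needed so the closer one is processed first, then checks whether the distance between the two tied points beats $S[i+1]$'s prefix-distance and if so reassigns $\text{par}(S[i+1]) \gets S[i]$; finally it inserts both and adds both edges. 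After this the prefix of $S$ equals the corresponding prefix of $S[P,y]$, so by induction the invariant is maintained and the returned edge set matches Corollary~\ref{cor:parentCharact}.

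For the running time, sorting $P$ by absolute $y$-coordinate costs $O(|P|\log|P|)$. The while-loop performs $O(|P|)$ iterations in total, each doing a constant number of ProximityDS operations: at most two closest-point queries and at most two insertions (the queries in the tie case, and one or two insertions). By Fact~\ref{fact:BentleyStructure}, each closest-point query is $O(\log^2 |P|)$ and the cumulative cost of all $O(|P|)$ insertions is $O(|P|\log^2|P|)$. Hence the queries contribute $O(|P|\log^2|P|)$ and the insertions contribute $O(|P|\log^2|P|)$, and running the algorithm on the two halves and taking the union is an additional $O(|P|)$. The total is $O(|P|\log^2|P|)$, as claimed.

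I expect the main obstacle to be the bookkeeping in the tie-breaking argument rather than anything conceptually deep: one has to be careful that at the moment of the two queries in the \textbf{else} branch, ProximityDS contains precisely $\{p_0,\dots,p_{i-1}\}$ and neither of the tied points, so that both reported distances are genuinely $d(\cdot,\{p_0,\dots,p_{i-1}\})$; that the subsequent swap-and-reassign correctly realizes the ordering rule defining $S[P,y]$ (including the case where $S[i]$ becomes the parent of $S[i+1]$, which is consistent with $S[i]$ preceding $S[i+1]$ in $S[P,y]$); and that general position indeed precludes three points with the same $y$-coordinate, so the two-at-a-time handling suffices. Once these points are nailed down, correctness follows from Corollary~\ref{cor:parentCharact} and Lemma~\ref{lem:splitUpDown}, and the time bound is a direct application of Fact~\ref{fact:BentleyStructure}.
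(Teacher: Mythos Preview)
Your proposal is correct and follows essentially the same approach as the paper: split $P$ into $P_{y\le 0}$ and $P_{y\ge 0}$ via Lemma~\ref{lem:splitUpDown}, run Algorithm~\ref{alg:GIyMonotone} on each half, invoke Corollary~\ref{cor:parentCharact} for correctness, and use Fact~\ref{fact:BentleyStructure} to bound the $O(|P|)$ insertions and queries by $O(|P|\log^2|P|)$. You supply more detail than the paper does about why the tie-breaking branch of Algorithm~\ref{alg:GIyMonotone} correctly realizes the ordering $S[P,y]$, but this is just an elaboration of what the paper leaves implicit.
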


\begin{proof}
We first construct $P_{y\leq 0}$ and $P_{y\geq 0}$.
Then, we apply our rooted $y-$MMST algorithm  
(Algorithm~\ref{alg:GIyMonotone})
 on $P_{y\leq 0}$ and $P_{y\geq 0}$ constructing $T_{y \leq 0}$ and $T_{y \geq 0}$, respectively.
By Corollary~\ref{cor:parentCharact}, $T_{y \leq 0}$ and $T_{y \geq 0}$ are the rooted $y-$MMSTs of $P_{y\leq 0}$ and $P_{y\geq 0}$, respectively.
Using Fact~\ref{fact:BentleyStructure} and since $O(|P|)$ insertions and $O(|P|)$ closest point queries are performed, computing $T_{y \leq 0}$ and $T_{y \geq 0}$ takes $O(|P|\log^2 |P|)$ time.
By Lemma~\ref{lem:splitUpDown}, $T_{y \leq 0} \cup T_{y \geq 0}$ is the rooted $y-$MMST of $P$. 
\end{proof}

In the next Theorem, we give a lower bound for the time complexity of any algorithm which given a rooted point set $P$ produces the rooted $y-$MMST of $P$. 

\begin{theorem}\label{thm:lowerBoundyMonotone}
Any algorithm which given a rooted point set $P$, produces the rooted $y-$MMST of $P$ requires $\Omega (|P|\log |P|)$ time.
\end{theorem}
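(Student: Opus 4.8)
The plan is to establish the bound by a reduction from \emph{Sorting}: I will show that sorting $n$ distinct real numbers can be carried out by building a suitable rooted point set, computing its rooted $y$-MMST, and reading off a path. Since sorting $n$ numbers requires $\Omega(n\log n)$ time in the algebraic decision tree model, the same bound will follow for the rooted $y$-MMST problem.

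Given distinct reals $a_1,\dots,a_n$, I would first, in $O(n)$ time, translate them so that all are positive (replace $a_i$ by $a_i-\min_j a_j+1$, which does not affect their order), then set $p_i=(a_i,a_i^2)$ and take the root to be $r=(0,0)$, so that $P=\{r,p_1,\dots,p_n\}$ lies on the parabola $y=x^2$. Thus $P$ is $y$-positive and in general position, because a line meets a parabola in at most two points. As the $a_i$ are positive and distinct, ordering $P$ by absolute $y$-coordinate coincides with ordering it by $a$-value, so, writing $\sigma$ for the permutation with $a_{\sigma(1)}<\dots<a_{\sigma(n)}$, we get $S[P,y]=(r,p_{\sigma(1)},\dots,p_{\sigma(n)})$ and no ties occur.

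The crux is to verify that $\mathrm{par}(p_{\sigma(k)})=p_{\sigma(k-1)}$ for every $k$ (with $p_{\sigma(0)}:=r$); granting this, Corollary~\ref{cor:parentCharact} gives that the rooted $y$-MMST of $P$ is precisely the path $r,p_{\sigma(1)},\dots,p_{\sigma(n)}$. To prove it, write $c=a_{\sigma(k)}$ and consider, for a parabola point $(a,a^2)$ with $0\le a<c$, the squared distance to $p_{\sigma(k)}$,
\[
g(a)=(c-a)^2+(c^2-a^2)^2=(c-a)^2\bigl(1+(c+a)^2\bigr),
\]
whose derivative is $g'(a)=-2(c-a)\bigl(1+2ca+2a^2\bigr)$, which is strictly negative on $[0,c)$. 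Hence $g$ is strictly decreasing there, so among the values $0=a_{\sigma(0)}<a_{\sigma(1)}<\dots<a_{\sigma(k-1)}<c$ the one closest to $p_{\sigma(k)}$ is $a_{\sigma(k-1)}$, which is the claim. I expect this elementary derivative computation to be the only real content of the argument; everything around it is bookkeeping, and I would write it out in full while keeping the rest terse.

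Finally, traversing the resulting path from $r$ outputs $p_{\sigma(1)},\dots,p_{\sigma(n)}$, i.e.\ the $a_i$ in increasing order, in $O(n)$ time. Thus an algorithm that computes the rooted $y$-MMST in time $T(n)$ would sort $n$ distinct reals in time $O(n)+T(n)$, forcing $T(n)=\Omega(n\log n)$. Since the instances produced are already rooted $y$-positive, the lower bound holds even restricted to that class, and hence for the rooted $y$-MMST problem in general.
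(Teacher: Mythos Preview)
Your proposal is correct and follows essentially the same approach as the paper: both reduce from sorting via the parabola map $a\mapsto(a,a^2)$ with root at the origin, so that the rooted $y$-MMST becomes the path visiting the points in sorted order. Your version is more careful than the paper's, which simply asserts the path structure without justification; your derivative computation showing $g(a)=(c-a)^2\bigl(1+(c+a)^2\bigr)$ is strictly decreasing on $[0,c)$ is exactly the missing verification, and your explicit translation to positive reals and appeal to the algebraic decision tree model are also improvements in precision.
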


\begin{proof}
We use the reduction from sorting that was given by Shamos~\cite{Sha78}.
Let $(a_1$, $a_2$, \dots, $a_n)$ be a sequence of nonnegative integers. 
We reduce this sequence to the rooted point set $P = \{r= (0,0)$, $(a_1, a_1^2)$, $(a_2, a_2^2)$, \ldots, $(a_n, a_n^2)\}$.
Then, the rooted $y-$MMST of $P$ contains exactly the edges $\overline{rp_1}$, $\overline{p_1p_2}$, \ldots ,$\overline{p_{n-1}p_n}$ s.t.{} $a_i' = x(p_i), i = 1,2, \ldots, n$, where $(a_1', a_2', \ldots, a_n')$ is the sorted permutation of $(a_1, a_2, \dots, a_n)$.
The lower bound follows since sorting $n$ numbers requires $\Omega (n\log n)$ time.
\end{proof}

We note that using the same reduction, i.e.{} the reduction from sorting that was given by Shamos~\cite{Sha78}, the same lower bound can be easily obtained for the rooted UMMST and (rooted) Monotone Minimum Spanning Graph Problem.

We conclude this section, by showing that rooted $y-$monotone graphs can be efficiently recognized.
Our approach is similar to the approach employed in the third section of the article of Arkin et al.~\cite{ArkCM89}. 

\begin{theorem}\label{thm:recDirected}
 Let $G = (P,E)$ be a rooted connected geometric graph. 
 Then, we can decide in $O(|E|)$ time if $G$ is rooted $y-$monotone.
\end{theorem}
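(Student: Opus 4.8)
The plan is to exploit the structure established for the rooted $y$-MMST: a rooted connected geometric graph $G=(P,E)$ is rooted $y$-monotone if and only if for every $p\in P\setminus\{r\}$ there is a $y$-monotone path from $r$ to $p$, and such a path is necessarily $y$-monotone ``toward $r$'', i.e.\ $y$-increasing if $p$ is above $r$ and $y$-decreasing if $p$ is below $r$ (recall $r$ sits at the origin). By Observation~\ref{obs:unnesEdge} no monotone root path can use an edge straddling the $x$-axis, so we may treat $P_{y\geq 0}$ and $P_{y\leq 0}$ independently, discarding all straddling edges first. It then suffices to describe the test for the $y$-positive part; the negative part is symmetric.

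First I would process the vertices of $P_{y\geq 0}$ in nondecreasing order of $y$-coordinate (ties broken arbitrarily; general position means at most two points share a $y$-value, and a tie is harmless because an edge between two points of equal height is horizontal and hence cannot be the last edge of a $y$-increasing path, so such edges can simply be ignored for reachability purposes). Maintain a boolean label $\textrm{reach}(p)$ meaning ``$r$ is joined to $p$ by a $y$-increasing path inside $G$''. Initialize $\textrm{reach}(r)=\textbf{true}$. When we come to a vertex $p\neq r$, set $\textrm{reach}(p)=\textbf{true}$ iff $p$ has some neighbor $q$ in $G$ with $y(q)<y(p)$ and $\textrm{reach}(q)=\textbf{true}$; otherwise $\textrm{reach}(p)=\textbf{false}$. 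The correctness of this single sweep is the key claim: since any $y$-increasing path ending at $p$ has its penultimate vertex strictly lower than $p$, and that vertex is itself reached by a $y$-increasing path, a straightforward induction on the sorted order shows $\textrm{reach}(p)$ is computed correctly. Finally, $G$ is rooted $y$-monotone iff $\textrm{reach}(p)=\textbf{true}$ for every $p\in P\setminus\{r\}$ (run on both halves).

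For the running time, note that each edge $\overline{pq}$ with, say, $y(p)<y(q)$ is inspected exactly once — at the moment $q$ is processed — as one of $q$'s ``downward'' incident edges, contributing its reach status. Building the adjacency lists and bucketing edges by their upper endpoint is $O(|P|+|E|)$; the sort by $y$-coordinate is the one place one must be slightly careful, but as in Arkin et al.~\cite{ArkCM89} we do not actually need a full comparison sort: we only need, for each vertex, to know which incident edges go ``down'' and which go ``up'', and to iterate vertices so that all lower-$y$ vertices are finalized before a given vertex — and in fact even that much ordering is unnecessary, because we can instead iterate to a fixpoint in a BFS/DFS from $r$ that only ever relaxes along $y$-increasing edges, visiting each edge $O(1)$ times, which is $O(|P|+|E|)=O(|E|)$ since $G$ is connected. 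I would present the BFS/DFS version precisely to avoid the $\Theta(|P|\log|P|)$ sorting cost and land the clean $O(|E|)$ bound.

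The main obstacle is not the algorithm but pinning down the direction issue in the correctness argument: one must argue that it is safe to only ever extend paths ``away from the $x$-axis'', i.e.\ that if $r$ reaches $p$ by \emph{any} $y$-monotone path then it reaches $p$ by a monotone path that is $y$-increasing (when $y(p)>0$) and lies entirely in $P_{y\geq 0}$. This is exactly Observation~\ref{obs:unnesEdge} together with the remark that a $y$-increasing path from $r$ to a point above $r$ cannot dip below $r$; I would state this as a short lemma and then the $O(|E|)$ reachability sweep finishes the proof.
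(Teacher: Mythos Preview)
Your final algorithm---discard edges that straddle the $x$-axis, orient the remaining edges away from the $x$-axis, and run a BFS/DFS from $r$---is exactly what the paper does, so the overall shape is right and the $O(|E|)$ bound follows for the same reason.

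There is, however, a concrete error in your treatment of horizontal edges. You assert that an edge between two points of equal height ``cannot be the last edge of a $y$-increasing path'' and hence can be ignored. Under the paper's definition a path is $y$-increasing when $y(w_0)\leq y(w_1)\leq\cdots\leq y(w_l)$, with non-strict inequalities, so a horizontal edge can perfectly well be the last (or any) edge of a $y$-monotone path from $r$. Concretely, take $r=(0,0)$, $a=(1,1)$, $b=(2,1)$ with edges $\overline{ra}$ and $\overline{ab}$: the path $(r,a,b)$ is $y$-increasing, so $G$ is rooted $y$-monotone, yet your sweep with the strict test $y(q)<y(p)$ (and horizontal edges ignored) leaves $b$ unreached. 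The paper avoids this by inserting \emph{both} directed arcs $\overrightarrow{pq}$ and $\overrightarrow{qp}$ whenever $y(p)=y(q)$; once you make the analogous fix (relax along edges with $y(q)\leq y(p)$, i.e.\ treat horizontal edges as bidirectional), your argument goes through and coincides with the paper's.
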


\begin{proof}
We first transform $G$ into a directed geometric graph $\overrightarrow{G}$ in $O(|E|)$ time, by assigning direction to the edges and removing some of them.
Let $\overline{pq}$ be an edge of $G$.
If $p$ and $q$ belong to opposite half planes w.r.t.{} the $x$ axis then $\overline{pq}$ cannot be used in a $y-$monotone path from the root $r$ to a point of $P\setminus \{r\}$ (see Observation~\ref{obs:unnesEdge}).
Hence, we remove the edge $\overline{pq}$ from the graph.
If $y(p) = y(q)$ then we insert both $\overrightarrow{pq}$ and $\overrightarrow{qp}$ in $\overrightarrow{G}$.
Otherwise, assuming w.l.o.g., that  $|y(p)| < |y(q)|$, we insert $\overrightarrow{pq}$ in $\overrightarrow{G}$.
$G$ is rooted $y-$monotone if and only if $r$ is connected with all other points of $P$ in $\overrightarrow{G}$.
The latter can be easily decided in $O(|E|)$ time by a breadth first search or a depth first search traversal. 
\end{proof}

\section{Rooted Uniform Monotone Graphs: Minimum Spanning Tree Construction, and Recognition}\label{sec:generalRootMon}

In this section we study the rooted uniform monotone graphs.
We initially study the computation of the rooted UMMST of a rooted point set (Subsection~\ref{subsec:UMMST}) and then we deal with the recognition of rooted uniform monotone graphs (Subsection~\ref{subsec:recUM}).

\subsection{Building the Rooted UMMST}\label{subsec:UMMST}

In this subsection, we focus on the rooted UMMST problem.
In contrast with the rooted $y-$MMST problem where the direction $y$ of monotonicity was given, here we are asked to determine the optimum direction of monotonicity, say $y'$, and the corresponding rooted $y'-$MMST.
We tackle the problem by giving a rotational sweep algorithm. 
Rotational sweep is a well known technique in computational geometry in which a (directed) line is rotated counterclockwise (or clockwise) and during this rotation important information about the solution of the problem is updated.
We note that crucial to the development of our algorithm for the rooted UMMST problem is the fact that it is sufficient to take into account only a quadratic number of directions of monotonicity  (to be proved in Lemma~\ref{lem:MonotoneNecessaryDirections}).
 This fact is based on the following observations.

\begin{obs}\label{obs:changeInRotation}
Let $y'$ be an axis.
If we rotate the $y'$ axis counterclockwise then the sequence $S[P_{y' \geq 0}, y']$ or the sequence $S[P_{y' \leq 0}, y']$ changes only when the $y'$ axis reaches (moves away from) a line perpendicular to a line passing through two points of $P$.
Then, by Lemma~\ref{lem:yMonotoneChar} and Corollary~\ref{cor:parentCharact}, the rooted $y'-$MMST of $P$ may only change at the same time.
\end{obs}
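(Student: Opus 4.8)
The plan is to justify Observation~\ref{obs:changeInRotation} by a continuity/combinatorial argument about how the ordered sequence $S[P_{y'\geq 0},y']$ (and symmetrically $S[P_{y'\leq 0},y']$) depends on the direction $y'$. First I would fix attention on the rotating directed axis $y'$ and note that two kinds of data determine $S[P_{y'\geq 0},y']$: (a) which points of $P$ lie in the closed halfplane $P_{y'\geq 0}$, equivalently the sign of the $y'$-coordinate of each point, and (b) the linear order of the points by their $y'$-coordinate $|y'(p)|=\langle p,\hat{u}\rangle$ where $\hat u$ is the unit vector along $y'$, with ties broken by distance to preceding points. The claim is that as $\hat u$ rotates, this data is locally constant except at finitely many critical directions, and those critical directions are exactly the ones perpendicular to a line through two points of $P$.

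The key steps, in order: (1) For a single point $p$, the sign of $\langle p,\hat u\rangle$ changes precisely when $\hat u$ becomes orthogonal to the vector $\overrightarrow{rp}$ (recall $r$ is the origin), i.e.\ when $y'$ is perpendicular to the line through $r$ and $p$ — which is a line through two points of $P$. (2) For two points $p,q$, the relative order of $|y'(p)|$ and $|y'(q)|$ (assuming they are on the same side, so no absolute-value sign issue) flips exactly when $\langle p-q,\hat u\rangle=0$, i.e.\ when $y'$ is perpendicular to the line through $p$ and $q$. (3) Ties in the primary key, i.e.\ $|y'(p)|=|y'(q)|$, happen only at these same critical directions, so away from them the secondary (distance-based) tie-break is vacuous and the sequence is genuinely determined by the strict order from (1)–(2). (4) Since there are $O(|P|^2)$ lines through pairs of points, there are $O(|P|^2)$ critical directions, and on each open arc between consecutive critical directions all the relevant signs and comparisons are constant, hence $S[P_{y'\geq 0},y']$ and $S[P_{y'\leq 0},y']$ are constant on that arc. (5) Finally, invoke Lemma~\ref{lem:yMonotoneChar} and Corollary~\ref{cor:parentCharact}: the rooted $y'$-MMST is a deterministic function of the pair of sequences $S[P_{y'\geq 0},y']$ and $S[P_{y'\leq 0},y']$ (the edges are exactly the segments joining each point to its nearest predecessor in its sequence), so the tree can change only when one of the sequences changes, i.e.\ only at a critical direction.

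I expect the main obstacle to be the careful handling of the tie-breaking rule and of the boundary case where a point lies exactly on the $y'$ axis (so $y'(p)=0$): at such a direction the point sits in both $P_{y'\geq 0}$ and $P_{y'\leq 0}$, and one must check that this is consistent with the "reaches (moves away from) a line perpendicular to a line through two points of $P$" phrasing — here the relevant line is the one through $r$ and $p$. I would argue that on the open arcs this degeneracy does not occur (general position guarantees at most the transitional instant), so the statement is really about what happens at the critical directions themselves; the phrase "changes only when" is then the clean consequence of local constancy on each arc. A secondary subtlety is that the primary ordering by $|y'(\cdot)|$ can have a tie between a point and its reflection-like partner across the axis even when neither is on the axis; but $|y'(p)|=|y'(q)|$ with $y'(p),y'(q)$ of opposite sign still forces $y'$ perpendicular to the line through $p$ and $q$ (or through $r$ and the midpoint), so it is again a critical direction and does not occur on the open arcs. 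Once these degeneracies are quarantined to the critical set, the observation follows.
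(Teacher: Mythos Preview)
The paper states this as an observation without any accompanying proof, so there is no detailed argument to compare against; your continuity argument (steps (1)--(5)) is exactly the natural justification the authors have in mind and is correct.

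One small remark: your ``secondary subtlety'' at the end is a non-issue and your handling of it is slightly off. If $y'(p)$ and $y'(q)$ have opposite signs then $p$ and $q$ lie in \emph{different} halfplanes, hence in different sequences $S[P_{y'\geq 0},y']$ and $S[P_{y'\leq 0},y']$, so the equality $|y'(p)|=|y'(q)|$ never needs to be compared within a single sequence. (And note that $y'(p)=-y'(q)$ forces $\langle p+q,\hat u\rangle=0$, i.e.\ $y'$ perpendicular to the line through $r$ and the midpoint of $pq$, which is generally \emph{not} a line through two points of $P$; fortunately this direction is irrelevant for the reason just given.) You can simply drop that paragraph.
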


\begin{obs}\label{obs:slopeLessThanPi}
Let $y'$ and $y''$ be axes of opposite directions. 
Then, the rooted $y'-$MMST  of $P$ is the same as the rooted  $y''-$MMST of $P$. 
Hence, when computing the rooted UMMST of $P$ we only need to take into account the $y'$ axes such that the angle that we need to rotate the $x$ axis counterclockwise to become codirected with the $y'$ is less than $\pi$.
\end{obs}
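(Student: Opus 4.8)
The plan is to observe that $y'$-monotonicity of a path is a property of the \emph{undirected} line supporting the $y'$ axis and does not depend on its orientation, and then to push this through the definitions of rooted $y'$-monotone graph and of rooted $y'$-MMST. Concretely, let $W = (w_0, w_1, \ldots, w_l)$ be a geometric path. By definition $W$ is $y'$-increasing when $y'(w_0) \le y'(w_1) \le \cdots \le y'(w_l)$ and $y'$-decreasing when the reversed chain of inequalities holds. Since $y''$ has the direction opposite to $y'$, for all points $p,q$ we have $y''(p) \le y''(q)$ if and only if $y'(q) \le y'(p)$. Hence $W$ is $y'$-increasing if and only if it is $y''$-decreasing, and $y'$-decreasing if and only if it is $y''$-increasing; therefore $W$ is $y'$-monotone if and only if it is $y''$-monotone.

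Next I would lift this from paths to graphs. A rooted geometric graph $G = (P,E)$ is rooted $y'$-monotone exactly when, for every $p \in P \setminus \{r\}$, some $r$-to-$p$ path in $G$ is $y'$-monotone; by the previous paragraph that path is $y'$-monotone if and only if it is $y''$-monotone. Consequently the family of rooted $y'$-monotone spanning graphs of $P$ is literally identical to the family of rooted $y''$-monotone spanning graphs of $P$. Two identical families have identical minimum-cost members, and by Corollary~\ref{cor:y-tree} together with Corollary~\ref{cor:parentCharact} (and the general-position assumption) the rooted $y'$-MMST is the unique minimum-cost member of that family; hence the rooted $y'$-MMST of $P$ equals the rooted $y''$-MMST of $P$.

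For the concluding remark, I would note that a direction of monotonicity is specified by a $y'$ axis, which is determined by the angle in $[0,2\pi)$ through which the $x$ axis must be rotated counterclockwise to become codirected with $y'$. If this angle lies in $[\pi, 2\pi)$, then the axis of opposite direction has angle in $[0,\pi)$, and by what was just proved it yields the same rooted MMST; so restricting attention to angles in $[0,\pi)$ loses nothing. Combined with Observation~\ref{obs:changeInRotation} — which, via Lemma~\ref{lem:yMonotoneChar} and Corollary~\ref{cor:parentCharact}, limits the orientations at which the rooted $y'$-MMST can change to the $O(|P|^2)$ directions perpendicular to a line through two points of $P$ — this is exactly what allows the rotational sweep to examine only a quadratic number of candidate directions over a half-turn rather than a full turn.

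I do not expect a genuine obstacle here: the argument is immediate from the definitions once one is careful about two bookkeeping points, namely (i) what the object ``rooted $y'$-MMST'' is (a concrete graph, unique as the minimum-cost element of a well-defined family under general position, rather than merely a cost value), and (ii) the clean reformulation of ``opposite direction'' as reversal of the induced signed coordinate. Making these two points explicit is the only real content of the proof.
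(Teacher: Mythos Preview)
Your argument is correct and is precisely the natural unpacking of the definitions; the paper itself states this as a bare observation with no proof, so there is nothing to compare against beyond noting that your elaboration matches the intended (trivial) reasoning. One minor remark: invoking uniqueness via Corollary~\ref{cor:y-tree} and Corollary~\ref{cor:parentCharact} is not strictly necessary, since once the two families of spanning graphs coincide as sets, their minimum-cost members coincide regardless of uniqueness.
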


Based on Observations~\ref{obs:changeInRotation} and~\ref{obs:slopeLessThanPi}, 
 we define the set $\Theta = \{\theta \in [0,\pi): \theta$ is the slope of a line perpendicular to a line passing through two points of $P\}$. 
We also define $S[\Theta]$ to be the sorted sequence that contains the slopes of $\Theta$ in increasing order, i.e.{} $S[\Theta] = (\theta_0, \theta_1, \ldots, \theta_{m-1})$, $\theta_i < \theta_{i+1}$, $i =0, 1, \ldots, m-2$ and $m \leq \binom{|P|}{2}$.
We further define the set $\Theta_{\text{critical}} = \{\theta_0$, $\theta_1$, \ldots, $\theta_{m-1}\} \cup \{$ $\frac{\theta_0 + \theta_1}{2}$, $\frac{\theta_1 + \theta_2}{2}$, \ldots, $\frac{\theta_{m-2} + \theta_{m-1}}{2}$, $\frac{\theta_{m-1} + \pi}{2}\}$
which we call the \emph{critical set of slopes} since, as we show in Lemma~\ref{lem:MonotoneNecessaryDirections},
examining the axes with slope in $\Theta_{\text{critical}}$ is  sufficient for computing the rooted UMMST of $P$.
$|\Theta_{\text{critical}}| = O(|P|^2)$. 
We now assign ``names'' to  the axes with slopes in $\Theta_{\text{critical}}$. 
Let $y_{2i}$ be the axis with slope $\theta_i$, $i = 0, 1, \ldots, m-1$ and let $y_{2i + 1}$ be the axis with slope $\frac{\theta_i + \theta_{i+1}}{2}$, $i = 0, 1, \ldots, m-2$, and $y_{2m-1}$ be the axis with slope $\frac{\theta_{m-1} + \pi}{2}$.
Note that the subscript of each axis gives its order when the axes are sorted w.r.t.{} their slope.
Note also that axes with even subscripts correspond to lines perpendicular to lines passing through two points of $P$.

\begin{lemma}\label{lem:MonotoneNecessaryDirections}
The rooted UMMST of $P$ is one of the rooted $y'-$MMST  of $P$ over all axes $y'$ with slope in $\Theta_{\text{critical}}$ and, more specifically, the one of minimum cost.  
\end{lemma}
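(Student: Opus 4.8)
The plan is to show that as the axis of monotonicity rotates, the rooted $y'$-MMST only changes at the slopes in $\Theta$, so that on each open interval $(\theta_i,\theta_{i+1})$ (together with the "wrap-around" interval $(\theta_{m-1},\pi)$ and, symmetrically via Observation~\ref{obs:slopeLessThanPi}, its continuation) the tree is constant; hence sampling one representative slope per interval — the midpoint — plus the endpoints themselves captures every tree that can ever occur, and the rooted UMMST, being the minimum-cost rooted $y'$-MMST over \emph{all} directions $y'$, must coincide with the cheapest of these finitely many sampled trees.

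First I would fix an axis $y'$ with slope $\theta \notin \Theta$ and argue that the combinatorial data determining the tree is locally constant at $\theta$. By Corollary~\ref{cor:parentCharact}, the rooted $y'$-MMST is completely determined by the sequence $S[P_{y'\geq 0},y']$ together with $S[P_{y'\leq 0},y']$ and, for each point, the identity of its parent (nearest predecessor in that sequence). The ordering within $S[P_{y'\ge 0},y']$ is by $|y'|$-coordinate, i.e.\ by the projection of points onto the $y'$ axis; the relative order of two points $p,q$ flips exactly when $y'$ becomes perpendicular to the line through $p$ and $q$, i.e.\ when $\theta$ hits the slope of a line perpendicular to $\overline{pq}$, which is precisely an element of $\Theta$ (this is Observation~\ref{obs:changeInRotation}). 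Likewise the partition of $P$ into $P_{y'\ge 0}$ and $P_{y'\le 0}$ only changes when $y'$ is perpendicular to a line through the root and another point — again an element of $\Theta$ since $r\in P$. Finally, the parent of $p_i$ is the nearest point among its predecessors; as long as the predecessor \emph{set} $\{p_0,\dots,p_{i-1}\}$ is unchanged (which we have just argued holds off $\Theta$), the nearest one among them is a purely Euclidean notion independent of $y'$, so it too is constant. Therefore the rooted $y'$-MMST is constant on each maximal open slope-interval determined by $\Theta$.

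Next I would invoke Observation~\ref{obs:slopeLessThanPi} to restrict attention to slopes in $[0,\pi)$: axes of opposite direction give the same tree, so every direction is represented by some $\theta\in[0,\pi)$. The open intervals partitioning $[0,\pi)\setminus\Theta$ are exactly $(\theta_{i},\theta_{i+1})$ for $i=0,\dots,m-2$ and $(\theta_{m-1},\pi)\cup[0,\theta_0)$ — the last one "wrapping around" $0\equiv\pi$, which is legitimate because slope $0$ and slope $\pi$ denote the same line and because $\theta\in[0,\theta_0)$ is handled by the direction-reversed axis. On $(\theta_i,\theta_{i+1})$ the midpoint $\tfrac{\theta_i+\theta_{i+1}}{2}$ is a representative, and on the wrap-around interval the slope $\tfrac{\theta_{m-1}+\pi}{2}$ is a representative; the boundary slopes $\theta_0,\dots,\theta_{m-1}$ themselves are included separately. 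Thus every rooted $y'$-MMST that arises for \emph{any} axis $y'$ equals the rooted $y''$-MMST for some $y''$ with slope in $\Theta_{\text{critical}}$. Since the rooted UMMST is by definition the minimum-cost rooted $y'$-MMST over all axes $y'$, and that minimum is attained by one of the finitely many trees listed, it equals the minimum-cost tree among the rooted $y''$-MMSTs with slope $y''\in\Theta_{\text{critical}}$.

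The main obstacle — and the step deserving the most care — is the "wrap-around" interval and the handling of the boundary slopes: one must be sure that slopes in $[0,\theta_0)$ really do not produce a new tree, which requires carefully combining the periodicity of slopes ($0\equiv\pi$) with the opposite-direction identification of Observation~\ref{obs:slopeLessThanPi}, and one must double-check that at a slope $\theta_i\in\Theta$ itself the point set is still in "general enough position" for $S[P_{y_{2i}\ge 0},y_{2i}]$ to be well-defined via the tie-breaking rule introduced before Lemma~\ref{lem:yMonotoneChar} (indeed the tie-breaking rule was designed exactly to make the sequence, and hence the tree, well-defined when two points share a $y'$-coordinate). A secondary subtlety is that a \emph{single} axis with slope $\theta_i$ can have at most one pair of points with equal $y_{2i}$-coordinate under the general-position assumption, so the algorithm's special-casing of ties (Algorithm~\ref{alg:GIyMonotone}) suffices; this should be remarked but needs no lengthy argument.
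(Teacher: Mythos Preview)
Your proposal is correct and follows essentially the same route as the paper's proof: both invoke Observation~\ref{obs:changeInRotation} to conclude that the rooted $y'$-MMST is constant on each open slope-interval between consecutive elements of $\Theta$, pick the midpoint as the representative for each such interval (with the wrap-around interval handled via Observation~\ref{obs:slopeLessThanPi}), and include the boundary slopes $\theta_i$ separately because the tree there may differ from the trees on the adjacent open intervals. One minor inaccuracy in your closing remark: general position (no three collinear) guarantees only that at most two points share any \emph{given} $y_{2i}$-value, not that at most one tying pair occurs on the whole axis --- indeed Figure~\ref{fig:consAxes} in the paper shows two distinct pairs tying simultaneously --- though this does not affect the argument for the lemma itself.
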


\begin{proof}
Let $y'$ and $y''$ be axes of slope $\theta'$ and $\theta''$, respectively, such that $\theta_i < \theta', \theta'' < \theta_{i+1}$ for some $0 \leq i \leq m - 2$ and let $T'$, $T''$ be the rooted  $y'-$MMST  and the  rooted $y''-$MMST of $P$, respectively.
By Observation~\ref{obs:changeInRotation}, cost$(T') = $cost$(T'')$.
As a result, we need to take into account only one of the $y'$ axes of slope $\theta '$ with $\theta_i < \theta' < \theta_{i+1}$.
We take into account the axis $y_{2i+1}$.
Additionally, we take into account the axis $y_{2m-1}$ for which the rooted $y_{2m-1}-$MMST of $P$ is the same as the rooted $y'-$MMST  of $P$ for any $y'$ axis with slope in the range $(\theta_{m-1},\pi) \cup [0,\theta_0)$.

We also need to take into account all the axes of slope $\theta_{i}$, i.e.{} the axes $y_{2i}$, $ 0 \leq i \leq m-1$, since, by Observation~\ref{obs:changeInRotation}, the rooted $y_{2i}-$MMST of $P$ might differ from all the other rooted $y'-$MMST  of $P$, $y' \neq y_{2i}$ (the sorted sequence $S[P_{y_{2i} \geq 0},y_{2i}]$ (or $S[P_{y_{2i} \leq 0},y_{2i}]$) might be different from every other $S[P_{y' \geq 0},y']$ (resp.{}, $S[P_{y' \leq 0},y']$) ).
See, for example, Figure~\ref{fig:consAxes}. 
\end{proof}

\begin{figure}[htbp]
\centering
\begin{minipage}{0.28\textwidth}
\centering
\includegraphics[height = 0.12\textheight, keepaspectratio]{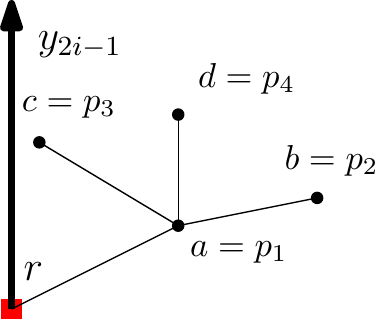} \\
(a)
\end{minipage}
\hfill
\begin{minipage}{0.28\textwidth}
\centering
\includegraphics[height = 0.12\textheight, keepaspectratio]{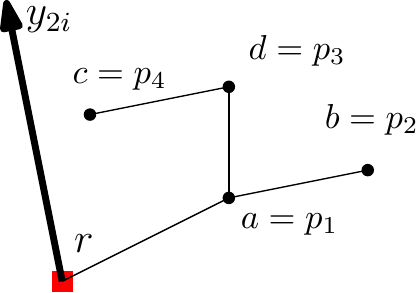} \\ (b)
\end{minipage}
\hfill
\begin{minipage}{0.28\textwidth}
\centering
\includegraphics[height = 0.12\textheight, keepaspectratio]{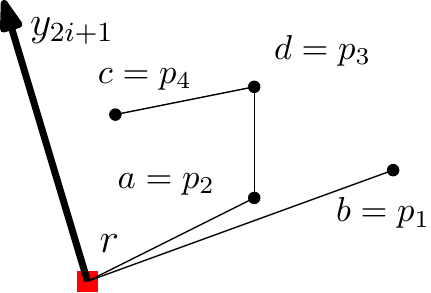} \\(c)
\end{minipage}
\caption{Illustration of three consecutive axes $y_{2i-1}, y_{2i}$ and $y_{2i+1}$ with slope in $\Theta_{\text{critical}}$ for the point set $\{r,a,b,c,d\}$ where $y_{2i}$ is perpendicular to the lines through the pairs of points $(a, b)$ and $(c,d)$.
In (a) $S[P_{y_{2i-1} \geq 0}, y_{2i-1}] = (r,a = p_1$, $b = p_2$, $c = p_3$, $d = p_4)$ while in (b) $S[P_{y_{2i} \geq 0}, y_{2i}] = (r, a, b, d, c)$ and in (c) $S[P_{y_{2i+1} \geq 0}, y_{2i+1}] = (r, b, a, d, c)$.
}
\label{fig:consAxes}
\end{figure}

We now describe our algorithm which produces the rooted UMMST of a rooted point set $P$.
Our rooted UMMST algorithm is a rotational sweep algorithm.
It considers an axis $y'$, which initially coincides with $y_0$, and then it rotates it counterclockwise until $y'$  becomes opposite to the $x$ axis.
Throughout this procedure, it updates the rooted $y'-$MMST of $P$.
By Lemma~\ref{lem:MonotoneNecessaryDirections}, it only needs to obtain each rooted $y_i-$MMST of $P$, where $y_i$ is an axis with slope in $\Theta_{\text{critical}}$, $0 \leq i \leq 2m-1$.
Let $T^{\text{opt}}_i$ be the rooted $y_i-$MMST of $P$, $0 \leq i \leq 2m-1$.
Our rooted UMMST algorithm can now be stated as follows:
It initially constructs $T^{\text{opt}}_0$ using Theorem~\ref{thm:yRootedMonotoneMain}. 
Then, it iterates for $i = 1, 2, \ldots, 2m - 1$ obtaining at the end of each iteration $T^{\text{opt}}_{i}$ by modifying $T^{\text{opt}}_{i-1}$.
In order to do this efficiently, it maintains a tree $T$ which is initially equal to $T^{\text{opt}}_0$ and throughout its operation it evolves to $T^{\text{opt}}_{1}$, $T^{\text{opt}}_{2}$, \ldots, $T^{\text{opt}}_{2m-1}$.
 Similarly, it maintains the sequences $S^-$ and $S^+$ which are initially equal to $S[P_{y_{0} \leq 0}, y_{0}]$ and $S[P_{y_{0} \geq 0}, y_{0}]$, respectively, and evolve to $S[P_{y_{i} \leq 0}, y_{i}]$ and $S[P_{y_{i} \geq 0}, y_{i}]$, respectively, $i =1, 2, \ldots, 2m-1$.
Our algorithm stores the axis which corresponds to the produced rooted UMMST  of $P$, so far, in the variable ``minAxis''.
In its final step, it recomputes the rooted {``minAxis''}-MMST of $P$ using Theorem~\ref{thm:yRootedMonotoneMain} and returns this tree.
The pseudocode of our rooted UMMST algorithm is presented in Algorithm~\ref{alg:rootedMonotoneAlgo}.

\begin{algorithm}
\caption{rooted UMMST}\label{alg:rootedMonotoneAlgo}
\textbf{Input:} A rooted point set $P$.

\textbf{Output:} The rooted  Uniform Monotone Minimum Spanning Tree of $P$.
\begin{algorithmic}[1]
\State Compute the axes $y_0, y_1, \ldots, y_{2m-1}$ with slopes in $\Theta_{\text{critical}}$.
\State Compute $T^{\text{opt}}_0$, $S[P_{y_0 \leq 0}, y_0]$ and $S[P_{y_0 \geq 0}, y_0]$ using Theorem~\ref{thm:yRootedMonotoneMain}.
\State $T \gets T^{\text{opt}}_0$, $S^- \gets S[P_{y_0 \leq 0}, y_0]$, $S^+ \gets S[P_{y_0 \geq 0}, y_0]$, minCost$\gets$ cost$(T^{\text{opt}}_0)$ and minAxis$\gets y_0$. 
\For{ $i \gets 1$ to $2m-1$}
    \State Update $T, S^-, S^+$ such that $T$ equals to $T^{\text{opt}}_{i}$ and $S^-$ (resp.~$S^+$) equals to $S[P_{y_{i} \leq 0}, y_{i}]$ (resp.~$S[P_{y_{i} \geq 0}, y_{i}]$).
  \If{cost$(T) < $minCost}
  \State minCost $\gets$ cost$(T)$ and minAxis$\gets y_i$.
\EndIf
\EndFor
\State \textbf{return } the minAxis-MMST  of $P$, computed  using Theorem~\ref{thm:yRootedMonotoneMain}.
\end{algorithmic}
\end{algorithm}

\begin{theorem}\label{thm:spanningTreeOneDir}
The rooted UMMST of a rooted point set $P$ can be computed in $O(|P|^2 \log |P|)$ time.
\end{theorem}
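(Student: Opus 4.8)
The plan is to bound the running time of Algorithm~\ref{alg:rootedMonotoneAlgo} step by step, the only delicate part being the total cost of the updates performed inside its \textbf{for} loop. Computing the $O(|P|^2)$ slopes determined by pairs of points of $P$, sorting them into $S[\Theta]$, and forming $\Theta_{\text{critical}}$ together with the axes $y_0,\dots,y_{2m-1}$ takes $O(|P|^2\log|P|)$ time; computing $T^{\text{opt}}_0$ and the sequences $S[P_{y_0\le 0},y_0]$ and $S[P_{y_0\ge 0},y_0]$ costs $O(|P|\log^2|P|)$ by Theorem~\ref{thm:yRootedMonotoneMain}; and the final recomputation of the $\text{minAxis}$-MMST is again $O(|P|\log^2|P|)$. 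It therefore remains to show that all executions of the update step together run in $O(|P|^2\log|P|)$ time, after which Lemma~\ref{lem:MonotoneNecessaryDirections} gives correctness and the theorem follows.

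For the loop I would maintain $S^-$ and $S^+$ as balanced binary search trees, maintain $\text{cost}(T)$ incrementally, and --- this is the key ingredient --- keep for every point $\rho\in P$ a balanced search tree $L_\rho$ storing the current predecessors of $\rho$ in its sequence, keyed by the Euclidean distance to $\rho$, so that $\text{par}(\rho)$ is just the minimum of $L_\rho$; initialising all the trees $L_\rho$ costs $\sum_{\rho}O(|L_\rho|\log|P|)=O(|P|^2\log|P|)$. By Observation~\ref{obs:changeInRotation} the sequences change only at the critical slopes, and since each pair of points of $P$ not containing $r$ determines a single perpendicular direction in $[0,\pi)$ and at that slope the two points are then consecutive in whichever of $S^-,S^+$ contains them, the sequences undergo a total of only $O(|P|^2)$ adjacent transpositions, plus $O(|P|)$ ``boundary'' events --- one per point --- at which a point crosses the line through $r$ perpendicular to $y'$ and is moved between the two sequences. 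An adjacent transposition of $p$ (which becomes the successor) and $q$ (which becomes the predecessor) changes only the predecessor sets of $p$ (which gains $q$) and of $q$ (which loses $p$), so it is handled by one insertion into $L_p$, one deletion from $L_q$, recomputation of $\text{par}(p)$ and $\text{par}(q)$, and correction of the at most two affected edges and of $\text{cost}(T)$, all in $O(\log|P|)$ time; by Corollary~\ref{cor:parentCharact} this keeps $T$ equal to the current $T^{\text{opt}}_i$. A boundary event in which $\rho$ passes from $S^+$ to $S^-$ occurs while $\rho$ sits at the front of its sequence --- by general position no two points reach $y'$-coordinate $0$ at the same slope --- so every other point of $S^+$ loses $\rho$ as a predecessor and every other point of $S^-$ gains it, and performing the resulting $O(|P|)$ updates of the trees $L_\rho$ and of $T$ costs $O(|P|\log|P|)$; over all boundary events this is $O(|P|^2\log|P|)$. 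Summing everything, the loop, and hence the algorithm, runs in $O(|P|^2\log|P|)$ time.

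The step I expect to be the main obstacle is exactly this parent maintenance: recomputing $\text{par}(q)$ from scratch whenever $q$ swaps with its current parent would call for a fresh nearest-neighbour query and threaten an extra logarithmic factor, and it is the per-point distance-keyed predecessor trees $L_\rho$ that rescue the bound. The facts that need careful checking are (i) each adjacent transposition touches only two of the trees $L_\rho$; (ii) the total number of transpositions over the whole sweep is $O(|P|^2)$, because each pair of points transposes at most once; (iii) there are only $O(|P|)$ boundary events and each is handled in $O(|P|\log|P|)$ time; and (iv) the degenerate configurations in which two points momentarily share a $y'$-coordinate --- governed by the tie-breaking rule in the definition of $S[P,y]$ and occurring precisely at the axes $y_{2i}$, including the case of a point lying on the line through $r$ perpendicular to $y_{2i}$ --- are isolated and still cost only $O(\log|P|)$ per affected pair (respectively $O(|P|\log|P|)$ for such a point). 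With these established the accounting above goes through and gives the claimed $O(|P|^2\log|P|)$ running time.
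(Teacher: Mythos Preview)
Your proposal is correct and follows essentially the same approach as the paper. Your distance-keyed predecessor trees $L_\rho$ are exactly the paper's data structures $\mathrm{PD}(p)$; your ``adjacent transpositions'' and ``boundary events'' correspond to the paper's Cases~1--2 and Cases~3--4, respectively; and your accounting ($O(|P|^2)$ transpositions at $O(\log|P|)$ each, $O(|P|)$ boundary events at $O(|P|\log|P|)$ each) matches the paper's bound $\sum_i k_i = \binom{|P|}{2}$ together with the $|P|-1$ occurrences of Cases~3--4.
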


\begin{proof}
By Lemma~\ref{lem:MonotoneNecessaryDirections}, our rooted UMMST algorithm  (Algorithm~\ref{alg:rootedMonotoneAlgo})
produces the rooted UMMST of $P$.
We now show that its time complexity is $O(|P|^2 \log |P|)$.
The axes $y_0$, $y_1$, \ldots, $y_{2m-1}$ with slopes in $\Theta_{\text{critical}}$ can be computed in $O(|P|^2 \log |P|)$ time.
Let $k_i$ be the number of pairs of points of $P$ that have the same projection onto the $y_{2i}$ axis, $0 \leq i \leq m-1$.
Then $\sum\limits_{i =0}^{m-1} k_i = \binom{|P|}{2}$.
For each $i = 0$, $1$, \ldots, $m-1$, we compute a list $L_i$ which contains these $k_i$ pairs.
All $L_i, 0\leq i \leq m-1$, can be computed in  $O(|P|^2 \log |P|)$ total  time.

For each point $p$ in $S^-\setminus \{r\}$ (resp.~in $S^+\setminus \{r\}$) we maintain a data structure PD$(p)$ which is a self-balancing binary search tree that contains all the points that precede $p$ in $S^-$ (resp.~$S^+$) accompanied with their distance from $p$.
More formally, let $S^-$ be equal to $(r = p_0, p_1, \ldots, p_s)$.
Then, for each $p_j, j = 1, 2, \ldots, s$, PD$(p_j)$ contains the pairs  $(p_0, d(p_0, p_j))$, $(p_1, d(p_1, p_j))$, \ldots, $(p_{j-1}, d(p_{j-1}, p_j))$.
The key of each $(p_l, d(p_l, p_j)), l = 0, 1, \ldots, j-1$, is the distance $d(p_l, p_j)$.
Similarly, we define PD$(p)$ for each $p \in S^+\setminus \{r\}$.
We employ these PD$(p)$, $p \in P\setminus \{r\}$, data structures since using the information stored in them we can obtain the parent of each $p$ efficiently. 
In more detail, for each $p \in P\setminus \{r\}$ the par$(p)$ in $T$ can be obtained or updated in $O(\log |P|)$ time by taking into account the PD$(p)$, since the pair $(\text{par}(p), d(\text{par}(p),p))$ is the element with the minimum key in PD$(p)$.

Computing the initial values of $T, S^-,S^+$ and PD$(p), p \in P\setminus \{r\}$ can be done in $O(|P|^2 \log |P|)$ time.
This is true since $T^{\text{opt}}_0$, $S[P_{y_0 \leq 0}, y_0]$ and $S[P_{y_0 \geq 0}, y_0]$ are computed in $O(|P|\cdot \log ^2 |P|)$ time (see Theorem~\ref{thm:yRootedMonotoneMain}).
Furthermore, computing PD$(p_j)$ for some $p_j$ in $S^-\setminus \{r\}$ (resp.~in $S^+\setminus \{r\}$), when $S^-$ (resp.~$S^+$) equals to $S[P_{y_0 \leq 0}, y_0]$ (resp.~$S[P_{y_0 \geq 0}, y_0]$) takes $O(|P| \log |P|)$ time since we have to insert each $(p_i,d(p_i,p_j)$, with $i < j$, to PD$(p_j)$ and each such insertion takes $O(\log |P|)$ time.
Hence, the total running time for initially computing all PD$(p),p \in P\setminus \{r\}$, is $O(|P|^2 \log |P|)$.

Let $T$ be equal to $T^{\text{opt}}_{i-1}$ and let $S^-$ (resp.~$S^+$) be equal to $S[P_{y_{i-1} \leq 0},y_{i-1}]$ (resp.~$S[P_{y_{i-1} \geq 0},$ ~$ y_{i-1}]$) then $T$ and $S^-$ (resp.~$S^+$) can be updated such that $T$ becomes equal to $T^{\text{opt}}_{i}$ and $S^-$ (resp.~$S^+$) becomes equal to  $S[P_{y_{i} \leq 0},y_{i}]$ (resp.~$S[P_{y_{i} \geq 0},y_{i}]$) in: 
\begin{enumerate}
\item\label{en:first} $O(k_{\lfloor\frac{i}{2}\rfloor}\log |P|)$ time if $i$ is even and $y_i$ is not perpendicular to a line passing through the root $r$ and another point in $P$.
\item \label{en:second} $O(k_{\lfloor\frac{i}{2}\rfloor}\log |P|)$ time if $i$ is odd and $y_{i-1}$ is not perpendicular to a line passing through the root $r$ and another point in $P$.
\item \label{en:third} $O(|P| \log |P|)$ time if $i$ is even and $y_i$ is perpendicular to a line passing through the root $r$ and another point $q \in P\setminus \{r\}$.
\item\label{en:fourth} $O(|P| \log |P|)$ time if $i$ is odd and $y_{i-1}$ is perpendicular to a line passing through $r$ and another point $q \in P\setminus \{r\}$.
\end{enumerate}

We first explain how to maintain the data structures so that all points relevant to case~\ref{en:first} are treated in $O(k_{\lfloor\frac{i}{2}\rfloor}\log |P|)$ time.
Since $y_i$ is not perpendicular to a line connecting the root $r$ with another point in $P$ then $P_{y_{i} \leq 0} = P_{y_{i-1} \leq 0}$ and $P_{y_{i} \geq 0} = P_{y_{i-1} \geq 0}$.
Hence, no point is inserted into (or removed from) $S^-$ or $S^+$.
However, some points which had different projections onto the $y_{i-1}$ axis, now have the same projection onto the $y_{i}$ axis.

We only explain how to deal with the points in $S^+$; the points in $S^-$ can be treated similarly. 
 Recall that $L_{\lfloor\frac{i}{2}\rfloor}$ contains the $k_{\lfloor\frac{i}{2}\rfloor}$ pairs of points of $P$ that have the same projection onto the $y_i$ axis.
Let $S^+ = (r = p_0, p_1, \ldots, p_s)$ and let $(p_{j_1}, p_{j_1+1})$, $(p_{j_2}, p_{j_2+1})$, \ldots, $(p_{j_k}, p_{j_k+1})$ with $j_1 < j_2 < \ldots < j_k$ be the $k$ pairs of points in $S^+$ that belong to $L_{\lfloor\frac{i}{2}\rfloor}$, i.e.{} they are connected by line perpendicular to $y_i$. 
Then, each $p \in S^+\setminus \{p_{j_1}$, $p_{j_1+1}$, $p_{j_2}$, $p_{j_2+1}$, \ldots, $p_{j_k}$, $p_{j_k+1}\}$ has the same relative order with the other points of $S^+$ w.r.t.{} both the $y_{i}$ axis and the $y_{i-1}$ axis. 
Hence, $p$ is placed at the correct position in $S^+$, the parent of $p$ in $T$ is correct (see Lemma~\ref{lem:yMonotoneChar}) and PD$(p)$ does not need to be updated.

As a result, the only changes that may be necessary regard the points in $\{p_{j_1}$, $p_{j_1+1}$, $p_{j_2}$, $p_{j_2+1}$, \ldots, $p_{j_k}$, $p_{j_k+1}\}$.
For these points we may need to recalculate their parent in $T$.
We may also need to swap some consecutive $p_{j_l}$ and  $p_{j_l+1}, l =1,2, \ldots, k$, in $S^+$.
Finally, we may need to update some of the PD$(p_{j_l})$ and PD$(p_{j_l+1}), l =1,2, \ldots, k$.

For each $l = 1$ to $k$ we do the following: 

We compute $d(p_{j_l+1}, \{p_0, p_1, \ldots, p_{j_l-1}\})$ in $O(\log |P|)$ time using PD$(p_{j_l+1})$.

If $d(p_{j_l+1},$ $\{p_0, p_1$, \ldots, $p_{j_l-1}\}) \geq$ $d(p_{j_l},$ par$(p_{j_l}))$, then we do not update anything, since the points $p_{j_l}$ and $p_{j_l+1}$ are placed at the correct position in $S^+$.

If, on the other hand, $d(p_{j_l+1},$ $\{p_0, p_1$, \ldots, $p_{j_l-1}\}) <$ $d(p_{j_l},$ par$(p_{j_l}))$, we remove the edges $\overline{\text{par}(p_{j_l})p_{j_l}}$ and $\overline{\text{par}(p_{j_l+1})p_{j_l+1}}$ from $T$. 
Then, we swap the order of the points $p_{j_l+1}$ and $p_{j_l}$ in $S^+$, i.e.{} if $S^+$ was previously equal to $(r = p_0,$ $p_1,$ \ldots, $p_{j_l}, p_{j_l+1},$ \ldots, $p_n)$, now $S^+$ becomes equal to $(r = p_0$, $p_1$, \ldots, $p'_{j_l}, p'_{j_l+1}$, \ldots, $p_{n})$ with $p'_{j_l}$ equal to $p_{j_l+1}$ and $p'_{j_l+1}$ equal to $p_{j_l}$.
We then insert the pair $(p'_{j_l}, d(p'_{j_l}, p'_{j_l+1}))$ into PD$(p'_{j_l+1})$ and remove the pair $(p'_{j_l+1}, d(p'_{j_l+1}, p'_{j_l}))$ from PD$(p'_{j_l})$.
Finally, in $T$ we connect the point $p'_{j_l}$ (resp.~$p'_{j_l+1}$) with the point $p$ s.t.{} $(p,d(p,p'_{j_l}))$ (resp.~$(p,d(p,p'_{j_l+1}))$) has the minimum key in PD$(p'_{j_l})$ (resp.~PD$(p'_{j_l+1})$) and update its parent accordingly.
Using PD$(p'_{j_l})$ and PD$(p'_{j_l+1})$ all this process which concerns a single pair of points is completed in $O(\log |P|)$ time.
Thus, $k_{\lfloor\frac{i}{2}\rfloor}$ pairs of points are treated in $O(k_{\lfloor\frac{i}{2}\rfloor} \log |P|)$ time.

Case~\ref{en:second} is treated similarly to Case~\ref{en:first}.

We now treat Case~\ref{en:third}.
First, observe that this case occurs exactly $|P|-1$ times, i.e.{} one time for each pair $(r,p), p \in P\setminus\{r\}$. 
In this case, either $P_{y_{i-1} \leq 0}$ is a strict subset of (not equal to) $P_{y_{i} \leq 0}$ or $P_{y_{i-1} \geq 0}$ is a strict subset of (not equal to) $P_{y_{i} \geq 0}$.
W.l.o.g.{} we assume that $P_{y_{i-1} \geq 0}$ is a strict subset of $P_{y_{i} \geq 0}$ and that $y_{i-1}(q) < 0$, i.e.{} $q$ belongs to $S^-$ w.r.t.{} $y_{i-1}$ while it belongs to both $S^+$ and $S^-$ w.r.t.{} the $y_i$ axis.

We first explain how to deal with the insertion of $q$ into the point set $P_{y_{i} \geq 0}$. 
We insert $q$ into the sequence $S^+$ right after $r$.
We do not need to update the PD$(q)$ since it already contains only the pair $(r,d(r,q))$.
We also do not need to update par$(q)$.
Then, for each point $p \in S^+$ with $p \notin \{r,q\}$ we insert the pair $(q, d(q,p))$ in the PD$(p)$ and if $d(q,p)$ is the lowest key in PD$(p)$, then we remove $\overline{\text{par}(p)p}$ from $T$, we assign $q$ to the par$(p)$ and then we insert the edge $\overline{pq}$ to $T$.
We are now done with $q$.
All the previously described actions take $O(|P|\log |P|)$ time, for the single pair $(r,q)$.  

Then, for the pairs of points that have the same projection onto the $y_{i}$ axis, except for $(r,q)$, we apply the procedure described in Case~\ref{en:first}.  
As shown in that Case, this is done in $O(k_{\lfloor\frac{i}{2}\rfloor} \log |P|) = O(|P| \log |P|)$ time, since $ k_{\lfloor\frac{i}{2}\rfloor} < |P|$. 

Case~\ref{en:fourth} is treated similar to Case~\ref{en:third}.

Since $\sum\limits_{i =0}^{m-1} k_i = O(|P|^2)$, the total running time of our algorithm is $O(|P|^2 \log |P|)$. 
\end{proof}

\subsection{Recognizing Rooted Uniform Monotone Graphs}\label{subsec:recUM}

We now proceed to the problem of deciding if a given rooted connected geometric graph is rooted uniform monotone.
Like we did for the rooted UMMST problem, we tackle this decision problem with a rotational sweep algorithm.
Again, it is sufficient to take into account only polynomially many directions and more specifically linearly many (to be proved in Lemma~\ref{lem:recArbNecDirections}).

We first define some auxiliary sets.
Let $G = (P, E)$ be a rooted connected geometric graph with root $r$ and $p$ be a point of $P\setminus \{r\}$. 
Let $A(p,y)$ be the set that contains all the adjacent points  to $p$ that are on the same side with $p$ w.r.t.{} the $x$ axis and are strictly closer to the $x$ axis than $p$. 
More formally, $A(p,y) = \{q: q\in$ Adj$(p)$, $q$ lies on the same half plane with $p$ w.r.t.{} the $x$ axis and $|y(q)| < |y(p)|$ $\}$.
Let $B(y)$ denote the set $\{p: p\in P\setminus\{r\}$ and $A(p,y) \neq \o \}$. 
Let $C(y)$ be the set that consists of the points $p\in P\setminus \{r\}$ that (i) do not belong to $B(y)$ and (ii) are connected with some other point $q$ with the same $y$ coordinate such that $A(q,y) \neq \o$.
More formally, $C(y) = \{p: p\in P\setminus (B(y)\cup\{r\})$ such that there exists $q \in $Adj$(p)$ with $y(q) = y(p) $ and $A(q,y) \neq \o \}$. 
An example of a rooted geometric graph and the corresponding sets is given in Figure~\ref{fig:illSets}.

\begin{figure}[htbp]
\centering
\includegraphics[height = 0.14\textheight,keepaspectratio]{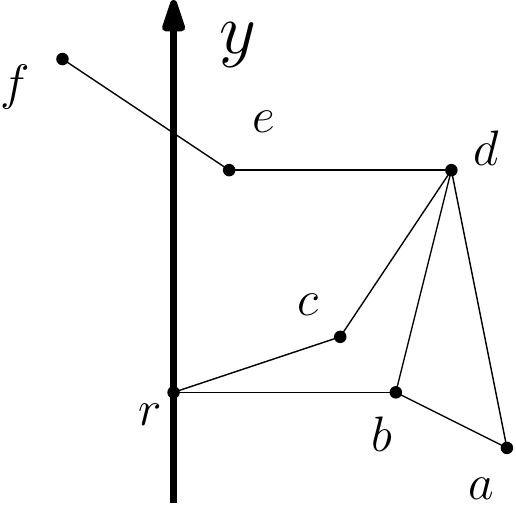}
\caption{
$A(a,y) = \{b\}, A(b,y) = \o, A(c,y) = \{r\}, A(d,y) = \{b,c\}, A(e,y) = \o$ and $A(f,y) = \{e\}$.
$B(y) = \{a,c,d,f\}$ and $C(y) = \{e\}$.}
\label{fig:illSets}
\end{figure}

We now give a characterization of rooted $y-$monotone graphs based on the previously defined sets. 

\begin{lemma}\label{lem:charYrootSetTerm}
Let $G = (P,E)$ be a rooted connected geometric graph such that for each $p \in P\setminus\{r\}$, $y(p) \neq 0$. 
Then, $G$ is rooted $y-$monotone if and only if $|B(y)| +$ $|C(y)| = |P| - 1$.
\end{lemma}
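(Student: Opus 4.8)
The plan is to relate the condition $|B(y)| + |C(y)| = |P| - 1$ to the reachability criterion established in Theorem~\ref{thm:recDirected}, namely that $G$ is rooted $y$-monotone if and only if, in the directed graph $\overrightarrow{G}$ obtained by orienting each non-horizontal edge $\overline{pq}$ toward the endpoint of larger absolute $y$-coordinate and keeping both orientations of each horizontal edge, the root $r$ reaches every other vertex. Since every vertex of $P\setminus\{r\}$ has $y(p)\neq 0$ by hypothesis, each such vertex either lies in the set $B(y)$ (it has an in-edge in $\overrightarrow{G}$ from a strictly-closer-to-the-$x$-axis neighbor on its own side, i.e.\ it "progresses" toward the $x$-axis along an edge) or in $C(y)$ (it has no such in-edge directly, but is joined by a horizontal edge to some vertex that does). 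Because $B(y)$ and $C(y)$ are disjoint subsets of $P\setminus\{r\}$ by definition, the equality $|B(y)|+|C(y)| = |P|-1$ is equivalent to the statement that $B(y)\cup C(y) = P\setminus\{r\}$, i.e.\ \emph{every} non-root vertex lies in $B(y)\cup C(y)$. So the lemma reduces to: $r$ reaches every vertex in $\overrightarrow{G}$ $\iff$ $B(y)\cup C(y)=P\setminus\{r\}$.

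\textbf{The forward direction.} Assume $G$ is rooted $y$-monotone, so by Theorem~\ref{thm:recDirected} every $p\in P\setminus\{r\}$ is reachable from $r$ in $\overrightarrow{G}$. Fix such a $p$ and take a directed path $r = w_0 \to w_1 \to \cdots \to w_k = p$. I will argue $p\in B(y)\cup C(y)$ by looking at the last edge $w_{k-1}\to w_k$. If this edge is non-horizontal, then $|y(w_{k-1})| < |y(w_k)| = |y(p)|$ and, because along a $y$-monotone path all vertices stay on one side of the $x$ axis (Observation~\ref{obs:unnesEdge} / the construction of $\overrightarrow{G}$ deletes edges crossing the $x$ axis), $w_{k-1}$ is on the same side as $p$; hence $w_{k-1}\in A(p,y)$, so $p\in B(y)$. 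If the last edge is horizontal, then $y(w_{k-1}) = y(p)$; walk back along the path until the first non-horizontal edge $w_{j-1}\to w_j$ (one exists since $y(w_0)=y(r)=0\neq y(p)$, so not all edges can be horizontal), giving a vertex $q := w_j$ with $y(q)=y(p)$ (horizontal edges preserve the $y$-coordinate) and $A(q,y)\neq\emptyset$ (witnessed by $w_{j-1}$). If $p\notin B(y)$ this places $p\in C(y)$; if $p\in B(y)$ we are already done. Either way $p\in B(y)\cup C(y)$.

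\textbf{The reverse direction.} Assume $B(y)\cup C(y) = P\setminus\{r\}$; I want $r$ to reach every vertex in $\overrightarrow{G}$. The natural argument is induction on $|y(p)|$ over the vertices $p\in P\setminus\{r\}$, processed in increasing order of absolute $y$-coordinate (breaking ties is where care is needed). Suppose all vertices with strictly smaller absolute $y$-coordinate are already reachable, plus, among vertices tied with $p$ in absolute $y$-value, we also handle things correctly. If $p\in B(y)$, pick $q\in A(p,y)$: then $|y(q)| < |y(p)|$ and $q$ is on $p$'s side, so $\overrightarrow{qp}\in\overrightarrow{G}$, and $q$ is reachable by induction, hence so is $p$. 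If instead $p\in C(y)\setminus B(y)$, there is $q\in\mathrm{Adj}(p)$ with $y(q)=y(p)$ and $A(q,y)\neq\emptyset$; then $q\in B(y)$, so by the previous case $q$ is reachable, and since the edge $\overline{pq}$ is horizontal, $\overrightarrow{qp}\in\overrightarrow{G}$, so $p$ is reachable. The one subtlety is two vertices $p,q$ at the same absolute $y$-level that depend on each other: but $C(y)$-membership of $p$ is witnessed by a $q$ whose witness lies \emph{strictly} closer to the $x$ axis, so there is no circular dependence — $q$'s reachability is established via the strictly-smaller induction hypothesis, independently of $p$. This should be spelled out as: the reachability of each $q\in B(y)$ depends only on vertices with strictly smaller $|y|$, and then each $p\in C(y)$ depends on such a $q$.

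\textbf{Main obstacle.} The delicate point is the bookkeeping around horizontal edges and the tie-breaking in the induction of the reverse direction: one must make sure that the "witness chain" $p \to q \to (\text{smaller vertex})$ never loops back, and that the general-position assumption (no three points collinear, in particular at most two points per horizontal line) is used exactly where needed — namely to guarantee that a vertex in $C(y)$ connects to a single such $q$ on its level and that $q$'s in-edge genuinely decreases $|y|$. I would present the reverse direction carefully as a two-stage induction (first resolve all of $B(y)$ by induction on $|y|$, then attach each $C(y)$-vertex), which cleanly sidesteps the apparent circularity. The forward direction is routine given Theorem~\ref{thm:recDirected}.
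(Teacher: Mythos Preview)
Your proposal is correct and follows essentially the same approach as the paper. The paper's forward direction looks at the neighbor of $p$ along its $y$-monotone path to $r$ that is closest to $p$, exactly as you do (your framing via the directed graph $\overrightarrow{G}$ of Theorem~\ref{thm:recDirected} is a cosmetic variation), and the paper dispatches the reverse direction in one line as ``easily proved by induction on the number of points,'' which your two-stage induction on $|y|$ makes explicit. One small point worth tightening in your forward direction: when you set $q := w_j$ you need $q$ to be \emph{adjacent} to $p$ for the definition of $C(y)$ to apply, i.e.\ $j = k-1$; you clearly recognize this in your ``Main obstacle'' paragraph (general position forces at most two points per horizontal level), but it should be stated at the point of use rather than deferred.
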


\begin{proof}
We first prove the ($\Rightarrow$) direction.
We prove that each $p \in P\setminus\{r\}$ is included in exactly one of $B(y)$ and $C(y)$ (i.e.{} $P\setminus\{r\} = B(y) \cup C(y)$) which is equivalent to $|B(y)| +$ $|C(y)|$ = $|P| - 1$.
Each $p \in P\setminus\{r\}$ is connected with $r$ by a $y-$monotone path, hence there exists a point $q \in P\setminus \{p\}$ on the same side with $p$ w.r.t.{} the $x$ axis that is adjacent to $p$ and $|y(q)| \leq |y(p)|$.
If for such a point $q$ it holds that $|y(q)| < |y(p)|$, then $A(p,y) \neq \o$ and hence $p \in B(y)$.
Otherwise, there exists exactly one point $q \in P\setminus \{r,p\}$ that is adjacent to $p$ and on the same side with $p$ w.r.t.{} the $x$ axis and $|y(q)| = |y(p)|$.
Then, for $q$ it holds that $y(q) = y(p)$ and hence $p$ belongs to $C(y)$. 
The ($\Leftarrow$) direction can be easily proved by induction on the number of points. 
\end{proof}

\begin{remark}\label{rem:generCharYrootSetTerm}
If there exists a point $p \in P\setminus\{r\}$ with $y(p) = 0$ then  $G$ is rooted $y-$monotone if and only if (i) $p$ is connected with $r$ and (ii) $|B(y)| +$ $|C(y)|$ equals to $|P| - 2$.
\end{remark}

\begin{remark}\label{rem:extToZeroCoord}
If we know $B(y)$, $C(y)$ and whether there exists a point $p \in P\setminus \{r\}$ with $y(p) = 0$ connected to $r$, we can decide if $G$ is rooted $y-$monotone.
This implies a $O(|E|)$ time algorithm, different from the one given in Theorem~\ref{thm:recDirected}, which recognizes rooted $y-$monotone geometric graphs.
\end{remark}

\begin{obs}\label{obs:changeInSets}
Let $G = (P,E)$ be a rooted connected geometric graph. 
If we rotate an axis $y'$ counterclockwise, then $B(y')$, $C(y')$ and the points $p$ in $P\setminus \{r\}$ with $y'(p) = 0$, change only when the $y'$ axis reaches (or moves away from) a line that is perpendicular to an edge of $G$ or that is perpendicular to a line passing through the root $r$ and a point of $P\setminus\{r\}$.
\end{obs}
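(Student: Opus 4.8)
The plan is to reduce membership in each of the three tracked objects to a finite list of sign conditions and pairwise comparisons, and then to observe that each such condition is locally constant in the rotation angle except at the claimed critical positions. Fix a rotation parameter $\theta$, let $u(\theta)$ be the unit vector in the direction of the rotated axis $y'$, and note that $y'(p) = \langle p, u(\theta)\rangle$ for every $p \in P$. Since $r$ is the origin, $y'(r)=0$ identically, while for $p \in P\setminus\{r\}$ the map $\theta \mapsto \langle p, u(\theta)\rangle$ is sinusoidal and vanishes exactly when $u(\theta) \perp \overline{rp}$, i.e.\ when the $y'$ axis is perpendicular to the line through $r$ and $p$. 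Hence on any open interval of angles avoiding all such positions, the sign of $y'(p)$ is constant and nonzero for every $p \in P\setminus\{r\}$; in particular $\{p \in P\setminus\{r\}: y'(p)=0\}$ is empty there, which already settles the third object.

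Next I would unwind the definition of $A(p,y')$. For $p \in P\setminus\{r\}$ and a neighbour $q$ of $p$, we have $q \in A(p,y')$ iff (i) $y'(p)$ and $y'(q)$ have the same sign, and (ii) $|y'(q)| < |y'(p)|$. Condition (i) is decided by the signs above, so it is constant on any interval avoiding the positions where $y'$ is perpendicular to $\overline{rp}$ or to $\overline{rq}$. Assuming (i), condition (ii) is the comparison of $y'(p)$ and $y'(q)$ up to a common sign, hence is governed by the sign of $y'(p)-y'(q) = \langle p-q, u(\theta)\rangle$, which vanishes precisely when $y'$ is perpendicular to the edge $\overline{pq}$; by continuity this sign cannot change on an interval avoiding that position. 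If (i) fails then $q\notin A(p,y')$ regardless, and (i) itself can only change at a position where $y'$ is perpendicular to $\overline{rp}$ or $\overline{rq}$. Consequently, on any open interval of angles avoiding every position where $y'$ is perpendicular to an edge of $G$ or to a line through $r$ and a point of $P\setminus\{r\}$, each $A(p,y')$ is constant, and therefore so is $B(y') = \{p: A(p,y')\neq\emptyset\}$.

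For $C(y')$ I would use that $p \in C(y')$ iff $p\notin B(y')\cup\{r\}$ and $p$ has a neighbour $q$ with $y'(q)=y'(p)$ and $A(q,y')\neq\emptyset$. The first and third requirements are controlled by $B(y')$, just shown locally constant, while $y'(q)=y'(p)$ holds only when $y'$ is perpendicular to the edge $\overline{pq}$; thus on an interval avoiding all critical positions no such $q$ exists, $C(y')$ is empty, and so it too is constant there. Combining the three parts yields the statement. The only delicate point is the bookkeeping at the critical angles themselves — one must allow a set to change both when $y'$ reaches a critical line and when it moves away from it — and the general-position assumption is what guarantees that the comparisons appearing above are strict throughout each open interval between consecutive critical positions; beyond that, no idea is needed past the continuity argument.
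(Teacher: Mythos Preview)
Your argument is correct and complete: you reduce each membership predicate to sign conditions on the inner products $\langle p, u(\theta)\rangle$ and $\langle p-q, u(\theta)\rangle$, observe these are continuous and vanish exactly at the claimed critical positions, and conclude local constancy on the open intervals in between. The paper itself offers no proof of this observation, so there is no approach to compare against; your write-up simply supplies the details the authors left implicit. One minor remark: the strictness of the comparisons on open intervals follows directly from how you defined the critical set (as the zero locus of the relevant sinusoids), not from the general-position hypothesis, so that final comment is harmless but not really needed.
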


Using similar arguments to the ones employed for solving the rooted UMMST problem, we define a set of critical slopes and appropriate axes which we have to test in order to decide if the rooted connected geometric graph $G = (P,E)$ is rooted uniform monotone.
Let $\Theta = \{\theta \in [0,\pi): \theta$ is the slope of a line perpendicular to either an edge of $G$ or to a line passing through the root $r$ and another point of $P\}$.
$S[\Theta]$ is the sorted sequence that contains the slopes of $\Theta$ in increasing order, i.e.{} $S[\Theta] = (\theta_0, \theta_1, \ldots, \theta_{m-1})$, $\theta_i < \theta_{i+1}$, $i =0, 1, \ldots, m-2$ and $m < |E| +|P|$.
We define the critical set of slopes, $\Theta_{\text{critical}} = \{\theta_0$, $\theta_1$, \ldots, $\theta_{m-1}\} \cup \{$ $\frac{\theta_0 + \theta_1}{2}$, $\frac{\theta_1 + \theta_2}{2}$, \ldots, $\frac{\theta_{m-2} + \theta_{m-1}}{2}$, $\frac{\theta_{m-1} + \pi}{2}\}$. 
We now assign ``names'' to the axes with slope in $\Theta_{\text{critical}}$.
Let $y_{2i}$ be the axis with slope $\theta_i, 0\leq i \leq m-1$.
Moreover, let $y_{2i+1}$ be the axis of slope $\frac{\theta_i + \theta_{i+1}}{2}$, $0\leq i \leq m-2$ and $y_{2m-1}$ be the axis of slope $\frac{\theta_{2m-1} + \pi}{2}$.
In analogy to Lemma~\ref{lem:MonotoneNecessaryDirections} we obtain the next Lemma.

\begin{lemma}\label{lem:recArbNecDirections}
$G$ is rooted uniform monotone if and only if it is rooted $y'-$monotone for some $y'$ axis of slope in $\Theta_{\text{critical}}$.
\end{lemma}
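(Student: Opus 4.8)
The plan is to mimic the structure of the proof of Lemma~\ref{lem:MonotoneNecessaryDirections}, replacing the role played there by ``the rooted $y'$-MMST of $P$'' with ``the truth value of the predicate: $G$ is rooted $y'$-monotone''. The key enabling fact is Observation~\ref{obs:changeInSets}, which tells us that as the axis $y'$ rotates counterclockwise, the sets $B(y')$, $C(y')$ and the set of points of $P\setminus\{r\}$ lying on $y'$ change only when $y'$ crosses a slope in $\Theta$ (a slope perpendicular to an edge of $G$ or to a line through $r$ and another point of $P$). Combined with Lemma~\ref{lem:charYrootSetTerm} and Remarks~\ref{rem:generCharYrootSetTerm} and~\ref{rem:extToZeroCoord}, which express ``rooted $y'$-monotone'' purely in terms of $|B(y')|$, $|C(y')|$ and the on-axis points, this means the predicate ``$G$ is rooted $y'$-monotone'' is constant on each open interval $(\theta_i,\theta_{i+1})$ of slopes between consecutive elements of $S[\Theta]$ (and on the wrap-around interval $(\theta_{m-1},\pi)\cup[0,\theta_0)$), and can change only at the $\theta_i$ themselves.

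First I would establish the ($\Leftarrow$) direction, which is immediate: if $G$ is rooted $y'$-monotone for some axis $y'$ whose slope lies in $\Theta_{\text{critical}}$, then by definition $G$ is rooted monotone with a single direction of monotonicity, i.e.\ rooted uniform monotone. Then I would do the ($\Rightarrow$) direction. Suppose $G$ is rooted uniform monotone, witnessed by some axis $y'$ of slope $\theta'$. By Observation~\ref{obs:slopeLessThanPi}-type reasoning (axes of opposite direction give the same notion; see also the discussion preceding the Lemma, where the wrap-around interval is handled by $y_{2m-1}$), we may take $\theta'\in[0,\pi)$. If $\theta'=\theta_i$ for some $i$, then $y'$ has the same slope as $y_{2i}$, so $G$ is rooted $y_{2i}$-monotone and we are done. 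Otherwise $\theta'$ lies strictly between two consecutive slopes $\theta_i<\theta'<\theta_{i+1}$ (or in the wrap-around interval). By Observation~\ref{obs:changeInSets}, $B(\cdot)$, $C(\cdot)$ and the on-axis point set are the same for $y'$ as for the axis $y_{2i+1}$ of slope $\tfrac{\theta_i+\theta_{i+1}}{2}$ (respectively, $y_{2m-1}$ in the wrap-around case), since no slope of $\Theta$ is crossed between them. Hence, by the characterization of Lemma~\ref{lem:charYrootSetTerm} together with Remark~\ref{rem:generCharYrootSetTerm} (the case of a point with zero $y'$-coordinate) and Remark~\ref{rem:extToZeroCoord}, $G$ is rooted $y_{2i+1}$-monotone (resp.\ rooted $y_{2m-1}$-monotone). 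In all cases $G$ is rooted $y''$-monotone for some $y''$ with slope in $\Theta_{\text{critical}}$.

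The main point requiring care — the analogue of the subtlety flagged in Lemma~\ref{lem:MonotoneNecessaryDirections} via Figure~\ref{fig:consAxes} — is the justification that the axes of slope exactly $\theta_i$ genuinely must be included, and symmetrically that on the open intervals the predicate really is invariant. For the invariance on open intervals one must be slightly careful that Observation~\ref{obs:changeInSets} is applied correctly: within an open interval $(\theta_i,\theta_{i+1})$ no perpendicular-to-edge slope and no perpendicular-to-$r$-ray slope is met, so not only do $B(y')$ and $C(y')$ stay fixed but also ``which points have zero $y'$-coordinate'' stays fixed (it is empty on the open interval), so the three-piece data fed into Remarks~\ref{rem:generCharYrootSetTerm} and~\ref{rem:extToZeroCoord} is literally unchanged, and hence so is the conclusion. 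For the necessity of the boundary slopes $\theta_i$ one notes that exactly at $\theta_i$ some edge or $r$-ray becomes perpendicular to $y'$, which can make a point newly lie on the axis or newly acquire (or lose) an element of $A(\cdot,y')$, changing $|B|+|C|$; so the boundary behaviour can differ from both neighbouring open intervals, exactly as in Figure~\ref{fig:consAxes}, and therefore the $y_{2i}$ must be tested. This does not affect the statement — $\Theta_{\text{critical}}$ contains all the $\theta_i$ as well as one representative of each open interval — but it is the place where one has to argue rather than merely cite.
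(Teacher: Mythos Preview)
Your proposal is correct and follows essentially the same approach the paper indicates: the paper itself does not write out a proof of Lemma~\ref{lem:recArbNecDirections} but simply states it is obtained ``in analogy to Lemma~\ref{lem:MonotoneNecessaryDirections}'', and you have carried out precisely that analogy, replacing Observation~\ref{obs:changeInRotation} by Observation~\ref{obs:changeInSets} and the cost of the rooted $y'$-MMST by the truth value of the rooted $y'$-monotone predicate (characterized via Lemma~\ref{lem:charYrootSetTerm} and Remarks~\ref{rem:generCharYrootSetTerm}--\ref{rem:extToZeroCoord}). Your third paragraph on the necessity of the boundary slopes is commentary rather than a required step of the proof (since $\theta_i\in\Theta_{\text{critical}}$ already handles that case trivially), but it does no harm.
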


We now give a rotational sweep algorithm that tests whether a given rooted connected geometric graph $G = (P,E)$ is rooted uniform monotone.
Our rooted uniform monotone recognition algorithm rotates an axis $y'$ which initially coincides with $y_0$ until it becomes opposite to the $x$ axis.
Throughout this rotation, it checks if $G$ is rooted $y'-$monotone.
Taking into account Lemma~\ref{lem:recArbNecDirections}, our algorithm only needs to test if $G$ is rooted $y_i-$monotone for some $i = 0,1, \ldots, 2m-1$.
The pseudocode of our rooted uniform monotone recognition algorithm is given in Algorithm~\ref{alg:recRooted}.

\begin{algorithm}[htbp]
\caption{rooted uniform monotone recognition} \label{alg:recRooted}
\textbf{Input:} A rooted connected geometric graph $G = (P, E)$.

\textbf{Output:} The axis of monotonicity if it exists, otherwise, null.
\begin{algorithmic}[1]
\State axis $\gets $ null.
\State compute the axes $y_0, y_1, \ldots, y_{2m-1}$ with slopes in $\Theta_{\text{critical}}$.
\If{$G$ is rooted  $y_0-$monotone} axis $\gets y_0$. 
    \EndIf
\For{ $i \gets 1$ to $2m -1$}
    \If{$G$ is rooted $y_i-$monotone} axis $\gets y_i$. 
    \EndIf
\EndFor
\State \textbf{return} axis
\end{algorithmic}
\end{algorithm}

\begin{theorem}\label{thm:recRootedMonotone}
Let $G = (P,E)$ be a rooted connected geometric graph.
Then, we can decide in $O(|E|\log |P|)$ time if $G$ is rooted uniform monotone.
\end{theorem}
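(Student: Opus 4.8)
The plan is to argue that Algorithm~\ref{alg:recRooted} admits an implementation meeting the stated bound; its correctness is already secured by Lemma~\ref{lem:recArbNecDirections} together with the characterizations of Lemma~\ref{lem:charYrootSetTerm} and Remark~\ref{rem:generCharYrootSetTerm}. A naive implementation tests each of the $O(|E|+|P|)$ critical axes independently in $O(|E|)$ time, which is too slow. Instead I would, throughout the rotational sweep of $y'$ from $y_0$ to $y_{2m-1}$, maintain incrementally: for every $p\in P\setminus\{r\}$ a counter $a_p=|A(p,y')|$, the counter $b=|B(y')|=|\{p:a_p>0\}|$, the counter $c=|C(y')|$, and a boolean flag recording whether some point of $P\setminus\{r\}$ with $y'$-coordinate $0$ is adjacent to $r$. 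Given these four quantities each axis is tested in $O(\log|P|)$ time via Lemma~\ref{lem:charYrootSetTerm} (if the flag is false, check $b+c=|P|-1$) or Remark~\ref{rem:generCharYrootSetTerm} (if the flag is true, check that the $x'$-axis point is adjacent to $r$ and $b+c=|P|-2$).

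First I would compute $\Theta_{\text{critical}}$ and the associated axes $y_0,\dots,y_{2m-1}$ sorted by slope. Since $G$ is connected, $|P|-1\le|E|$, so $|\Theta_{\text{critical}}|=O(|E|)$, and as $|E|\le\binom{|P|}{2}$ we have $\log|E|=O(\log|P|)$; hence this step costs $O(|E|\log|P|)$. For the initial axis $y_0$ I would compute, in $O(|E|)$ total time by scanning adjacency lists, every $a_p$ and hence $B(y_0)$, $C(y_0)$ and the flag, and test rooted $y_0$-monotonicity. I would also build, for each even axis $y_{2i}$, the list of edges of $G$ perpendicular to $y_{2i}$; the total size of these lists is $O(|E|)$ since each edge has a unique perpendicular slope in $[0,\pi)$.

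Next I would run the sweep using Observation~\ref{obs:changeInSets}: between consecutive critical axes the only relevant changes are, at an even axis $y_{2i}$, (a) endpoints $p,q$ of an edge $\overline{pq}\perp y_{2i}$ momentarily share a $y'$-coordinate, so if $p,q$ lie on the same side of the $x'$-axis the order of $|y'(p)|,|y'(q)|$ reverses, and (b) a point $q$ with $\overline{rq}\perp y_{2i}$ crosses the $x'$-axis and switches sides. A type-(a) event is handled in $O(\log|P|)$ time by moving the currently-smaller endpoint into/out of the other endpoint's $A$-set, updating the two counters $a_\cdot$, and correcting $b$ on a zero/nonzero transition; summed over all edges this is $O(|E|\log|P|)$. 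A type-(b) event is handled by recomputing $A(q,y')$ from scratch in $O(\deg q)$ time and, for each neighbour $p$ of $q$, re-testing whether $q\in A(p,y')$ and fixing $a_p$, $b$ and the flag; since the rotating $x'$-axis sweeps an angle of $\pi$, each vertex switches sides only $O(1)$ times, so this totals $\sum_q O(\deg q\log|P|)=O(|E|\log|P|)$. Finally, at each even axis $y_{2i}$ I would compute $C(y_{2i})$ by iterating over its precomputed list of perpendicular edges and counting the endpoints $p\notin B(y_{2i})$ having a neighbour $q$ on that list with $a_q>0$; at odd axes $C$ is empty and no point lies on the $x'$-axis, so no work is needed there. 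Charging every unit of work to an edge or to one of the $O(1)$ side-switches of a vertex gives an overall running time of $O(|E|\log|P|)$.

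The step I expect to be the main obstacle is the bookkeeping for $C(y')$ and for points landing exactly on the rotating $x'$-axis. Unlike $b$, which changes by a bounded amount per event, $C(y')$ is a genuinely transient object — nonempty only at the even axes and affected simultaneously by horizontal edges and by side-switches — so care is needed that the value of $b$ used to form $C(y_{2i})$ is $|B(y_{2i})|$ rather than $|B(y_{2i-1})|$: the ``equal $y'$-coordinate'' decrements to the $a_\cdot$ counters must be applied before $C(y_{2i})$ is read off and undone (with the reversed relation reinstated) before advancing to $y_{2i+1}$. One must likewise invoke Remark~\ref{rem:generCharYrootSetTerm} precisely on the axes where some point of $P\setminus\{r\}$ has $y'$-coordinate $0$ and exclude such a point from both $B(y')$ and $C(y')$ there. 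Everything else is a routine amortization; note also (cf.\ Remark~\ref{rem:extToZeroCoord}) that this same machinery already yields the $O(|E|)$ single-axis recognition used implicitly at $y_0$.
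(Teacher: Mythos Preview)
Your proposal is correct and follows essentially the same approach as the paper's proof: both implement Algorithm~\ref{alg:recRooted} as a rotational sweep, maintain the sets $A(p,y')$ and $B(y')$ incrementally across the critical axes, recompute $C(y_{2i})$ from the precomputed list $L_i$ of edges perpendicular to $y_{2i}$, handle the two event types (an edge becoming perpendicular to $y'$, and a vertex crossing the $x'$-axis) with the same charging scheme, and invoke Lemma~\ref{lem:charYrootSetTerm}/Remark~\ref{rem:generCharYrootSetTerm} for the $O(1)$ test at each axis. The only notable difference is that you keep counters $a_p=|A(p,y')|$ whereas the paper stores the full sets $A(p)$ in balanced search trees; your variant is a slight simplification (each type-(a) update is in fact $O(1)$ rather than the $O(\log|P|)$ you and the paper both state), but the overall structure, case analysis, and amortization are the same.
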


\begin{proof}

By Lemma~\ref{lem:recArbNecDirections}, it is immediate that our rooted uniform monotone recognition algorithm 
 (Algorithm~\ref{alg:recRooted})
 decides if $G$ is rooted uniform monotone.
We now show that its time complexity is $O(|E|\log |P|)$.
Computing the axes $y_0, y_1$, \ldots, $y_{2m-1}$, with slope in $\Theta_{\text{critical}}$ can be done in $O(|E| \log |P|)$ time. 
Let $k_i$ be the number of pairs of points of $P$ connected by an edge perpendicular to $y_{2i}, 0 \leq i \leq m-1$.
Then, $\sum\limits_{i = 0}^{m-1} k_i = |E|$.
For each $i = 0,1$,\ldots,$m-1$, we construct a list $L_i$ containing the $k_i$ pairs of points of $P$ that are connected by an edge perpendicular to $y_{2i}$.
All $L_i$, $0 \leq i \leq m-1$, can be computed in $O(|E|\log |P|)$ total time.

Let $y_i$ be the last axis taken into account. 
Our algorithm maintains for each $p \in P\setminus\{r\}$ a data structure $A(p)$ which represents the set $A(p,y_i)$ (which is a subset of the Adj$(p)$).
$A(p)$ contains the indices of the points of $P$ that belong to $A(p,y_i)$.
$A(p)$ can be implemented by any data structure which supports insert, delete and retrieve operations in $O(\log |P|)$ time (e.g. a $2-3$ tree).
Our algorithm also maintains the data structure $B$ that represents the $B(y_i)$. 
In order to performing the insert and delete operations in $O(1)$ time, $B$ is implemented as an array of boolean with size $O(|P|)$. 

Let $p$ be a point in $P\setminus \{r\}$ then computing the $A(p)$ s.t.{} $A(p)$ equals to $A(p, y_0)$ takes $O(|\text{Adj}(p)|\log |P|)$ time since all the points $q$ adjacent to $p$ are checked and if $q$ belongs to the same half plane with $p$ w.r.t.{} the $x$ axis and $|y(q)| < |y(p)|$ then $q$ is inserted into $A(p)$ in $O(\log |P|)$ time.
Hence, computing all $A(p), p \in P\setminus \{r\}$, takes $O(|E|\log |P|)$ total time.
Given each $A(p), p \in P\setminus \{r\}$, computing $B$ s.t.{} $B$ equals to $B(y_0)$ takes $O(|P|)$ time, i.e.{} $O(1)$ time to check if $A(p) \neq \o$ for each $p \in P\setminus \{r\}$. 

When we move from the $y_{i-1}$ to the $y_i$ axis, the necessary updates we need to make s.t.{} for each $p \in P\setminus\{r\}$, $A(p)$ represents the set $A(p,y_i)$ and $B$ represents the $B(y_i)$ take:
\begin{enumerate}
\item $O(k_{\lfloor \frac{i}{2}\rfloor} \log |P|)$ time if $i$ is even and there is no point $p \in P\setminus \{r\}$ with the same projection with $r$ onto $y_i$
\item $O(k_{\lfloor \frac{i}{2}\rfloor} \log |P|)$ time if $i$ is odd and there is no point $p \in P\setminus \{r\}$ with the same projection with $r$ onto $y_{i-1}$
\item $O((|\text{Adj}(q)| + k_{\lfloor \frac{i}{2}\rfloor}) \log |P|)$ time if $i$ is even and $y_i$ is perpendicular to the line passing through $r$ and the point $q \in P\setminus \{r\}$
\item $O( (|\text{Adj}(q)| + k_{\lfloor \frac{i}{2}\rfloor}) \log |P|)$ time if $i$ is odd and $y_{i-1}$ is perpendicular to the line passing through $r$ and the point $q \in P\setminus \{r\}$

\end{enumerate}

Cases (1),(2),(3) and (4) are proved similarly to the Cases (1),(2),(3) and (4) in Theorem~\ref{thm:spanningTreeOneDir}.

We also note that given $B(y_{2i})$ and each $A(p,y_{2i}), p\in P\setminus \{r\}$, then computing $C(y_{2i}), 0 \leq i \leq m-1$, takes $O(k_i)$ time using the list $L_i$.
Furthermore, if we have both $B$ equal to $B(y_i)$ and $C$ equal to $C(y_i)$ and know if $y_i$ is perpendicular to some line passing through $r$ and another point $p$ of $P$ with $\overline{pr} \in E$, then we can test if $G$ is $y_i-$rooted-monotone in $O(1)$ time (see Lemma~\ref{lem:charYrootSetTerm} and Remark~\ref{rem:extToZeroCoord}). 

Since, $\sum\limits_{i = 0}^{m-1} k_i = |E|$ and $\sum\limits_{q \in P\setminus\{r\}} |\textrm{Adj}(q)| = O(|E|)$, the time complexity of the algorithm is $O(|E| \log |P|)$.
\end{proof}

We note that the approach we took for deciding if a given rooted connected geometric graph is rooted uniform monotone has some similarities with the approach employed in the third section of the article of Arkin et al.~\cite{ArkCM89}.

\section{Rooted Uniform 2D-monotone Graphs: Minimum Spanning Tree Production, and Recognition}\label{sec:2D-UMMST}

In this section we study monotonicity w.r.t.{} two perpendicular axes. 
Our treatment is analogous to that of Section~\ref{sec:y-uMMST} and Section~\ref{sec:generalRootMon}.

\subsection{The Rooted $xy-$MMST Problem}
We first study the case where the perpendicular axes are given, i.e.{} they are the $x$ and $y$ axes.

In analogy with Observation~\ref{obs:unnesEdge} we obtain the following Observation.
\begin{obs}\label{obs:xyUnnesEdges}
Let $P$ be a rooted point set, $G = (P,E)$ be a rooted $xy-$monotone spanning graph of $P$ and let $\overline{pp'} \in E$ where either (i) $p$ and $p'$ lie on different quadrants of the plane or (ii) $p$ and $p'$ lie on the same quadrant of the plane and $(|x(p) | - |x(p')|)(|y(p)| - |y(p')|) < 0$. 
Then, every path from the root $r$ to a point $q \in P\setminus\{r\}$ that contains $\overline{pp'}$ is not $xy-$monotone.
\end{obs}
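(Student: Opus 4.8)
The plan is to adapt the argument behind Observation~\ref{obs:unnesEdge} to the two-dimensional setting, where the obstruction is now a violation of $x$-monotonicity, of $y$-monotonicity, or of both simultaneously. Recall that a geometric path $W = (w_0, w_1, \ldots, w_l)$ is $xy$-monotone iff the sequence $(|x(w_0)|, |x(w_1)|, \ldots)$ is monotone \emph{and} the sequence $(|y(w_0)|, |y(w_1)|, \ldots)$ is monotone, after accounting for the fact that, since the path starts at the root $r = (0,0)$, it cannot cross either coordinate axis; so in fact along any $xy$-monotone path emanating from $r$, both $|x(\cdot)|$ and $|y(\cdot)|$ are nondecreasing and all vertices lie in a single closed quadrant. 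I would first make this last point precise: if $W$ starts at $r$ and is $xy$-monotone, then $W$ is, say, $x$-increasing and $y$-increasing in the appropriate signed sense, and hence every $w_j$ satisfies $\mathrm{sign}(x(w_j)) \in \{0, \sigma_x\}$ and $\mathrm{sign}(y(w_j)) \in \{0, \sigma_y\}$ for fixed signs $\sigma_x, \sigma_y$, and $|x(w_0)| \le |x(w_1)| \le \cdots$ and $|y(w_0)| \le |y(w_1)| \le \cdots$.

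Next I would take an arbitrary path $Q$ from $r$ to a point $q \in P \setminus \{r\}$ that uses the edge $\overline{pp'}$, say with $p$ appearing before $p'$ along $Q$ (the other case is symmetric). In case (i), $p$ and $p'$ lie in different quadrants, so either their $x$-coordinates have strictly opposite signs or their $y$-coordinates do; by the previous paragraph this is incompatible with $Q$ being $xy$-monotone, since all vertices of an $xy$-monotone path from $r$ share one quadrant. In case (ii), $p$ and $p'$ lie in the same quadrant but $(|x(p)| - |x(p')|)(|y(p)| - |y(p')|) < 0$, which means exactly that $|x(p)| < |x(p')|$ while $|y(p)| > |y(p')|$, or vice versa. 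Then one of the two sequences $|x(\cdot)|$, $|y(\cdot)|$ strictly increases from $p$ to $p'$ while the other strictly decreases; in either orientation of $Q$ this breaks monotonicity of one of the two coordinate sequences, so $Q$ is not $xy$-monotone. Since $Q$ was arbitrary, no path through $\overline{pp'}$ from $r$ to any $q$ is $xy$-monotone.

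I do not expect a real obstacle here; the only point needing care is the bookkeeping around the signed-versus-absolute-value formulation and the degenerate sub-cases where $p$, $p'$, or an intermediate vertex lies exactly on a coordinate axis — but the general-position assumption ($x(r) = y(r) = 0$ and no three points collinear, in particular no other point on an axis through $r$ unless forced) keeps these from causing trouble. The cleanest writeup is to phrase everything in terms of the two monotone sequences $|x(\cdot)|$ and $|y(\cdot)|$ along the path and observe that each of cases (i) and (ii) forces a strict increase-then-decrease (or the reverse) in at least one of them, exactly as in the one-dimensional Observation~\ref{obs:unnesEdge}.
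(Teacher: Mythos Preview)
Your argument is correct and is exactly the natural justification; the paper itself states Observation~\ref{obs:xyUnnesEdges} without proof, treating it (like Observation~\ref{obs:unnesEdge}) as self-evident. Your formalization---that any $xy$-monotone path from $r$ stays in a single closed quadrant with $|x(\cdot)|$ and $|y(\cdot)|$ nondecreasing, so the edge $\overline{pp'}$ forces a strict decrease in one of these sequences in either traversal direction---is precisely the intended reading.
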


\begin{corollary}\label{cor:pointsOnDifQuadr} 
Let $P$ be a rooted point set and $G^{\text{opt}}$ be the rooted $xy-$monotone minimum spanning graph of $P$. 
Let $p$ and $q$ be points of $P$ that do not lie on the same quadrant of the plane, then $G^{\text{opt}}$ does not contain the $\overline{pq}$.
\end{corollary}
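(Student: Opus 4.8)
The plan is to mimic exactly the derivation of Corollary~\ref{cor:unEdges} from Observation~\ref{obs:unnesEdge}, now using Observation~\ref{obs:xyUnnesEdges} in place of Observation~\ref{obs:unnesEdge}. I would argue by contradiction: suppose $\overline{pq} \in E$, where $G^{\text{opt}} = (P,E)$. Since $p$ and $q$ do not lie on the same quadrant, case~(i) of Observation~\ref{obs:xyUnnesEdges} applies, so no path from the root $r$ to any point of $P\setminus\{r\}$ that uses the edge $\overline{pq}$ is $xy$-monotone.

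Next I would observe that this makes $\overline{pq}$ redundant. Consider the graph $G' = (P, E\setminus\{\overline{pq}\})$. Every point of $P\setminus\{r\}$ is joined to $r$ in $G^{\text{opt}}$ by some $xy$-monotone path $W$; by the previous paragraph $W$ cannot contain $\overline{pq}$, hence $W$ survives in $G'$. Therefore $G'$ is still a rooted $xy$-monotone spanning graph of $P$. Since the point set is in general position (so $p \neq q$ and the segment $\overline{pq}$ has strictly positive Euclidean length), $\text{cost}(G') = \text{cost}(G^{\text{opt}}) - d(p,q) < \text{cost}(G^{\text{opt}})$, contradicting the minimality of $G^{\text{opt}}$. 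Hence $\overline{pq}\notin E$.

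I do not anticipate a real obstacle here; the only point that deserves a sentence of care is the claim that deleting an edge which lies on no root-to-vertex $xy$-monotone path leaves a valid rooted $xy$-monotone spanning graph, and that follows directly from the definition of such graphs (which only requires an $xy$-monotone path from $r$ to each other vertex). One could even state a general lemma — ``an optimal rooted $xy$-monotone minimum spanning graph contains no edge that participates in no $xy$-monotone root-to-vertex path'' — and then cite both case~(i) and case~(ii) of Observation~\ref{obs:xyUnnesEdges}, but for the corollary as stated only case~(i) is needed.
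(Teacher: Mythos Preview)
Your argument is correct and matches the paper's approach: the paper states the result as an immediate corollary of Observation~\ref{obs:xyUnnesEdges} (in direct analogy with the derivation of Corollary~\ref{cor:unEdges} from Observation~\ref{obs:unnesEdge}) without writing out the details, and your contradiction argument is exactly the intended justification. The only minor remark is that $d(p,q)>0$ follows simply from $p\neq q$; the general-position assumption is not needed for that step.
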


The previous Corollary implies that producing the rooted $xy-$monotone minimum spanning graph of $P$ can be split into four independent problems.
More specifically, obtaining the rooted $xy-$monotone minimum spanning graph of (i) $P_{x \leq 0,y \leq 0}$, (ii) $P_{x \geq 0,y \geq 0}$, (iii) $P_{x \leq 0, y \geq 0}$ and (iv) $P_{x \geq 0, y \leq 0}$.
This is stated more formally in the following Lemma.

\begin{lemma}\label{lem:splitQuadrants}
Let $P$ be a rooted point set and $G^{\text{opt}}$ be the rooted $xy-$monotone minimum spanning graph of $P$. 
Let $G_{x \leq 0,y \leq 0}^{\text{opt}}$, $G_{x \geq 0,y \geq 0}^{\text{opt}}$, $G_{x \leq 0, y \geq 0}^{\text{opt}}$ and $G_{x \geq 0, y \leq 0}^{\text{opt}}$ be the rooted $xy-$monotone minimum spanning graph of $P_{x \leq 0,y \leq 0}$, $P_{x \geq 0,y \geq 0}$, $P_{x \leq 0, y \geq 0}$ and $P_{x \geq 0, y \leq 0}$, respectively.
Then, $G^{\text{opt}}$ is $G_{x \leq 0,y \leq 0}^{\text{opt}} \cup G_{x \geq 0,y \geq 0}^{\text{opt}}\cup G_{x \leq 0, y \geq 0}^{\text{opt}} \cup G_{x \geq 0, y \leq 0}^{\text{opt}}$.
\end{lemma}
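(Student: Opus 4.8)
The plan is to mirror the proof structure of Lemma~\ref{lem:splitUpDown}, deriving the statement as a direct consequence of Corollary~\ref{cor:pointsOnDifQuadr}. The essential point is that Corollary~\ref{cor:pointsOnDifQuadr} guarantees that $G^{\text{opt}}$ has no edge joining two points lying in different quadrants; consequently the edge set of $G^{\text{opt}}$ is partitioned according to which quadrant both endpoints lie in, and each part induces a subgraph on the corresponding quadrant subset of $P$ (together with the root $r$, which is common to all four subsets since $x(r)=y(r)=0$).

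First I would fix notation: write $Q_1 = P_{x\geq 0,y\geq 0}$, $Q_2 = P_{x\leq 0,y\geq 0}$, $Q_3 = P_{x\leq 0,y\leq 0}$, $Q_4 = P_{x\geq 0,y\leq 0}$, and let $H_j$ denote the subgraph of $G^{\text{opt}}$ induced by $Q_j$. By Corollary~\ref{cor:pointsOnDifQuadr} every edge of $G^{\text{opt}}$ lies entirely within some $Q_j$, so $E(G^{\text{opt}}) = \bigcup_{j=1}^4 E(H_j)$, and hence $\text{cost}(G^{\text{opt}}) = \sum_{j=1}^4 \text{cost}(H_j)$. Next I would argue that each $H_j$ is a rooted $xy$-monotone spanning graph of $Q_j$: it spans $Q_j$ because $G^{\text{opt}}$ spans $P$ and, by Corollary~\ref{cor:pointsOnDifQuadr}, any $xy$-monotone path in $G^{\text{opt}}$ from $r$ to a point $p\in Q_j$ can use only edges within $Q_j$ (a path leaving the quadrant could never return while staying $xy$-monotone, by Observation~\ref{obs:xyUnnesEdges}), so that path lies in $H_j$. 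Therefore $\text{cost}(H_j)\geq \text{cost}(G^{\text{opt}}_{Q_j})$ for each $j$, giving $\text{cost}(G^{\text{opt}})\geq \sum_{j=1}^4 \text{cost}(G^{\text{opt}}_{Q_j})$.

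For the reverse inequality I would observe that the union $G' := G^{\text{opt}}_{x\leq 0,y\leq 0}\cup G^{\text{opt}}_{x\geq 0,y\geq 0}\cup G^{\text{opt}}_{x\leq 0,y\geq 0}\cup G^{\text{opt}}_{x\geq 0,y\leq 0}$ is itself a rooted $xy$-monotone spanning graph of $P$: every point of $P$ lies in at least one $Q_j$ and is connected to $r$ by an $xy$-monotone path inside the corresponding optimal quadrant graph, hence inside $G'$; and $\text{cost}(G')\leq \sum_{j=1}^4 \text{cost}(G^{\text{opt}}_{Q_j})$, with equality unless the quadrant graphs share edges, which they cannot beyond edges incident to $r$ — and even those contribute to the cost only once since $G'$ is taken as a union of graphs (so coincident edges are identified). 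Combining with the optimality of $G^{\text{opt}}$ yields $\text{cost}(G^{\text{opt}})\leq \text{cost}(G') \leq \sum_{j=1}^4 \text{cost}(G^{\text{opt}}_{Q_j})$, so all these quantities are equal and $G^{\text{opt}} = G'$ (using that the optimum is essentially unique given general position, or simply that $G^{\text{opt}}$ and $G'$ have the same edge set by the partition argument above).

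The only subtlety — and the step I would be most careful about — is the treatment of the root $r$, which belongs to all four quadrant subsets simultaneously; I must make sure that edges incident to $r$ are accounted for consistently and not double-counted when splitting and re-uniting, and that a quadrant graph $G^{\text{opt}}_{Q_j}$ is well-defined even when $Q_j = \{r\}$ (in which case it is the trivial graph). Everything else is a routine ``split the cost, each piece is feasible, the union is feasible'' argument of exactly the type used for Lemma~\ref{lem:splitUpDown}, so I would keep the write-up short and simply cite Corollary~\ref{cor:pointsOnDifQuadr} and Observation~\ref{obs:xyUnnesEdges} as the substantive inputs.
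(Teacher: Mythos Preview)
Your proposal is correct and follows exactly the approach the paper intends: the paper does not spell out a proof of this lemma at all, merely stating that it follows from Corollary~\ref{cor:pointsOnDifQuadr} in direct analogy with how Lemma~\ref{lem:splitUpDown} follows from Corollary~\ref{cor:unEdges}. Your write-up fleshes out this implication carefully (the quadrant-wise decomposition of the edge set, feasibility of each induced piece, feasibility of the union, and the handling of the shared root), which is more detail than the paper provides but entirely in the same spirit.
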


Let $P$ be a rooted point set confined to one quadrant of the plane.
Then, we define $S[P,y,x]$ to be the sequence that consists of the points of $P$, such that the points in $S[P,y,x]$ are ordered w.r.t.{} their absolute $y$ coordinates and if two points have the same absolute $y$ coordinate, then they are ordered w.r.t.{} their absolute $x$ coordinates.
More formally, $S[P,y,x] = (r= p_0,p_1,p_2, \ldots, p_n)$ such that $|y(p_0)| \leq |y(p_1)| \leq |y(p_2)| \leq \ldots \leq |y(p_n)|$ and $|y(p_i)| = |y(p_{i+1})|$ implies that $ |x(p_i)| < |x(p_{i+1})|$ and $P = \{p_0,p_1,p_2, \ldots, p_n\}$.
Using similar arguments with the proof of Lemma~\ref{lem:yMonotoneChar} we obtain a characterization of the
rooted $xy-$monotone minimum spanning graph of $P$.

\begin{lemma}\label{lem:minCostxy2DChar}
Let $P$ be a rooted point set confined to one quadrant of the plane, $S[P,y,x]$ $=$ $(r=$ $p_0$, $p_1$, $p_2$, \ldots, $p_n)$ and $G$ be a geometric graph with vertex set $P$.
Then, $G$ is the rooted $xy-$monotone minimum spanning graph of $P$ if and only if (i) $p_n$ is connected only with its closest point with absolute $x$ coordinate less than or equal to $|x(p_n)|$ from $\{p_0$, $p_1$, \ldots, $p_{n-1}\}$, i.e.{} the point $p_j \in P\setminus \{p_n\}$ such that $|x(p_j)| \leq |x(p_n)|$ and $d(p_n,p_j) = d(p_n, P_{|x| \leq |x(p_n)|}\setminus \{p_n\})$, and (ii) $G\setminus\{p_n\}$ is the rooted $xy-$monotone minimum spanning graph of $P\setminus\{p_n\}$.
\end{lemma}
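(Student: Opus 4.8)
The plan is to mirror the proof of Lemma~\ref{lem:yMonotoneChar}, replacing the ``nearest neighbor with smaller absolute $y$ coordinate'' condition by the ``nearest neighbor with absolute $x$ coordinate at most $|x(p_n)|$'' condition, and exploiting that in a single quadrant the $xy$-monotonicity of a path from $r$ is equivalent to the sequence of absolute $x$ coordinates and the sequence of absolute $y$ coordinates both being nondecreasing along the path (Observation~\ref{obs:xyUnnesEdges}). I would first record this equivalence explicitly: a path $(r = q_0, q_1, \ldots, q_\ell)$ inside one quadrant is $xy$-monotone from the root if and only if $|x(q_0)| \le |x(q_1)| \le \cdots \le |x(q_\ell)|$ and $|y(q_0)| \le |y(q_1)| \le \cdots \le |y(q_\ell)|$. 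Note that, because the point set is in general position and confined to one quadrant, the ordering $S[P,y,x]$ is well defined and $p_n$ is a point of maximum absolute $y$ coordinate, hence a ``sink'' of any $xy$-monotone-from-root spanning graph: it has no neighbour strictly farther from the $x$-axis, so in particular every edge at $p_n$ goes to a point of $P_{|x| \le |x(p_n)|} \setminus \{p_n\}$ (an edge to a point with larger absolute $x$ coordinate but smaller-or-equal absolute $y$ coordinate is forbidden by Observation~\ref{obs:xyUnnesEdges}(ii), and an edge to a point with larger absolute $y$ coordinate cannot lie on a monotone path to $r$ through either endpoint).

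For the ($\Leftarrow$) direction I would argue exactly as in Lemma~\ref{lem:yMonotoneChar}. First, $G$ spans $P$ and is rooted $xy$-monotone: by hypothesis (ii), $G \setminus \{p_n\}$ is a rooted $xy$-monotone spanning graph of $P \setminus \{p_n\}$, and $p_n$ is attached to a point $p_j$ with $|x(p_j)| \le |x(p_n)|$ and $|y(p_j)| < |y(p_n)|$ (the strict inequality coming from general position, since $p_j \ne p_n$ and the $y$-coordinates would otherwise force $|x(p_j)| < |x(p_n)|$ only, which is still fine—either way $|x(p_j)| \le |x(p_n)|$ and $|y(p_j)| \le |y(p_n)|$), so appending $p_n$ to the monotone path from $r$ to $p_j$ in $G \setminus \{p_n\}$ gives a monotone path from $r$ to $p_n$. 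Then I compare costs with the optimum $G^{\text{opt}}$: in $G^{\text{opt}}$, $p_n$ must be incident to some edge, and by the sink observation above that edge has cost at least $d(p_n, P_{|x| \le |x(p_n)|} \setminus \{p_n\})$, which is exactly the cost of the unique edge at $p_n$ in $G$; moreover $G^{\text{opt}} \setminus \{p_n\}$ is a rooted $xy$-monotone spanning graph of $P \setminus \{p_n\}$, hence has cost at least that of $G \setminus \{p_n\}$, which by hypothesis (ii) equals the cost of the rooted $xy$-monotone minimum spanning graph of $P \setminus \{p_n\}$. Adding these two bounds gives $\mathrm{cost}(G^{\text{opt}}) \ge \mathrm{cost}(G)$, so $G$ is optimal.

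For the ($\Rightarrow$) direction I would argue by contradiction, again following Lemma~\ref{lem:yMonotoneChar}. Suppose $G$ is the rooted $xy$-monotone minimum spanning graph of $P$. If $p_n$ were adjacent to two or more points, or to a single point other than its closest point in $P_{|x| \le |x(p_n)|} \setminus \{p_n\}$, then (using the sink observation, which guarantees every edge at $p_n$ goes to $P_{|x| \le |x(p_n)|}\setminus\{p_n\}$ and that $p_n$ is a leaf-like endpoint whose removal keeps the rest spanning and monotone) we could delete all edges at $p_n$ and reconnect $p_n$ to its closest such point, obtaining a rooted $xy$-monotone spanning graph of strictly smaller cost—contradiction. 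Hence (i) holds. For (ii): $G \setminus \{p_n\}$ is a rooted $xy$-monotone spanning graph of $P \setminus \{p_n\}$; if it were not of minimum cost, we could replace it by the minimum one and re-attach $p_n$ as before, contradicting optimality of $G$.

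The main obstacle, and the only place the argument genuinely differs from the $y$-monotone case, is justifying the ``sink'' claim: that in \emph{any} rooted $xy$-monotone spanning graph of a one-quadrant point set, the point $p_n$ of largest absolute $y$ coordinate is incident only to edges landing in $P_{|x|\le |x(p_n)|}\setminus\{p_n\}$, so that removing $p_n$ leaves a graph that still spans $P\setminus\{p_n\}$ and is still rooted $xy$-monotone. I expect this to follow cleanly from Observation~\ref{obs:xyUnnesEdges} together with general position (which rules out ties in absolute $y$ among points that also tie in absolute $x$, and more importantly lets me treat the weak inequalities in the definition of $S[P,y,x]$ consistently), but it is worth spelling out carefully, since it is what makes the greedy ``peel off the extreme point'' induction valid in the two-coordinate setting.
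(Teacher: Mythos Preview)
Your proposal is correct and follows exactly the route the paper intends: the paper gives no separate proof of this lemma but simply says it is obtained ``using similar arguments with the proof of Lemma~\ref{lem:yMonotoneChar}'', and that is precisely what you do. You also correctly isolate the one genuinely new ingredient compared to the $y$-monotone case, namely the ``sink'' property of $p_n$ (that in a single quadrant no $xy$-monotone path from $r$ to another point can pass through $p_n$, and every usable edge at $p_n$ lands in $P_{|x|\le |x(p_n)|}\setminus\{p_n\}$), which is exactly what makes the peel-off induction and the cost comparison with $G^{\text{opt}}\setminus\{p_n\}$ go through.
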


Lemma~\ref{lem:splitQuadrants} and Lemma~\ref{lem:minCostxy2DChar} lead to the next Corollary.

\begin{corollary}
The rooted $xy-$monotone minimum spanning graph of a rooted point set $P$ is a geometric tree.
\end{corollary}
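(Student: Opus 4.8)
The plan is to mirror the derivation of Corollary~\ref{cor:y-tree} from Lemma~\ref{lem:splitUpDown} and Lemma~\ref{lem:yMonotoneChar}. First I would invoke Lemma~\ref{lem:splitQuadrants} to reduce the claim to the single-quadrant case: since the rooted $xy-$monotone minimum spanning graph $G^{\text{opt}}$ of $P$ is the union of the rooted $xy-$monotone minimum spanning graphs of the four quadrant-restricted point sets $P_{x\le 0,y\le 0}$, $P_{x\ge 0,y\ge 0}$, $P_{x\le 0,y\ge 0}$ and $P_{x\ge 0,y\le 0}$ --- each of which contains the root $r$, and which pairwise share only $r$ (by Corollary~\ref{cor:pointsOnDifQuadr} no edge of $G^{\text{opt}}$ joins two different quadrants) --- it suffices to show that each of these four subgraphs is a tree; a union of trees that pairwise intersect in the single common vertex $r$ is again a tree.

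Next I would prove that the rooted $xy-$monotone minimum spanning graph of a rooted point set $P$ confined to one quadrant is a tree, by induction on $n = |P| - 1$. For $n = 0$ the graph consists of the single vertex $r$ and is trivially a tree. For $n \ge 1$, let $S[P,y,x] = (r = p_0, p_1, \ldots, p_n)$ and let $G$ be the rooted $xy-$monotone minimum spanning graph of $P$. By Lemma~\ref{lem:minCostxy2DChar}, $p_n$ is incident in $G$ to exactly one edge, joining it to its designated parent $p_j \in \{p_0, \ldots, p_{n-1}\}$, and $G\setminus\{p_n\}$ is the rooted $xy-$monotone minimum spanning graph of $P\setminus\{p_n\}$, which by the induction hypothesis is a tree on $n$ vertices with $n-1$ edges. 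Hence $G$ has $n+1$ vertices and $n$ edges, is connected (every vertex of $G\setminus\{p_n\}$ reaches $r$, and $p_n$ reaches $r$ through $p_j$), and is therefore a tree.

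The only point requiring care is the gluing step: one must check that combining the four quadrant trees at $r$ creates no cycle. This is immediate, because any cycle would use only edges of $G^{\text{opt}}$, and by Corollary~\ref{cor:pointsOnDifQuadr} every such edge lies inside a single quadrant; since the four quadrant subgraphs meet only at $r$, a cycle would be confined to one of them, contradicting that each is a tree. Connectivity of the union is likewise clear since each quadrant tree contains $r$. This completes the argument, and I expect no substantive obstacle beyond this bookkeeping.
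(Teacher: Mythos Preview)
Your proposal is correct and follows exactly the route the paper itself indicates: the paper's proof consists solely of the sentence that Lemma~\ref{lem:splitQuadrants} and Lemma~\ref{lem:minCostxy2DChar} yield the corollary, and you have simply spelled out the obvious induction and gluing that this entails. The only quibble is that under the paper's general-position assumption a non-root point lying on a coordinate axis could belong to two quadrant sets, so ``pairwise share only $r$'' is not literally true; but such a point's parent in either quadrant tree is forced to be $r$, so the union is still a tree and your argument goes through unchanged.
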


Let $P$ be a rooted point set confined to one quadrant of the plane and $S[P,y,x]$ $=$ $(r=$ $p_0$, $p_1$, $p_2$, \ldots, $p_n)$.
For each $i = 1, 2, \ldots, n$, we call the closest point with absolute $x$ coordinate in the range $[0, |x(p_i)|]$ to $p_i$ from $\{p_0$, $p_1$, \ldots, $p_{i-1}\}$, the parent of $p_i$ and denote it as par$(p_i)$.
Equivalently, par$(p_i) = p_j$ if and only if $j < i$, $|x(p_j)| \leq |x(p_i)|$ and $d(p_i, p_j)$ $=$ $d(p_i$, $\{p_0$, $p_1$,\ldots, $p_{i-1}$$\}_{|x| \leq |x(p_i)|})$.
Then, Lemma~\ref{lem:minCostxy2DChar} yields the following Corollary.

\begin{corollary}\label{cor:xyEdgesChar}
The edges of the rooted $xy-$MMST of $P$ are exactly the line segments $\overline{\text{par}(p_i)p_i}$, $i = 1$, $2$, \ldots, $n$. 
\end{corollary}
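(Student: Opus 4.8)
The plan is to derive Corollary~\ref{cor:xyEdgesChar} directly from Lemma~\ref{lem:minCostxy2DChar} by the same inductive unwinding that takes us from Lemma~\ref{lem:yMonotoneChar} to Corollary~\ref{cor:parentCharact}. First I would observe that, by Lemma~\ref{lem:splitQuadrants}, it suffices to treat a rooted point set $P$ confined to a single quadrant, since the edge set of the rooted $xy$-MMST of an arbitrary $P$ is the disjoint union of the edge sets of the rooted $xy$-MMSTs of the four quadrant-restricted subsets, and the claimed description is compatible with such a union (the parent of a point, as defined, never lies in a different quadrant). So fix $P$ in one quadrant with $S[P,y,x] = (r = p_0, p_1, \ldots, p_n)$.

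The argument proceeds by induction on $n = |P| - 1$. For the base case $n = 0$ there are no non-root points and the tree has no edges, matching the empty list of segments $\overline{\mathrm{par}(p_i)p_i}$. For the inductive step, let $T$ be the rooted $xy$-monotone minimum spanning graph of $P$ (which is a tree by the preceding Corollary). By part (i) of Lemma~\ref{lem:minCostxy2DChar}, the only edge of $T$ incident to $p_n$ is $\overline{p_j p_n}$ where $p_j$ is the closest point to $p_n$ among $\{p_0, \ldots, p_{n-1}\}$ with $|x(p_j)| \le |x(p_n)|$; by the definition of $\mathrm{par}$, this is exactly the segment $\overline{\mathrm{par}(p_n)p_n}$. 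By part (ii), $T \setminus \{p_n\}$ is the rooted $xy$-monotone minimum spanning graph of $P \setminus \{p_n\}$, whose associated sequence is $S[P \setminus \{p_n\}, y, x] = (r = p_0, \ldots, p_{n-1})$ (removing the last element of $S[P,y,x]$ yields precisely the ordering for the smaller set, since the ordering rule is defined coordinate-wise and is not affected by deletion of the $|y|$-largest point). The induction hypothesis then gives that the edges of $T \setminus \{p_n\}$ are exactly $\overline{\mathrm{par}(p_i)p_i}$ for $i = 1, \ldots, n-1$, where the parents are computed relative to the truncated sequence; but for $i \le n-1$ the set $\{p_0, \ldots, p_{i-1}\}$ is the same whether we think of it inside $P$ or inside $P \setminus \{p_n\}$, so these parents agree with the parents defined for $P$. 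Adding back the single edge $\overline{\mathrm{par}(p_n)p_n}$ completes the list.

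The one point that deserves a little care—and the only place where a subtlety could hide—is the claim that truncating $S[P,y,x]$ at its last entry gives the sorted sequence for $P \setminus \{p_n\}$, i.e.\ that $p_n$ really is the last point in the $S[\,\cdot\,,y,x]$ order of $P$. This is immediate from the definition: $p_n$ has the largest absolute $y$-coordinate, and among points sharing that absolute $y$-coordinate it has the largest absolute $x$-coordinate, so no reordering occurs upon its removal; moreover general position guarantees the tie-breaking by $|x|$ is strict, so the order is well-defined. I do not anticipate any genuine obstacle: the statement is a clean corollary, and all the real work was done in Lemma~\ref{lem:minCostxy2DChar}. The proof is essentially a transcription of the proof that produced Corollary~\ref{cor:parentCharact}, with ``closest point from the prefix'' replaced by ``closest point from the prefix with absolute $x$-coordinate at most $|x(p_i)|$''.
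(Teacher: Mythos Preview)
Your proof is correct and is exactly the intended unwinding of the recursive characterization in Lemma~\ref{lem:minCostxy2DChar}; the paper does not spell out a proof but simply says the corollary is yielded by that lemma, and your induction on $n$ is the standard way to make this explicit. One minor remark: the corollary is already stated for a point set $P$ confined to a single quadrant, so your preliminary appeal to Lemma~\ref{lem:splitQuadrants} is unnecessary (though harmless).
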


The previous Corollary implies a $O(|P|^2)$ time algorithm that produces the rooted $xy-$MMST of $P$. 
However, using the semi-dynamic data structure for closest point with attribute value in specified range queries that was implicitly produced by Bentley~\cite{Ben79}, the time complexity of our rooted $xy-$MMST algorithm is reduced to $O(|P| \cdot \log ^3 |P|)$.

\begin{theorem}\label{thm:minSpanningXYrooted}
The rooted $xy-$MMST of a rooted point set $P$ can be computed in $O(|P| \cdot \log ^3 |P|)$ time.
\end{theorem}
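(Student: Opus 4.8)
The plan is to mirror the structure of the proof of Theorem~\ref{thm:yRootedMonotoneMain}, replacing the role of the plain closest-point data structure (Fact~\ref{fact:BentleyStructure}) with the range-restricted closest-point data structure (Fact~\ref{fact:rangeClosest}), and the role of the $y$-sorting with the $S[P,y,x]$ ordering. First I would invoke Corollary~\ref{cor:pointsOnDifQuadr} and Lemma~\ref{lem:splitQuadrants} to reduce the problem to computing the rooted $xy$-MMST of each of the four quadrant-restricted subsets $P_{x\le 0,y\le 0}$, $P_{x\ge 0,y\ge 0}$, $P_{x\le 0,y\ge 0}$, $P_{x\ge 0,y\le 0}$ independently, and then taking the union; this clearly costs at most a constant factor more than solving one such subproblem, so it suffices to handle a rooted point set $P$ confined to a single quadrant.

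For a single quadrant, the key structural fact is Corollary~\ref{cor:xyEdgesChar}: the edges of the tree are exactly the segments $\overline{\mathrm{par}(p_i)p_i}$ for $i=1,\dots,n$, where $\mathrm{par}(p_i)$ is the closest point to $p_i$ among $\{p_0,\dots,p_{i-1}\}$ whose absolute $x$-coordinate is at most $|x(p_i)|$. I would first sort the points of $P$ to obtain $S[P,y,x]=(r=p_0,p_1,\dots,p_n)$ in $O(|P|\log|P|)$ time; note that here the tie-breaking rule is by $|x|$-coordinate, which (unlike the $y$-monotone case) is fully determined by the coordinates and requires no distance computations, so there is no need for the swap gymnastics of Algorithm~\ref{alg:GIyMonotone}. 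Then I would process the points in this order, maintaining the semi-dynamic data structure of Fact~\ref{fact:rangeClosest} in which each already-processed point $p_j$ is stored with attribute value $|x(p_j)|$. For each new point $p_i$, I query for the closest point to $p_i$ among those currently in the structure whose attribute value lies in the range $[0,|x(p_i)|]$ — this is exactly $d(p_i,\{p_0,\dots,p_{i-1}\}_{|x|\le |x(p_i)|})$ and identifies $\mathrm{par}(p_i)$ — add the edge $\overline{\mathrm{par}(p_i)p_i}$, and then insert $p_i$ into the structure.

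The correctness of this procedure is immediate from Corollary~\ref{cor:xyEdgesChar}, so the only remaining content is the running-time bookkeeping. By Fact~\ref{fact:rangeClosest}, inserting all $O(|P|)$ points costs $O(|P|\log^3|P|)$ total and each of the $O(|P|)$ queries costs $O(\log^3|P|)$, for $O(|P|\log^3|P|)$ overall; the sorting and the quadrant-splitting are lower-order. Summing over the four quadrants keeps the bound at $O(|P|\log^3|P|)$. The one point that needs a word of care — and the mild "obstacle," though it is routine — is that Fact~\ref{fact:rangeClosest} as stated returns the closest point from the subset with attribute in a given range, and I am using the half-open range $[0,|x(p_i)|]$; since $P$ is in general position and confined to one quadrant, all attribute values $|x(p_j)|$ are distinct nonnegative reals, so "attribute value in $[0,|x(p_i)|]$" coincides exactly with the condition $|x(p_j)|\le |x(p_i)|$ used to define $\mathrm{par}(p_i)$, and the query semantics match the definition precisely. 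With that observed, the theorem follows.
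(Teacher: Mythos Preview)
Your proposal is correct and follows essentially the same approach as the paper: split into the four quadrants via Lemma~\ref{lem:splitQuadrants}, then within each quadrant process the points in the $S[P,y,x]$ order and compute each $\mathrm{par}(p_i)$ by a range-restricted closest-point query using Fact~\ref{fact:rangeClosest}, for $O(|P|\log^3|P|)$ total. One small inaccuracy: general position (no three points collinear) does \emph{not} guarantee that the $|x|$-values are all distinct, so your justification for the query-semantics match is off; the match holds simply because a range query on $[0,|x(p_i)|]$ is inclusive at the right endpoint, which is exactly the condition $|x(p_j)|\le|x(p_i)|$ used in the definition of $\mathrm{par}(p_i)$.
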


\begin{proof}
We initially construct $P_{x \leq 0,y \leq 0}$, $P_{x \geq 0,y \geq 0}$, $P_{x \leq 0, y \geq 0}$ and $P_{x \geq 0, y \leq 0}$.
Then, we apply our rooted $xy-$MMST algorithm on $P_{x \leq 0, y \leq 0}$, $P_{x \geq 0, y \geq 0}$, $P_{x \leq 0, y \geq 0}$ and $P_{x \geq 0, y \leq 0}$ and we obtain the trees $T_{x \leq 0, y \leq 0}^{\text{opt}}$, $T_{x \geq 0,y \geq 0}^{\text{opt}}$, $T_{x \leq 0, y \geq 0}^{\text{opt}}$ and $T_{x \geq 0, y \leq 0}^{\text{opt}}$, respectively. 
Using Fact~\ref{fact:rangeClosest} and since $O(|P|)$ insertions and $O(|P|)$ closest point with attribute value in specified range queries are performed, computing $T_{x \leq 0, y \leq 0}^{\text{opt}}$, $T_{x \geq 0,y \geq 0}^{\text{opt}}$, $T_{x \leq 0, y \geq 0}^{\text{opt}}$ and $T_{x \geq 0, y \leq 0}^{\text{opt}}$ takes $O(|P| \cdot \log ^3 |P|)$ time. 
Finally, we return the union of $T_{x \leq 0, y \leq 0}^{\text{opt}}$, $T_{x \geq 0,y \geq 0}^{\text{opt}}$, $T_{x \leq 0, y \geq 0}^{\text{opt}}$ and $T_{x \geq 0, y \leq 0}^{\text{opt}}$ which by Lemma~\ref{lem:splitQuadrants} is the rooted $xy-$MMST of $P$. 
\end{proof}

Using the same reduction that we used in Theorem~\ref{thm:lowerBoundyMonotone}, i.e.{} the reduction from sorting given by Shamos~\cite{Sha78}, we obtain a lower bound for the time complexity of every algorithm which solves the rooted $xy-$MMST (or the rooted 2D-UMMST or the rooted 2D-MMST) problem. 

\begin{theorem}
Producing the rooted $xy-$MMST (or the rooted 2D-UMMST or the rooted 2D-MMST) of a rooted point set $P$ requires $\Omega (|P|\log |P|)$ time.
\end{theorem}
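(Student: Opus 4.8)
The plan is to reuse, almost verbatim, the reduction from sorting employed in the proof of Theorem~\ref{thm:lowerBoundyMonotone}. Given a sequence $(a_1,a_2,\ldots,a_n)$ of nonnegative integers (which we may assume pairwise distinct, since sorting $n$ distinct reals already requires $\Omega(n\log n)$ time), I would form in $O(n)$ time the rooted point set $P=\{r=(0,0),(a_1,a_1^2),(a_2,a_2^2),\ldots,(a_n,a_n^2)\}$, all of whose points lie on the convex curve $y=x^2$ inside the quadrant $\mathbb{R}^2_{x\geq 0,y\geq 0}$; in particular $P$ is in general position, as no three distinct points of the parabola are collinear. Let $G^{\ast}$ be the polygonal path $r\,p_1'\,p_2'\cdots p_n'$ that visits the points of $P\setminus\{r\}$ in increasing order of their $x$-coordinates, i.e.\ the tree with edge set $\{\overline{p_{i-1}'p_i'}:i=1,\ldots,n\}$ (with $p_0'=r$). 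The first step is to observe that $G^{\ast}$ is $xy$-monotone: along it both coordinates are strictly increasing, so $G^{\ast}$ is simultaneously a rooted $xy$-monotone, a rooted uniform $2$D-monotone and a rooted $2$D-monotone spanning graph of $P$.

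Next I would argue that $G^{\ast}$ is the \emph{unique} optimal solution of each of the three problems on this instance. The quickest route is by inclusion of feasible sets: every rooted $xy$-monotone spanning graph of $P$ is rooted $y$-monotone, every rooted uniform $2$D-monotone one is rooted uniform monotone, and every rooted $2$D-monotone one is rooted monotone. By Theorem~\ref{thm:lowerBoundyMonotone} together with Corollary~\ref{cor:parentCharact} the rooted $y$-MMST of $P$ is exactly $G^{\ast}$, and by the remark following that theorem the same holds for the rooted UMMST and for the rooted Monotone Minimum Spanning Graph of $P$; moreover in each case $G^{\ast}$ is the unique minimizer, since it is the unique Euclidean minimum spanning tree of $P$ (this follows from the cycle property: for any chord $\overline{p_i'p_j'}$ with $j>i+1$ every edge of the sub-path $p_i'p_{i+1}'\cdots p_j'$ of $G^{\ast}$ is strictly shorter than $\overline{p_i'p_j'}$, because on the parabola the squared distance $d(p_k',p_l')^2=(a_k-a_l)^2+(a_k^2-a_l^2)^2$ is strictly monotone in each of $a_k-a_l$ and $a_k^2-a_l^2$). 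Since $G^{\ast}$ belongs to each of the three smaller feasible classes and attains the minimum cost over the respective larger class, it is the unique minimizer over the smaller class as well. For the rooted $xy$-MMST one may alternatively invoke Corollary~\ref{cor:xyEdgesChar} directly: in $S[P,y,x]$ the points appear in increasing order of $x$, the $x$-coordinate constraint in the definition of $\mathrm{par}(\cdot)$ is vacuous here, and the nearest preceding point of $p_i'$ on the parabola is $p_{i-1}'$, so the edge set produced is precisely that of $G^{\ast}$.

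Finally, given the output tree of any algorithm for one of the three problems on input $P$ --- which we have just shown must equal $G^{\ast}$ --- I would recover the sorted permutation $(a_1',\ldots,a_n')$ of $(a_1,\ldots,a_n)$ in $O(n)$ time by traversing $G^{\ast}$ from $r$ and reading off the $x$-coordinates of the successive vertices. Hence any such algorithm sorts $n$ numbers, and the classical $\Omega(n\log n)$ lower bound for sorting (as in Shamos~\cite{Sha78}) yields the claimed $\Omega(|P|\log|P|)$ bound.

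I expect the only point requiring care --- the ``main obstacle'', such as it is --- to be the uniqueness of the optimum under the most restrictive notion of monotonicity: one must check that passing from $y$-monotonicity to $xy$-monotonicity (the strongest of the notions involved) does not shrink the feasible set enough to displace $G^{\ast}$ from optimality, and simultaneously that the optimum is unique so that the algorithm is forced to output exactly $G^{\ast}$ and not some other equal-cost tree from which the sorted order could not be read. Both are settled by the single observation that $G^{\ast}$ is $xy$-monotone and is the unique Euclidean MST of $P$; everything else is a routine transcription of the proof of Theorem~\ref{thm:lowerBoundyMonotone}.
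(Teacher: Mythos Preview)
Your proposal is correct and follows the same route as the paper, which simply points back to the Shamos reduction of Theorem~\ref{thm:lowerBoundyMonotone} without further detail. You supply the justification the paper leaves implicit---feasibility of $G^{\ast}$ under $xy$-monotonicity, the inclusions among the feasible classes, and uniqueness of the optimum via the Euclidean MST---which is exactly what is needed to make that one-line reference rigorous for all three variants.
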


We also give a linear time algorithm that recognizes rooted $xy-$monotone graphs.

\begin{theorem}\label{thm:recRootedXY}
Let $G = (P, E)$ be a rooted connected geometric graph.
Then, we can decide in $O(|E|)$ time if $G$ is rooted $xy-$monotone.
\end{theorem}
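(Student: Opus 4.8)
The plan is to mirror the approach of Theorem~\ref{thm:recDirected}, which recognized rooted $y$-monotone graphs, but now working quadrant by quadrant, exactly as Corollary~\ref{cor:pointsOnDifQuadr} and Observation~\ref{obs:xyUnnesEdges} suggest. First I would transform $G$ into a directed geometric graph $\overrightarrow{G}$ in $O(|E|)$ time by orienting and pruning edges according to Observation~\ref{obs:xyUnnesEdges}. Concretely, for an edge $\overline{pp'}\in E$: if $p$ and $p'$ lie in different quadrants, delete it, since by Observation~\ref{obs:xyUnnesEdges} it cannot appear in any $xy$-monotone path from $r$. If $p$ and $p'$ lie in the same quadrant but $(|x(p)|-|x(p')|)(|y(p)|-|y(p')|)<0$, delete it for the same reason. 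Otherwise $p$ and $p'$ are in the same (closed) quadrant and one of them, say $p$, satisfies $|x(p)|\le|x(p')|$ and $|y(p)|\le|y(p')|$; direct the edge as $\overrightarrow{p'p}$ (toward the root side), and if additionally both coordinates are equal on one axis — i.e. $|x(p)|=|x(p')|$ or $|y(p)|=|y(p')|$ creating ties — insert both orientations as in the proof of Theorem~\ref{thm:recDirected}. (Since the point set is in general position no two distinct non-root points share both coordinates, so this tie handling is the mild analogue of the $y(p)=y(q)$ case there.)

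The key claim is then: $G$ is rooted $xy$-monotone if and only if in $\overrightarrow{G}$ the root $r$ can reach every other vertex of $P$. For the forward direction, a rooted $xy$-monotone path from $r$ to $p$ stays within a single quadrant (Corollary~\ref{cor:pointsOnDifQuadr}) and is simultaneously $x$-monotone and $y$-monotone, hence moves monotonically away from $r$ in both $|x|$ and $|y|$; each of its edges therefore survives the pruning and is oriented by $\overrightarrow{G}$ in the direction the path traverses it, so the reversed path is a directed path from $r$ to $p$ in $\overrightarrow{G}$. Conversely, any directed path from $r$ to $p$ in $\overrightarrow{G}$ uses only edges within one quadrant along which $|x|$ and $|y|$ are both nondecreasing as we move toward $p$; reading it from $r$, this is precisely an $xy$-monotone path. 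Finally, reachability of all vertices from $r$ in $\overrightarrow{G}$ is decided by a single BFS or DFS in $O(|E|)$ time, and the construction of $\overrightarrow{G}$ is also $O(|E|)$, giving the stated bound.

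The main obstacle I anticipate is the careful bookkeeping for edges lying on coordinate-parallel lines through shared-coordinate pairs — the degenerate case where $p$ and $p'$ have the same $|x|$ or the same $|y|$. Here an $xy$-monotone path is still allowed to traverse such an edge (it is weakly monotone in both coordinates), so one must not prune it and must orient it in both directions, precisely paralleling the $y(p)=y(q)$ handling in Theorem~\ref{thm:recDirected}; getting the quadrant membership of points lying exactly on an axis right (a point with $x=0$ or $y=0$ belongs to two closed quadrants) is the same kind of boundary issue already encountered for $P_{y\le 0}$ versus $P_{y\ge 0}$, and is handled by treating the closed quadrants and letting such boundary points participate in both. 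Once these conventions are fixed, the correctness argument and the $O(|E|)$ running time are immediate.
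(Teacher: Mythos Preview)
Your approach is essentially identical to the paper's: prune every edge that Observation~\ref{obs:xyUnnesEdges} rules out, orient each surviving edge according to which endpoint dominates in both $|x|$ and $|y|$, and then test reachability between $r$ and the rest of $P$ by a single BFS/DFS. The only slip is the direction of your orientation: you send $\overrightarrow{p'p}$ \emph{toward} the root but then ask whether $r$ reaches every vertex, which is inconsistent (with that orientation no arc leaves $r$ except through tie edges); the paper orients away from the root, from $p$ to $q$ when $|x(p)|\le|x(q)|$ and $|y(p)|\le|y(q)|$, and then checks that $r$ reaches all of $P$---reversing your arrows (or equivalently checking that every vertex reaches $r$) fixes this, after which your argument coincides with the paper's.
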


\begin{proof}
Our proof is similar to the proof of Theorem~\ref{thm:recDirected}.
We transform $G$ to a directed graph $\overrightarrow{G}$ in $O(|E|)$ time as follows.
Let $\overline{pq}$ be an edge of $G$.
If $p$ and $q$ lie on the same quadrant of the plane and $|x(p)| \leq |x(q)|$ and $|y(p)| \leq |y(q)|$ (resp.~$|x(q)| \leq |x(p)|$ and $|y(q)| \leq |y(p)|$) we direct $\overline{pq}$ from $p$ to $q$ (resp.~from $q$ to $p$).
By observation~\ref{obs:xyUnnesEdges}, it follows that all the other edges cannot be traversed by a $xy-$monotone path connecting $r$ with another point of $P$. 
Hence, we remove them.
Then, $G$ is rooted $xy-$monotone if and only if $r$ is connected with all other points of $P$ in $\overrightarrow{G}$.
We can decide the latter in $O(|E|)$ time by applying a breadth first search or a depth first search traversal. 
\end{proof}

\subsection{The Rooted 2D-UMMST Problem}

We now study the problem of computing the rooted 2D-UMMST of a given rooted point set $P$.
We give a $O(|P|^2 \log |P|)$ time rotational sweep algorithm that solves the problem, analogous to our rooted UMMST algorithm given in Subsection~\ref{subsec:UMMST}.

\begin{obs}\label{obs:rotationEventsXYproduction}
Let $x'y'$ be a Cartesian System. 
If we rotate the Cartesian System $x'y'$ counterclockwise, then the rooted $x'y'-$MMST of $P$ changes only when the $y'$ axis reaches (or moves away from) a line that is perpendicular or parallel to a line passing through two points of $P$.
\end{obs}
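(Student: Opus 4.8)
The plan is to mimic the proof of Observation~\ref{obs:changeInRotation} but now account for the fact that the relevant partition of $P$ into quadrant-confined subsets, and the sequences $S[\cdot,y',x']$ used by Lemma~\ref{lem:minCostxy2DChar} and Corollary~\ref{cor:xyEdgesChar}, depend on \emph{both} coordinate axes. First I would recall that, by Lemma~\ref{lem:splitQuadrants}, the rooted $x'y'-$MMST of $P$ is completely determined by (a) how $P$ is split into the four sets $P_{x'\le0,y'\le0}$, $P_{x'\ge0,y'\ge0}$, $P_{x'\le0,y'\ge0}$, $P_{x'\ge0,y'\le0}$, and (b) for each such set $Q$, the sequence $S[Q,y',x']$, since by Corollary~\ref{cor:xyEdgesChar} the edges are exactly $\overline{\mathrm{par}(p_i)p_i}$ and each $\mathrm{par}(p_i)$ is determined once the order of points by $|y'|$, with ties broken by $|x'|$, and the distances (which are rotation-invariant) are fixed. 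So it suffices to show that both (a) and (b) are locally constant in the rotation parameter except at the stated events.

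Next I would analyze when (a) changes: a point $p\in P\setminus\{r\}$ moves between quadrants precisely when the $x'$ axis or the $y'$ axis passes through $p$, i.e.\ when $y'$ (or $x'$, which is $y'$ rotated by $\pi/2$) becomes perpendicular to the line $rp$ — equivalently when $y'$ reaches a line parallel or perpendicular to $\overline{rp}$, which is one of the event lines in the statement (taking the ``two points'' to be $r$ and $p$). For (b), I would argue that the relative order of two points $p,q$ in the same quadrant under the rule ``sort by $|y'|$, break ties by $|x'|$'' can only change when either their $|y'|$-values swap, which happens exactly when $y'$ is perpendicular to $\overline{pq}$, or — in the tie case $|y'(p)|=|y'(q)|$ — their $|x'|$-values become relevant, which happens exactly when $y'$ is parallel to $\overline{pq}$ (so that $\overline{pq}$ is perpendicular to $x'$). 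In all cases the event is $y'$ reaching a line perpendicular or parallel to the line through two points of $P$. Since $S[Q,y',x']$ and the distances are all that Corollary~\ref{cor:xyEdgesChar} needs, the rooted $x'y'-$MMST of $P$ is unchanged off these finitely many critical slopes, and the observation follows.

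The main obstacle I anticipate is the tie-breaking bookkeeping: unlike the one-dimensional case of Observation~\ref{obs:changeInRotation}, here ties in $|y'|$ are broken by $|x'|$ rather than by distance, so I have to be careful that a ``parallel'' event (where $|x'(p)|$ and $|x'(q)|$ swap while $|y'(p)|=|y'(q)|$) is genuinely the only other way the sequence can reorder, and that such configurations are non-degenerate given the general-position assumption — in particular that no three points are collinear keeps the number of critical slopes finite and keeps each event well-defined. A secondary subtlety is that a single rotation can simultaneously move a point across an axis \emph{and} reorder a pair, but since we only claim that changes are \emph{confined to} these event slopes (not that each event forces a change), handling them is a matter of taking the union of the two finite event sets. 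I would also remark, for use in the subsequent algorithm, that counting these events gives $O(|P|^2)$ critical slopes, paralleling the set $\Theta_{\text{critical}}$ of Section~\ref{sec:generalRootMon}.
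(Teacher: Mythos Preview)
Your overall strategy---reducing to the combinatorial data that determines the tree via Lemma~\ref{lem:splitQuadrants} and Corollary~\ref{cor:xyEdgesChar}, and then tracking when that data changes under rotation---is sound, and your treatment of the quadrant partition (a) and of the perpendicular events in (b) is correct. But there is a real gap in the foundational claim that ``each $\mathrm{par}(p_i)$ is determined once the order of points by $|y'|$, with ties broken by $|x'|$, and the distances are fixed.'' This is false: by definition, $\mathrm{par}(p_i)$ is the closest point among those $p_j$ with $j<i$ \emph{and} $|x'(p_j)|\le |x'(p_i)|$. That second filter depends on the relative $|x'|$-ordering of the points, which is not recoverable from the sequence $S[Q,y',x']$ alone. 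Two points $p,q$ in the same quadrant swap their $|x'|$-order exactly when $x'(p)=x'(q)$, i.e.\ when $y'$ is parallel to $\overline{pq}$---so \emph{this} is the correct reason parallel events matter, not the tie-breaking mechanism you describe. (Indeed, the tie $|y'(p)|=|y'(q)|$ occurs at a \emph{perpendicular} event, and at that instant the $|x'|$-values are distinct; a parallel event has nothing to do with breaking that tie.) If you followed your stated premise literally, you would conclude that only perpendicular events can alter the tree, which is wrong.

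The paper takes a somewhat different route: rather than tracking the sequences $S[Q,y',x']$, it argues directly about when a segment $\overline{pq}$ becomes or ceases to be usable on an $x'y'$-monotone path from $r$---equivalently, when one of $p,q$ dominates the other in both $|x'|$ and $|y'|$ within the same quadrant---and observes that this dominance relation flips precisely at perpendicular or parallel events. Your approach can be repaired by adding a third piece of combinatorial data alongside (a) and (b): the $|x'|$-ordering of the points within each quadrant, which feeds the parent filter in Corollary~\ref{cor:xyEdgesChar} and changes only at parallel events.
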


\begin{proof}
When the $y'$ axis reaches a line that is perpendicular or parallel to a line passing through two points of $P\setminus \{r\}$, say $p$ and $q$, then it might become feasible (while previously this was not feasible) to connect $p$ and $q$, with the line segment $\overline{pq}$, such that $p$ or $q$ traverses $\overline{pq}$ in a $x'y'-$monotone path from it to $r$.
In this case, by Corollary~\ref{cor:xyEdgesChar}, the rooted $x'y'-$MMST of $P$ may change.  
Furthermore, when the $y'$ becomes perpendicular or parallel to a line passing through $r$ and a point $p \in P\setminus \{r\}$ the point $p$ belongs to two quadrants (while previously belonged only to one of them). 
As a result, for some points $q_1$, $q_2$, \ldots, $q_k \in P\setminus\{r,p\}$ belonging to the new quadrant, in which $p$ was inserted, may now be feasible to connect with $r$ via the $x'y'-$monotone path $(r,p,q_i), 1\leq i \leq k$, respectively. 
Hence, by Corollary~\ref{cor:xyEdgesChar}, the rooted $x'y'-$MMST of $P$ may change.  

On the other hand, when the $y'$ axis moves away from a line that is perpendicular or parallel to a line passing through $p$ and $q$, with $p,q \neq r$, then if the line segment $\overline{pq}$ was previously an edge of the rooted $x'y'-$MMST of $P$, hence it was previously traversed by the $x'y'-$monotone path from $p$ or $q$, say $p$, to $r$, now might not be feasible to be traversed in the $x'y'-$monotone path from $p$ to $r$ (see Observation~\ref{obs:xyUnnesEdges}).
Hence, the rooted $x'y'-$MMST of $P$ may change.
Furthermore, if one of $p$ or $q$ coincides with $r$, say $p$, then $q$ now does not belong to one of the quadrants that it belonged previously. 
Hence, some points of $P\setminus\{r,q\}$ which previously were adjacent to $q$ and previously belonged to the same quadrant with $q$ are now on different quadrants.
Hence, from Corollary~\ref{cor:pointsOnDifQuadr} it follows that the edges that previously connected $q$ and these points cannot belong to the current rooted $x'y'-$MMST.

At no other moment during the rotation of the $x'y'$ Cartesian System, i.e.{} when the $y'$ axis does not reach (move away from) a line that is perpendicular or parallel to a line passing through two points of $P$, may become feasible/infeasible (while previously it was infeasible/feasible) to traverse a line segment connecting two points of $P$ in a $x'y'-$monotone path from $r$ to a point of $P\setminus\{r\}$.
From the previous sentence and Corollary~\ref{cor:xyEdgesChar}, it follows that at no other moment during the rotation of the $x'y'$ Cartesian System may the rooted $x'y'-$MMST of $P$ change.
\end{proof}

\begin{obs}\label{obs:quarterPlaneRot}
Let $x' , y'$ and $x'', y''$ be axes of two different Cartesian Systems s.t.{} $x''$ (resp.~$y''$) forms with $x'$ (resp.~$y'$) a counterclockwise angle equal to $k\frac{\pi}{2}, k = 1, 2, 3$. 
Then, the rooted $x'y'-$MMST of $P$ coincides with the rooted $x''y''-$MMST of $P$.
\end{obs}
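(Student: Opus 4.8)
The plan is to reduce, via the quadrant decomposition of Lemma~\ref{lem:splitQuadrants}, to showing that inside each quadrant the rooted $xy$-monotone minimum spanning graph depends only on data that a $k\frac{\pi}{2}$ rotation of the axes leaves invariant. First I would observe that rotating the $x'y'$ system counterclockwise by $k\frac{\pi}{2}$ rotates the four closed quadrant regions onto themselves as a set; since the point set $P$ (and its root at the origin) does not move, this induces a bijection between the quadrants of $x'y'$ and the quadrants of $x''y''$ that carries the subset $P\cap Q$ of $P$ lying in a quadrant $Q$ of the first system onto the identical subset of $P$ lying in the corresponding quadrant $Q'$ of the second system (a point on an axis of one system lies on an axis of the other, so the ``belongs to two quadrants'' convention is respected as well). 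By Corollary~\ref{cor:pointsOnDifQuadr} and Lemma~\ref{lem:splitQuadrants} it therefore suffices to prove that, for each such pair $(Q,Q')$, the rooted $x'y'$-MMST restricted to $P\cap Q$ and the rooted $x''y''$-MMST restricted to $P\cap Q' = P\cap Q$ coincide.

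Next I would rewrite the within-quadrant characterization of Corollary~\ref{cor:xyEdgesChar} in a form that is manifestly symmetric in the two coordinate axes. Fix a quadrant $Q$ and let $\preceq_Q$ be the partial order on $P\cap Q$ defined by $q\preceq_Q p$ iff $|x(q)|\le |x(p)|$ and $|y(q)|\le |y(p)|$, with absolute values taken in the ambient system. A short argument using the tie-breaking rule of $S[P\cap Q, y, x]$ --- together with the fact that two distinct points of a single quadrant cannot share both absolute coordinates --- shows that, for $p\in (P\cap Q)\setminus\{r\}$, the set $\{p_0,\dots,p_{i-1}\}_{|x|\le |x(p)|}$ appearing in the definition of $\text{par}(p)$ is exactly the set $\{q\in P\cap Q: q\ne p,\ q\preceq_Q p\}$ of points strictly below $p$ in $\preceq_Q$, which is nonempty because the root lies below every point of $Q$. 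Hence, by Corollary~\ref{cor:xyEdgesChar}, the edge set of the rooted $xy$-MMST restricted to $P\cap Q$ is determined entirely by the triple $\bigl(P\cap Q,\ \preceq_Q,\ d(\cdot,\cdot)\bigr)$.

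Finally I would check that a counterclockwise rotation of the axes by $k\frac{\pi}{2}$ leaves this triple unchanged. Euclidean distances are rotation invariant, so $d(\cdot,\cdot)$ is unaffected. For a point $p$, passing from $x'y'$ to $x''y''$ either fixes the ordered pair $(|x(p)|,|y(p)|)$ (when $k=2$) or exchanges its two entries (when $k=1,3$) --- in every case the unordered pair $\{|x(p)|,|y(p)|\}$ is preserved, uniformly over all points of the quadrant. Since the relation defining $\preceq_Q$ is the conjunction of ``$|x(q)|\le|x(p)|$'' and ``$|y(q)|\le|y(p)|$'', which is symmetric under swapping the two coordinates, the order $\preceq_Q$ computed in $x'y'$ equals the analogous order computed in $x''y''$. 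Therefore the triple, hence the restricted MMST, is the same in both systems, and taking the union over the four quadrants via Lemma~\ref{lem:splitQuadrants} finishes the proof; alternatively one may prove only the case $k=1$ and obtain $k=2,3$ by composing the rotation with itself. I expect the only delicate point to be the bookkeeping in the second paragraph --- verifying that the $S[\cdot,y,x]$-predecessor set really is the coordinatewise ``down-set'', including the ties where two points of the quadrant have equal $y$-coordinate; everything else is immediate once the problem is phrased in terms of $\preceq_Q$.
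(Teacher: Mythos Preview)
Your argument is correct. The paper, however, supplies no proof for this Observation at all --- it is stated as self-evident. So there is no ``paper's approach'' to compare against in the usual sense.

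That said, your route through Lemma~\ref{lem:splitQuadrants} and Corollary~\ref{cor:xyEdgesChar} is heavier than the statement requires. A more direct argument runs as follows: a path is $x'y'$-monotone iff it is simultaneously $x'$-monotone and $y'$-monotone; a rotation of the coordinate frame by $k\frac{\pi}{2}$ sends the unordered pair of axis directions $\{x',y'\}$ to $\{x'',y''\}$ (possibly reversing orientations, which is irrelevant for monotonicity). Hence the class of $x'y'$-monotone paths in $P$ coincides with the class of $x''y''$-monotone paths, the two families of rooted $xy$-monotone spanning graphs are identical, and so are their minimum-cost members. This is presumably what the paper has in mind when it omits the proof.

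Your approach does buy something: by recasting Corollary~\ref{cor:xyEdgesChar} via the coordinatewise partial order $\preceq_Q$, you make explicit that not only the minimum cost but the actual parent assignment is invariant, and the verification that the $S[\cdot,y,x]$-predecessor set equals the $\preceq_Q$-down-set (including the tie case) is a genuinely useful observation in its own right. Just be aware that for the bare Observation, the one-line path-level argument suffices.
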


Based on the previous Observations, we define the set $\Theta = \{\theta \in [0, \frac{\pi}{2}):$ a line of slope $\theta$ is either perpendicular or parallel to a line connecting two points of $P\}$. 
 Let $S[\Theta]$ be the sorted sequence that contains the slopes in $\Theta$ in increasing order, i.e.{} $S[\Theta] = (\theta_0, \theta_1, \ldots, \theta_{m-1})$, $\theta_i < \theta_{i+1}, 0 \leq i \leq m-2$  and $ m \leq \binom{|P|}{2}$.
 Then, we define the critical set of slopes $\Theta_{\text{critical}} = \{\theta_0$, $\theta_1$, \ldots, $\theta_{m-1}\} \cup \{\frac{\theta_0 + \theta_1}{2}$, $\frac{\theta_1+ \theta_2}{2}$, \ldots, $\frac{\theta_{m-2}+ \theta_{m-1}}{2}$, $\frac{\theta_{m-1}+ \frac{\pi}{2}}{2}\}$.
We now ``name'' the Cartesian Systems such that their vertical axis has slope in $\Theta_{\text{critical}}$.
More formally, let $x_0y_0$, $x_1y_1$, \ldots, $x_{2m-1}y_{2m-1}$ be the Cartesian Systems such that $y_{2i}$ has slope $\theta_i, 0 \leq i \leq m-1$, $y_{2i+1}$ has slope $\frac{\theta_i + \theta_{i+1}}{2}, 0 \leq i \leq m-2$ and $y_{2m-1}$ has slope $\frac{\theta_{m-1} + \frac{\pi}{2}}{2}$. 

In analogy with Lemma~\ref{lem:MonotoneNecessaryDirections} we obtain the following Lemma.
\begin{lemma}\label{lem:2Dmonotone_Directions}
The rooted 2D-UMMST of $P$ is one of the rooted $x_iy_i-$MMST of $P$, $i =0, 1, 2, \ldots, 2m-1$, and more specifically the one of minimum cost.
\end{lemma}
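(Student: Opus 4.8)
The plan is to mirror the structure of the proof of Lemma~\ref{lem:MonotoneNecessaryDirections}, now using Observation~\ref{obs:rotationEventsXYproduction} and Observation~\ref{obs:quarterPlaneRot} in place of Observation~\ref{obs:changeInRotation} and Observation~\ref{obs:slopeLessThanPi}. First I would argue that it suffices to consider only Cartesian Systems $x'y'$ whose vertical axis $y'$ has slope in the half-open interval $[0,\frac{\pi}{2})$: by Observation~\ref{obs:quarterPlaneRot}, rotating the whole system by a multiple of $\frac{\pi}{2}$ leaves the rooted $x'y'-$MMST of $P$ unchanged, and any Cartesian System can be brought into this normalized form by such a rotation (possibly also swapping the roles of the two axes, which does not affect $xy-$monotonicity since it is symmetric in $x$ and $y$). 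Hence every rooted 2D-monotone spanning tree of $P$ realizing some direction pair is the rooted $x'y'-$MMST for some $y'$ with slope in $[0,\frac{\pi}{2})$, and the rooted 2D-UMMST of $P$ is the minimum-cost tree over all such $y'$.

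Next I would partition $[0,\frac{\pi}{2})$ according to the sorted sequence $S[\Theta] = (\theta_0,\theta_1,\ldots,\theta_{m-1})$ of event slopes, i.e. into the singletons $\{\theta_i\}$ and the open intervals $(\theta_i,\theta_{i+1})$ for $0 \le i \le m-2$, together with the ``wrap-around'' piece $(\theta_{m-1},\frac{\pi}{2}) \cup [0,\theta_0)$ (using that slopes are read modulo a rotation of the system). The key claim, which follows directly from Observation~\ref{obs:rotationEventsXYproduction}, is that the rooted $x'y'-$MMST of $P$ is invariant as $y'$ ranges over any one of the open intervals $(\theta_i,\theta_{i+1})$: within such an interval the vertical axis never reaches nor moves away from a line perpendicular or parallel to a line through two points of $P$, so by that Observation the tree does not change. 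Therefore it is enough to test, for each open interval, a single representative slope; I would pick the midpoint $\frac{\theta_i+\theta_{i+1}}{2}$, which is exactly the slope assigned to $y_{2i+1}$, and for the wrap-around interval I would pick $\frac{\theta_{m-1}+\frac{\pi}{2}}{2}$, the slope of $y_{2m-1}$. By the same Observation~\ref{obs:quarterPlaneRot}, the rooted $x_{2m-1}y_{2m-1}-$MMST coincides with the rooted $x'y'-$MMST for every $y'$ whose slope lies in $(\theta_{m-1},\frac{\pi}{2}) \cup [0,\theta_0)$.

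Finally I would handle the boundary slopes $\theta_i$ themselves: these cannot be absorbed into a neighboring interval, because at such a slope two points of $P$ acquire the same projection onto $y_{2i}$ (or onto $x_{2i}$), which can make $S[P_{\cdot},y_{2i},x_{2i}]$ — and hence, by Corollary~\ref{cor:xyEdgesChar}, the tree — differ from the trees on both adjacent intervals, just as in Figure~\ref{fig:consAxes} for the one-direction case. Hence the Cartesian Systems $x_{2i}y_{2i}$, $0 \le i \le m-1$, must also be included. Collecting the $2m$ representatives $x_0y_0,\ldots,x_{2m-1}y_{2m-1}$ — whose vertical axes are exactly the axes with slope in $\Theta_{\text{critical}}$ — we conclude that every possible value of the cost of a rooted 2D-monotone spanning tree of $P$ is attained by some rooted $x_iy_i-$MMST, and so the rooted 2D-UMMST of $P$ is the one of minimum cost among these, as claimed.

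I do not expect any serious obstacle; the proof is essentially a transcription of the argument for Lemma~\ref{lem:MonotoneNecessaryDirections}. The one point requiring a little care is the bookkeeping for the ``parallel as well as perpendicular'' events and the wrap-around at $\frac{\pi}{2}$ rather than at $\pi$: one must be sure that reducing to $[0,\frac{\pi}{2})$ via Observation~\ref{obs:quarterPlaneRot} genuinely captures all Cartesian Systems (including the $x\leftrightarrow y$ swap) and that $\Theta$, defined using both perpendicular and parallel lines, already records every slope at which Observation~\ref{obs:rotationEventsXYproduction} permits a change — so that the interiors of the intervals are truly event-free.
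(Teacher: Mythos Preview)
Your proposal is correct and follows exactly the approach the paper intends: the paper states the lemma ``in analogy with Lemma~\ref{lem:MonotoneNecessaryDirections}'' without giving a separate proof, and your argument is precisely that analogy, with Observation~\ref{obs:rotationEventsXYproduction} and Observation~\ref{obs:quarterPlaneRot} playing the roles of Observation~\ref{obs:changeInRotation} and Observation~\ref{obs:slopeLessThanPi}. The care you take with the wrap-around at $\frac{\pi}{2}$ and the inclusion of both perpendicular and parallel event slopes is appropriate and matches the definitions preceding the lemma.
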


\begin{theorem}
We can produce the rooted 2D-UMMST of a rooted point set $P$ in $O(|P|^2 \log |P|)$ time.
\end{theorem}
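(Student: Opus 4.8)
The plan is to run a rotational sweep entirely analogous to the one behind Algorithm~\ref{alg:rootedMonotoneAlgo} and Theorem~\ref{thm:spanningTreeOneDir}, with the two half-planes replaced by four quadrants and with a slightly richer set of rotation events. By Lemma~\ref{lem:2Dmonotone_Directions} it suffices to examine the $O(|P|^2)$ Cartesian systems $x_iy_i$, $0\le i\le 2m-1$, whose vertical axis has slope in $\Theta_{\text{critical}}$, to keep the slope attaining the minimum cost, and at the end to recompute its tree by Theorem~\ref{thm:minSpanningXYrooted}. First I would compute $\Theta$ (each pair of points of $P$ contributes a single slope in $[0,\frac{\pi}{2})$, namely the one at which the two points share their $x'$- or their $y'$-projection), sort it, form $\Theta_{\text{critical}}$ and the axes $x_0y_0,\dots,x_{2m-1}y_{2m-1}$, and for each even index store the list $L_{i/2}$ of the $k_{i/2}$ pairs of points realizing slope $\theta_{i/2}$. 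All of this costs $O(|P|^2\log|P|)$ and $\sum_i k_i=O(|P|^2)$.

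For a fixed Cartesian system, Corollary~\ref{cor:xyEdgesChar} says the tree is determined by $\mathrm{par}(p)$, the point closest to $p$ among those preceding $p$ in $S[\cdot,y,x]$ inside $p$'s quadrant and having $|x|$-coordinate at most $|x(p)|$. I would therefore maintain the four quadrant sequences $S[\cdot,y,x]$, the current tree $T$ together with its cost, and for every $p\neq r$ a self-balancing search tree $\mathrm{PD}(p)$ that holds the points preceding $p$ in $p$'s quadrant, keyed by the rank of their $|x|$-coordinate and augmented so that each node also stores the minimum over its subtree of the distance to $p$; then $\mathrm{par}(p)$ is the answer to a prefix-minimum query in $\mathrm{PD}(p)$ and can be read or updated in $O(\log|P|)$. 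Initializing these structures for $x_0y_0$ costs $O(|P|\log^3|P|)$ for $T$ (Theorem~\ref{thm:minSpanningXYrooted}) plus $O(|P|^2\log|P|)$ to populate all the $\mathrm{PD}(p)$, which fits the budget.

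The sweep then moves from $x_{i-1}y_{i-1}$ to $x_iy_i$ by updates that are adaptations of Cases~1--4 in the proof of Theorem~\ref{thm:spanningTreeOneDir}. If the relevant event slope involves two non-root points, only the $O(k_{\lfloor i/2\rfloor})$ pairs in $L_{\lfloor i/2\rfloor}$ are affected: for a pair that becomes (or stops being) co-height we may transpose it in $S[\cdot,y,x]$ and recompute the two parents, and for a pair that becomes (or stops being) co-$|x|$ the later point gains or loses its partner as an admissible parent and we recompute that one parent; each pair is handled in $O(\log|P|)$. If the event slope is perpendicular or parallel to a line $rq$, then $q$ crosses a coordinate axis and enters or leaves a quadrant, which costs $O(|P|\log|P|)$ by the bookkeeping of Cases~3--4 (insert $q$ into that quadrant's sequence, insert $(q,d(q,p))$ into each $\mathrm{PD}(p)$ of the quadrant, refresh parents, then treat the remaining pairs of $L_{\lfloor i/2\rfloor}$ as above), and this happens exactly $|P|-1$ times. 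Since $\sum_i k_i=O(|P|^2)$, the sweep, and hence the whole algorithm, runs in $O(|P|^2\log|P|)$.

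I expect the main obstacle to be keeping the $\mathrm{PD}(p)$ structures correct through the co-$|x|$ events, since those events change the very $|x|$-order that keys the structures, whereas in Theorem~\ref{thm:spanningTreeOneDir} each $\mathrm{PD}(p)$ could simply be keyed by distance. The point that makes it manageable is that general position forbids three points from sharing an $|x|$-projection, so a co-$|x|$ event merely transposes two adjacent points of the current $|x|$-order; for every point other than the two involved, the set of predecessors lying in any $|x|$-prefix is unchanged, so no prefix-minimum answer changes and only the parent of the later of the two involved points can change. Hence a single order-maintenance structure per quadrant can record the transposition and only one $\mathrm{PD}(\cdot)$ needs an $O(\log|P|)$ repair per pair. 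Verifying that subsequent insertions and deletions into the $\mathrm{PD}(\cdot)$ remain valid after a chain of such transpositions — exactly as one checks the analogous invariants in Theorem~\ref{thm:spanningTreeOneDir} — is the step that will need the most care.
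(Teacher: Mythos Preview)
Your proposal follows the same rotational-sweep framework as the paper: reduce to the $O(|P|^2)$ critical Cartesian systems via Lemma~\ref{lem:2Dmonotone_Directions}, initialise with Theorem~\ref{thm:minSpanningXYrooted}, maintain per-point structures $\mathrm{PD}(p)$ through the sweep by a case analysis mirroring Cases~1--4 of Theorem~\ref{thm:spanningTreeOneDir}, keep track of the best orientation, and recompute the optimum at the end. The counting of events, the handling of root crossings, and the overall time budget are all as in the paper.

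The one substantive difference is the contents of $\mathrm{PD}(p)$, and it is exactly what creates the obstacle you spend the last paragraph on. You let $\mathrm{PD}(p)$ hold all predecessors of $p$ in $S[\cdot,y,x]$, keyed by $|x|$-rank with a prefix-minimum augmentation; this forces you to track a global $|x|$-order through the co-$|x|$ events and to argue that stale keys do not corrupt later queries or updates. The paper sidesteps this entirely: its $\mathrm{PD}(p)$ holds only the points $q$ in $p$'s quadrant with \emph{both} $|x_i(q)|\le|x_i(p)|$ and $|y_i(q)|\le|y_i(p)|$, i.e.\ precisely the admissible parents of $p$, keyed by the (immutable) distance $d(p,q)$, so $\mathrm{par}(p)$ is simply the minimum element, exactly as in Theorem~\ref{thm:spanningTreeOneDir}. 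A co-$|y|$ or co-$|x|$ event between two non-root points $a,b$ then amounts to moving one of them into or out of the other's $\mathrm{PD}$, a single $O(\log|P|)$ insertion or deletion; there is no rank, no order-maintenance structure, and nothing to verify about staleness. Your scheme can probably be pushed through, but the paper's choice of filtering on both coordinates at storage time, rather than on one at storage time and the other at query time, is what makes the analogy with Theorem~\ref{thm:spanningTreeOneDir} go through without any new difficulty.
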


\begin{proof}
We give our algorithm which produces the rooted 2D-UMMST of $P$.
Our rooted 2D-UMMST algorithm is a rotational sweep algorithm analogous to the rooted UMMST algorithm given in Subsection~\ref{subsec:UMMST}.
It rotates a Cartesian System $x'y'$ which initially coincides with $x_0y_0$ counterclockwise until it coincides with the Cartesian System $xy$, i.e.{} the given Cartesian System.
Throughout this rotation, it updates the rooted $x'y'-$MMST of $P$.
By Lemma~\ref{lem:2Dmonotone_Directions}, we only need to compute the rooted $x_iy_i-$MMST of $P$ for the Cartesian Systems $x_iy_i$, $i = 0, 1$, \ldots, $2m-1$. 
Let $T_{i}^{\text{opt}}$ be the rooted $x_iy_i-$MMST of $P$, $i =$ $0$, $1$, \ldots, $2m-1$.
Then, our rooted 2D-UMMST algorithm is restated as follows.
The algorithm initially computes the Cartesian Systems $x_0y_0$, $x_1y_1$, \ldots, $x_{2m-1}y_{2m-1}$ in $O(|P|^2\log |P|)$ time.
Then, it constructs $T_{0}^{\text{opt}}$ using Theorem~\ref{thm:minSpanningXYrooted}.
Then, it iterates for $i =1$, $2$, \ldots, $2m-1$, obtaining $T_{1}^{\text{opt}}$, $T_{2}^{\text{opt}}$, \ldots, $T_{2m-1}^{\text{opt}}$ in this order.
Throughout its execution it stores the Cartesian System min$X$min$Y$ in which it encountered the minimum cost solution found so far.
Finally, it returns the rooted min$X$min$Y-$MMST of $P$, which is recomputed using Theorem~\ref{thm:minSpanningXYrooted}.

From Lemma~\ref{lem:2Dmonotone_Directions}, our rooted 2D-UMMST algorithm produces the rooted 2D-UMMST of $P$.
We only need to analyze its time complexity.
We use similar data structures PD$(p), p \in P\setminus\{r\}$, to the ones employed in Theorem~\ref{thm:spanningTreeOneDir}.
More specifically, for each point $p \in P \setminus \{r\}$ the data structure PD$(p)$ is a self-balancing binary search tree that contains the pairs $(q,d(p,q))$ for all the points $q \in P\setminus\{p\}$ such that $\overline{pq}$ can be traversed in a $x_iy_i-$monotone path from $p$ to $r$ (i.e.{} for the points $q \in P\setminus\{p\}$ that lie on the same quadrant with $p$ w.r.t.{} the Cartesian System $x_iy_i$ and $|x_i(q)| \leq |x_i(p)|$ and $|y_i(q)| \leq |y_i(p)|$), where $i$ is the index of the current iteration of our algorithm.
Then, using similar arguments to the arguments employed in Theorem~\ref{thm:spanningTreeOneDir}, the time complexity of the algorithm is $O(|P|^2 \log |P|)$. 
\end{proof}

\subsection{Recognizing Rooted Uniform 2D-monotone Graphs}

We now study the problem of recognizing if a given rooted connected geometric graph $G = (P, E)$, with root $r$, is rooted uniform 2D-monotone.
Our approach is analogous to the approach we took for recognizing rooted uniform monotone graphs in Subsection~\ref{subsec:recUM}.

For each $p \in P\setminus \{r\}$ let $A(p,x,y)$ be the set $\{q: $ $q\in $Adj$(p)$ and $q$ lies on the same quadrant of the plane with $p$ and $|x(q)| \leq |x(p)|$ and $|y(q)| \leq |y(p)|$ $\}$.
Let $B(x,y)$ be the set $\{p: p\in P\setminus \{r\}$ and $A(p,x,y) \neq \o$ $\}$.
Then, similarly to Lemma~\ref{lem:charYrootSetTerm} we obtain the following Lemma.

\begin{lemma}
\label{lem:2DmonotneChar}
$G$ is rooted $xy-$monotone if and only if $|B(x,y)|$ equals to $|P|-1$.
\end{lemma}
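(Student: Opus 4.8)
The plan is to mirror the proof of Lemma~\ref{lem:charYrootSetTerm}, adapting the argument from the single-direction monotone setting to the $xy$-monotone setting, where the relevant partial order on points is the coordinatewise dominance order within a fixed quadrant. The key structural observation is that a geometric path $(r = w_0, w_1, \ldots, w_l = p)$ from the root to a point $p$ is $xy$-monotone if and only if all of $w_1, \ldots, w_l$ lie in the same quadrant as $p$ and the sequences $|x(w_0)|, |x(w_1)|, \ldots, |x(w_l)|$ and $|y(w_0)|, |y(w_1)|, \ldots, |y(w_l)|$ are both nondecreasing; this is exactly the condition that consecutive edges $\overline{w_{j}w_{j+1}}$ all avoid the ``bad'' cases of Observation~\ref{obs:xyUnnesEdges}. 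Since the point set is in general position (no three collinear), no two distinct points share both an $x$-coordinate and a $y$-coordinate, so within each quadrant the dominance relation ``$|x(q)| \le |x(p)|$ and $|y(q)| \le |y(p)|$'' is a genuine partial order with a unique minimum element, namely $r$.

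For the ($\Rightarrow$) direction I would argue that each $p \in P \setminus \{r\}$ lies in $B(x,y)$. Fix an $xy$-monotone path from $r$ to $p$ and let $q$ be the second-to-last vertex $w_{l-1}$ (which exists and is distinct from $p$ since $p \ne r$). By the characterization above, $q$ lies in the same quadrant as $p$ with $|x(q)| \le |x(p)|$ and $|y(q)| \le |y(p)|$, and $q \in \mathrm{Adj}(p)$ since $\overline{pq}$ is an edge of the path, hence of $G$. Moreover $q \ne p$, and in fact $q \ne p$ forces at least one of the two inequalities to be strict by general position, but what matters is only that $q \in A(p,x,y)$, so $A(p,x,y) \ne \emptyset$ and $p \in B(x,y)$. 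Since $r \notin B(x,y)$ by definition, this gives $|B(x,y)| = |P| - 1$. (Unlike in Lemma~\ref{lem:charYrootSetTerm}, there is no analogue of the set $C$ here: in the single-direction case a ``tie'' in $y$-coordinate could occur because the paper allows two points with equal $y$-coordinate, but general position rules out two points in the same quadrant being dominance-incomparable-yet-equal in the way that forced the $C$ correction; every point that is on an $xy$-monotone path to $r$ genuinely has a neighbor strictly below it in the product order.)

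For the ($\Leftarrow$) direction I would use induction on $|P|$, ordering the points of each quadrant by, say, $|y|$ with ties broken by $|x|$ as in the definition of $S[P,y,x]$; more simply, induct on $|P|$ by picking a point $p$ that is maximal in the dominance order restricted to its quadrant. If $|B(x,y)| = |P|-1$ then in particular $p \in B(x,y)$, so $p$ has a neighbor $q$ in the same quadrant with $q$ strictly dominated by $p$. Deleting $p$ yields a graph $G' = G \setminus \{p\}$ on $P \setminus \{p\}$; one checks that $B(x,y)$ computed in $G'$ equals $B(x,y) \setminus \{p\}$ — removing a dominance-maximal vertex cannot destroy any other vertex's witnessing neighbor — so the inductive hypothesis makes $G'$ rooted $xy$-monotone, and prepending the edge $\overline{qp}$ (with $q$ reachable from $r$ by an $xy$-monotone path staying in $p$'s quadrant and dominated by $q \preceq p$) gives an $xy$-monotone path from $r$ to $p$. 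The base case $|P| = 1$ is trivial.

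The main obstacle I anticipate is the bookkeeping in the ($\Leftarrow$) direction: one must argue carefully that deleting a dominance-maximal point $p$ does not change $B(x,y)$ for the other points, and that the neighbor $q \in A(p,x,y)$ that witnesses $p \in B(x,y)$ is itself reachable by an $xy$-monotone path — this requires that the witness can be chosen in the same quadrant as $p$, which is automatic from the definition of $A(p,x,y)$, but one should confirm that ``same quadrant as $p$ and dominated by $p$'' composes transitively with ``$xy$-monotone path from $r$ to $q$'' to give an $xy$-monotone path from $r$ to $p$. General position is what guarantees the dominance order is well-behaved (antisymmetric with a unique minimum), and it is the reason the simpler statement here has no counterpart to the set $C(y)$ of Lemma~\ref{lem:charYrootSetTerm}.
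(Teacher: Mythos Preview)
Your proposal is correct and follows exactly the approach the paper indicates (the paper's own proof is nothing more than the phrase ``similarly to Lemma~\ref{lem:charYrootSetTerm}''); your forward direction via the penultimate vertex of an $xy$-monotone path and your backward direction via induction on a dominance-maximal point are the natural adaptations, and the bookkeeping you flag as an obstacle does go through.

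One small clarification on your side remark about why no analogue of $C(y)$ appears: the real reason is not general position but the fact that $A(p,x,y)$ is defined with the \emph{non-strict} inequalities $|x(q)|\le |x(p)|$ and $|y(q)|\le |y(p)|$, whereas $A(p,y)$ in Section~\ref{sec:generalRootMon} required the \emph{strict} inequality $|y(q)|<|y(p)|$. With $\le$ in place, the predecessor $q$ on any $xy$-monotone path from $r$ to $p$ already lies in $A(p,x,y)$ outright, so no separate tie-handling set is needed. General position enters only to guarantee that the back-tracing in the $(\Leftarrow)$ direction cannot stall (distinct points in the same closed quadrant with $|x(q)|\le|x(p)|$ and $|y(q)|\le|y(p)|$ must satisfy at least one strict inequality, so $|x|+|y|$ strictly decreases) and that a point on an axis forces the next step to be $r$.
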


\begin{remark} 
The previous Lemma implies a $O(|E|)$ time recognition algorithm, that decides if $G$ is rooted $xy-$monotone, different from the algorithm given in Theorem~\ref{thm:recRootedXY}.
\end{remark}

Similarly to Observation~\ref{obs:rotationEventsXYproduction}, we obtain the following Observation.

\begin{obs}
If we rotate a Cartesian System $x'y'$ counterclockwise then the sets $A(p,x',y'), p \in P\setminus \{r\}$ and $B(x',y')$ change only when the $y'$ axis reaches (or moves away from) a line perpendicular or parallel to an edge of $G$ or when the $y'$ axis reaches (or moves away from) a line perpendicular or parallel to a line connecting $r$ with another point of $P$.
\end{obs}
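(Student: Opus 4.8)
The plan is to argue locally in the rotation angle. Fix an orientation $\theta$ of the Cartesian System $x'y'$ and show that, unless $\theta$ is one of the listed events, each set $A(p,x',y')$, $p\in P\setminus\{r\}$, stays the same throughout a neighbourhood of $\theta$; since $B(x',y')=\{p\in P\setminus\{r\}:A(p,x',y')\neq\o\}$ is completely determined by the family $\{A(p,x',y')\}_{p}$, the set $B(x',y')$ is then locally constant as well. This is the analogue, for the sets $A$ and $B$, of Observation~\ref{obs:rotationEventsXYproduction}, and it will be proved along the same lines, together with the definitions of $A$ and $B$.

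First I would break the condition $q\in A(p,x',y')$ into the $\theta$-independent part $q\in\mathrm{Adj}(p)$ and three $\theta$-dependent conditions: (C1) $p$ and $q$ lie on the same quadrant of $x'y'$; (C2) $|x'(q)|\le|x'(p)|$; (C3) $|y'(q)|\le|y'(p)|$. The quadrant of a point $p\in P\setminus\{r\}$ changes only when $p$ crosses an axis of $x'y'$, i.e.{} exactly when the $y'$ axis becomes parallel or perpendicular to the line $\overline{rp}$; hence (C1) can change only when $y'$ is parallel or perpendicular to $\overline{rp}$ or to $\overline{rq}$, two lines joining $r$ with another point of $P$. For (C2), the continuous function $|x'(p)|-|x'(q)|$ changes sign only at orientations with $|x'(p)|=|x'(q)|$, that is with $x'(p)=x'(q)$ or $x'(p)=-x'(q)$; in the first case $\overline{pq}$ is parallel to the $y'$ axis, so $y'$ is parallel to the edge $\overline{pq}$ of $G$. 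Condition (C3) is treated symmetrically, with ``parallel'' replaced by ``perpendicular''.

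The delicate point, which I expect to be the main obstacle, is the alternative $x'(p)=-x'(q)$ (and, symmetrically, $y'(p)=-y'(q)$), since there $|x'(p)|=|x'(q)|$ naively looks like an extra event. The general position assumption first excludes $x'(p)=x'(q)=0$, because that would force $r$, $p$ and $q$ onto the $y'$ axis and hence be collinear; so $x'(p)=-x'(q)\neq 0$, which means $p$ and $q$ have $x'$ coordinates of opposite sign and therefore lie on different quadrants. By continuity this persists on a whole neighbourhood of such a $\theta$, so (C1) is false there and $q\notin A(p,x',y')$ throughout that neighbourhood: the set $A(p,x',y')$ does not change on account of such an orientation. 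Away from the listed orientations we are then in the complementary situation where (C1) is locally constant; when it is locally true, the same-quadrant property forces the $x'$ coordinates (and likewise the $y'$ coordinates) of $p$ and $q$ to have the same nonzero sign, so $|x'(p)|=|x'(q)|$ reduces to $x'(p)=x'(q)$ and (C2), (C3) are locally constant as well. Hence $A(p,x',y')$, and therefore $B(x',y')$, is locally constant away from the events of the statement; the non-strict inequalities in (C2) and (C3) are precisely what makes a change possible both when $y'$ reaches and when it moves away from such a line. Collecting these events over all $p$ and all $q\in\mathrm{Adj}(p)$ yields exactly the lines perpendicular or parallel to an edge of $G$, or perpendicular or parallel to a line joining $r$ with another point of $P$, and the remaining bookkeeping mirrors the proof of Observation~\ref{obs:rotationEventsXYproduction}.
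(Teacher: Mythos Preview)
Your argument is correct. The paper does not actually prove this observation; it only states that it is obtained ``similarly to Observation~\ref{obs:rotationEventsXYproduction}'', leaving the details to the reader. You have supplied exactly those details, by decomposing membership in $A(p,x',y')$ into the $\theta$-independent adjacency condition and the three $\theta$-dependent conditions (C1)--(C3), and tracking when each can flip. Your handling of the apparent extra event $x'(p)=-x'(q)$ (respectively $y'(p)=-y'(q)$) is the one nontrivial point, and you dispose of it cleanly: general position rules out the degenerate case where both coordinates vanish, and otherwise $p$ and $q$ sit in opposite half-planes, so (C1) already fails on a neighbourhood and the membership of $q$ in $A(p,x',y')$ is unaffected. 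This is precisely the kind of local analysis the paper's reference to Observation~\ref{obs:rotationEventsXYproduction} is meant to evoke, so your approach matches the paper's intent while being more explicit.
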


From the previous Observation, it follows that we need to take into account only the Cartesian Systems $x_0y_0$, $x_1y_1$, \ldots, $x_{2m-1}y_{2m-1}$, $m < |E| + |P|$ such that $y_{0}, y_2, \ldots, y_{2m-2}$, are all the axes that are either (i) perpendicular or parallel to some edge of $E$ or (ii) perpendicular or parallel to some line connecting $r$ with another point in $P$. 
The slope of each $y_{2i}$ is $\theta_i, 0 \leq i \leq m-1$ and it holds that $0 \leq \theta_0 < \theta_1 <$ \ldots $< \theta_{m-1} < \frac{\pi}{2}$. 
Moreover, the slope of each $y_{2i+1}$ is equal to $\frac{\theta_{i} + \theta_{i+1}}{2}, i = 0, 1, \ldots, m-2$ and the slope of $y_{2m-1}$ is equal to $\frac{\theta_{m-1} + \frac{\pi}{2}}{2}$.
Similarly to Lemma~\ref{lem:recArbNecDirections}, we obtain the following Lemma.

\begin{lemma}\label{lem:sufDirections2DUni}
$G$ is rooted uniform 2D-monotone if and only if it is rooted $x_iy_i-$monotone for some Cartesian System $x_iy_i$, $i = 0, 1, \ldots, 2m-1$.
\end{lemma}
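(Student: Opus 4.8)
The plan is to mirror the proof of Lemma~\ref{lem:recArbNecDirections}, using $B(x,y)$ and Lemma~\ref{lem:2DmonotneChar} in place of $B(y)$, $C(y)$ and Lemma~\ref{lem:charYrootSetTerm} (note that in the 2D setting only $B(x,y)$ is needed, since the non-strict inequalities in the definition of $A(p,x,y)$ already absorb the role played by $C(y)$ in the single-direction case). The ($\Leftarrow$) direction is immediate: if $G$ is rooted $x_iy_i$-monotone for some $i$, then the Cartesian System $x_iy_i$ itself witnesses that $G$ is rooted uniform 2D-monotone. For the ($\Rightarrow$) direction, suppose $G$ is rooted uniform 2D-monotone, so there is a Cartesian System $x'y'$ with respect to which $G$ is rooted $x'y'$-monotone. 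Since whether a path is $x'y'$-monotone depends only on the unordered pair of perpendicular directions $\{x',y'\}$ and is insensitive both to reversing an axis and to swapping the roles of $x'$ and $y'$, rooted $x'y'$-monotonicity is unchanged when the Cartesian System is rotated by a multiple of $\frac{\pi}{2}$ (cf.\ Observation~\ref{obs:quarterPlaneRot}). Hence we may assume that the slope $\theta'$ of $y'$ lies in $[0,\frac{\pi}{2})$.

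Next I would locate $\theta'$ among the event slopes $\theta_0 < \theta_1 < \dots < \theta_{m-1}$. If $\theta' = \theta_i$ for some $i$, then (up to the symmetries above) $x'y'$ is $x_{2i}y_{2i}$ and we are done. Otherwise $\theta'$ lies either in an open arc $(\theta_i,\theta_{i+1})$, $0 \le i \le m-2$, or in the wrap-around arc $(\theta_{m-1},\frac{\pi}{2}) \cup [0,\theta_0)$. By the preceding Observation, the set $B(x',y')$ does not change as the Cartesian System rotates within such an arc, because no axis with slope in the arc is perpendicular or parallel to an edge of $G$ or to a line through $r$ and another point of $P$. The axis $y_{2i+1}$ (respectively $y_{2m-1}$) has slope $\frac{\theta_i+\theta_{i+1}}{2}$ (respectively $\frac{\theta_{m-1}+\pi/2}{2}$), which lies in the relevant arc, so $B(x_{2i+1},y_{2i+1}) = B(x',y')$ (respectively $B(x_{2m-1},y_{2m-1}) = B(x',y')$). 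Applying Lemma~\ref{lem:2DmonotneChar} to $x'y'$ gives $|B(x',y')| = |P|-1$, hence the same holds for $B(x_{2i+1},y_{2i+1})$ (respectively $B(x_{2m-1},y_{2m-1})$), and a second application of Lemma~\ref{lem:2DmonotneChar} shows that $G$ is rooted $x_{2i+1}y_{2i+1}$-monotone (respectively rooted $x_{2m-1}y_{2m-1}$-monotone). In every case $G$ is rooted $x_iy_i$-monotone for some $i \in \{0,1,\dots,2m-1\}$, which proves the ($\Rightarrow$) direction.

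The step I expect to require the most care is the bookkeeping around the wrap-around arc $(\theta_{m-1},\frac{\pi}{2}) \cup [0,\theta_0)$: one must check that no rotation event occurs at slope $0$ or $\frac{\pi}{2}$ beyond those already recorded in $\Theta$. This holds precisely because $\Theta$ collects every slope that is perpendicular \emph{or} parallel to a relevant line, so the parallel/perpendicular pair of directions at $\frac{\pi}{2}$ is accounted for and the arc is genuinely event-free; the "reaches (or moves away from)'' phrasing of the preceding Observation must also be applied consistently so that the arc endpoints $\theta_i$ get assigned to the axes $y_{2i}$ rather than to the neighbouring open-arc axes. Beyond this, the argument is a routine transcription of the single-direction case, so I would keep the write-up short and refer back to Lemma~\ref{lem:recArbNecDirections} for the parts that are identical.
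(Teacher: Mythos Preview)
Your proposal is correct and follows exactly the approach the paper intends: the paper proves this lemma with the single sentence ``Similarly to Lemma~\ref{lem:recArbNecDirections}, we obtain the following Lemma,'' and your argument spells out that analogy in the expected way (reduce to slopes in $[0,\tfrac{\pi}{2})$ via Observation~\ref{obs:quarterPlaneRot}, use the preceding observation to see $B(x',y')$ is constant on each open arc between consecutive $\theta_i$'s, and invoke Lemma~\ref{lem:2DmonotneChar}). Your remark that the non-strict inequalities in $A(p,x,y)$ absorb the role of $C(y)$, and your care with the wrap-around arc, are both apt and go slightly beyond what the paper makes explicit.
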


\begin{theorem}
Given a rooted connected geometric graph $G = (P, E)$, we can decide in $O(|E|\log |P|)$ time if $G$ is rooted uniform 2D-monotone. 
\end{theorem}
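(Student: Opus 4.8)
The plan is to mirror the proof of Theorem~\ref{thm:recRootedMonotone} (the rooted uniform monotone recognition algorithm) almost verbatim, replacing single directions with Cartesian Systems and the sets $A(p,y), B(y), C(y)$ with the sets $A(p,x,y), B(x,y)$ defined above. By Lemma~\ref{lem:sufDirections2DUni}, it suffices to test, for each of the $O(|E|+|P|)$ Cartesian Systems $x_iy_i$ with $y_i$ of slope in $\Theta_{\text{critical}}$, whether $G$ is rooted $x_iy_i$-monotone, and by Lemma~\ref{lem:2DmonotneChar} this amounts to checking whether $|B(x_i,y_i)| = |P|-1$ (with the minor adjustment, as in Remark~\ref{rem:extToZeroCoord}, for points lying on an axis through $r$). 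So the whole algorithm is: compute the critical Cartesian Systems, build $A(p,x_0,y_0)$ for every $p$ and the counter $|B(x_0,y_0)|$, then sweep, updating these structures incrementally as the system rotates from $x_0y_0$ to $x_{2m-1}y_{2m-1}$, reporting success whenever the counter hits $|P|-1$.

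First I would bound the preprocessing: the set $\Theta$ has $m < |E|+|P|$ slopes (each edge contributes a perpendicular and a parallel slope, and each segment $\overline{rp}$ likewise), so computing and sorting $\Theta_{\text{critical}}$ and naming the systems $x_0y_0,\dots,x_{2m-1}y_{2m-1}$ takes $O((|E|+|P|)\log|P|) = O(|E|\log|P|)$ time (using connectivity, $|P| = O(|E|)$). For each $i$ I would precompute the list $L_i$ of pairs of points joined by an edge that is perpendicular or parallel to $y_{2i}$; since each edge is perpendicular/parallel to exactly one such axis, $\sum_i |L_i| = |E|$, and all the $L_i$ can be built in $O(|E|\log|P|)$ total time. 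Each $A(p)$ is stored in a balanced search tree supporting insert/delete/size in $O(\log|P|)$, and $|B|$ is kept as an integer updated in $O(1)$ whenever some $A(p)$ becomes empty or nonempty. Initializing all $A(p,x_0,y_0)$ costs $O(\sum_p|\mathrm{Adj}(p)|\log|P|) = O(|E|\log|P|)$, and then $|B(x_0,y_0)|$ in $O(|P|)$.

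The heart of the argument is the incremental update when the system passes from $x_{i-1}y_{i-1}$ to $x_iy_i$. As in Theorem~\ref{thm:spanningTreeOneDir} and Theorem~\ref{thm:recRootedMonotone} there are four cases: (1) $i$ even with no point of $P\setminus\{r\}$ sharing a projection with $r$ on either axis of $x_iy_i$; (2) the analogous odd case for $x_{i-1}y_{i-1}$; (3) $i$ even where $y_i$ (or $x_i$) is perpendicular or parallel to a line through $r$ and some $q$; and (4) the analogous odd case. In cases (1)–(2) no point changes quadrant, so only the pairs in $L_{\lfloor i/2\rfloor}$ can change membership of some $A(p)$ (a pair $\overline{pq}$ perpendicular/parallel to the axes can flip which endpoint dominates, i.e. which of $q \in A(p,x,y)$ or $p \in A(q,x,y)$ holds); each such pair is handled with $O(1)$ search-tree operations, so $O(k_{\lfloor i/2\rfloor}\log|P|)$ time, and $|B|$ is refreshed on the fly. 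In cases (3)–(4) the point $q$ additionally moves into a new quadrant, so I would re-scan $\mathrm{Adj}(q)$ to rebuild $A(q,\cdot,\cdot)$ and, for each neighbour $p$ of $q$, possibly insert $q$ into $A(p,\cdot,\cdot)$; this costs $O(|\mathrm{Adj}(q)|\log|P|)$ plus the $O(k_{\lfloor i/2\rfloor}\log|P|)$ of the remaining $L$-pairs. After each update I test $|B| = |P|-1$ (adjusting for a point on an axis through $r$, using Remark~\ref{rem:extToZeroCoord}'s 2D analogue). Since $\sum_i k_i = |E|$ and $\sum_q |\mathrm{Adj}(q)| = O(|E|)$, the total is $O(|E|\log|P|)$.

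The step I expect to require the most care is the bookkeeping in cases (3)–(4): when $q$ straddles two quadrants, I must be consistent about which quadrant(s) $q$ is regarded as belonging to at the "even" slope $\theta_i$ itself versus the open intervals on either side, and I must make sure that a neighbour $p$ of $q$ that also lies on the relevant axis is not double-counted between the $\mathrm{Adj}(q)$ re-scan and the processing of $L_{\lfloor i/2\rfloor}$. This is exactly the subtlety already negotiated in Theorem~\ref{thm:spanningTreeOneDir}, so I would point to that argument rather than redo it, noting only that here there is no parent/tree to maintain — merely the membership counter $|B|$ — which makes the 2D recognition bookkeeping strictly simpler than the 2D-UMMST construction. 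Everything else is a routine transcription of the one-direction recognition proof, and correctness is immediate from Lemmas~\ref{lem:2DmonotneChar} and~\ref{lem:sufDirections2DUni}.
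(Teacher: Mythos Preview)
Your proposal is correct and follows essentially the same approach as the paper's proof: a rotational sweep over the $O(|E|+|P|)$ critical Cartesian Systems, maintaining the sets $A(p,x_i,y_i)$ in balanced search trees and the set $B(x_i,y_i)$, with updates charged to the lists $L_i$ and to $|\mathrm{Adj}(q)|$ when a point $q$ crosses an axis through $r$. One small remark: the ``2D analogue of Remark~\ref{rem:extToZeroCoord}'' you invoke is in fact unnecessary here, since $A(p,x,y)$ is defined with non-strict inequalities $|x(q)|\le|x(p)|$, $|y(q)|\le|y(p)|$, so Lemma~\ref{lem:2DmonotneChar} already covers points lying on an axis and no $C$-set or special case is needed.
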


\begin{proof}
The proof is similar to the proof of Theorem~\ref{thm:recRootedMonotone}.
We employ a rotational sweep algorithm that decides if $G$ is rooted uniform 2D-monotone. 
From Lemma~\ref{lem:sufDirections2DUni}, our rooted uniform 2D-monotone recognition algorithm decides if $G$ is rooted uniform 2D-monotone by testing if $G$ is rooted $x_iy_i-$monotone for some $i = 0,1, \ldots, 2m-1$.
It tests that in this order, i.e.{} it first checks $x_0y_0$ then $x_1y_1$, \ldots, and at the end it checks $x_{2m-1}y_{2m-1}$.

We now show that its complexity is $O(|E| \log |P|)$.
We can compute the Cartesian Systems $x_0y_0$, $x_1y_1$, \ldots, $x_{2m-1}y_{2m-1}$ in $O(|E| \log |P|)$ time.  
The algorithm maintains for each $p \in P\setminus\{r\}$ a data structure $A(p)$ which represents the $A(p,x_i,y_i)$ (when the algorithm checks the $x_iy_i$ Cartesian System) and can be implemented as a $2-3$ tree that stores the indices of the points that it contains. 
Moreover, the algorithm maintains a data structure $B$ that represents the $B(x_i,y_i)$ (when the algorithm checks the $x_iy_i$ Cartesian System) and is implemented as an array of boolean of $O(|P|)$ size.
Using similar analysis to the one presented in Theorem~\ref{thm:recRootedMonotone}, the initial construction of all $A(p), p \in P\setminus\{r\}$ s.t.{} $A(p)$ equals to $A(p,x_0,y_0)$ takes $O(|E|\log |P|)$ total time.
Then, the construction of $B$ s.t.{} $B$ equals to $B(x_0,y_0)$ takes $O(|P|)$ time.
Furthermore, using similar arguments to the ones presented in Theorem~\ref{thm:recRootedMonotone}, the updates of all $A(p), p \in P\setminus\{r\}$, and $B$ throughout all the execution of the algorithm take $O(|E|\log |P|)$ total time.
Additionally, from Lemma~\ref{lem:2DmonotneChar}, given $B$ equal to $B(x_i,y_i)$, it can be decided in $O(1)$ time if $G$ is rooted $x_iy_i-$monotone.
Hence, performing all the tests, i.e.{} if $G$ is rooted $x_iy_i-$monotone for $i = 0$, $1$, \ldots, $2m-1$, using the data structure $B$ take $O(|E|)$ total time.
From all the previous, it follows that the time complexity of the algorithm is $O(|E|\log |P|)$. 
\end{proof}

 \section{Conclusions and Future Work}
In this article we studied the problem of constructing the minimum cost spanning geometric graph of a given rooted point set in which the root is connected to all other vertices by paths that are monotone w.r.t.{} a single direction, i.e.{} they are $y$-monotone (or w.r.t.{} a pair of orthogonal directions, i.e.{} they are $xy$-monotone). 
We showed that the minimum cost spanning geometric graph is actually a tree and we proposed polynomial time algorithms that construct it for the case where the direction (the pair of orthogonal directions) of monotonicity is given or remains to be determined.

Several directions for further research are open.

\begin{enumerate}
\item
We studied rooted point sets and we built the minimum cost spanning tree that contains monotone paths w.r.t.{} a single direction from the root $r$ to any other point in the point set.
What about the case where we are given a $k$-rooted point set, i.e.{} a set with $k$ designated points as its roots, and we are asked to find the minimum cost spanning geometric graph containing monotone paths w.r.t.{} a single direction from each root to every other point in the point set.
In this case, is a wanted graph a tree and additionally, can we find a polynomial time algorithm for this problem?

In the extreme case where all points in the point set $P$ are designated as roots, the problem is trivial. 
Since a geometric graph $G=(P,E)$ is $y-$monotone only if it contains as subgraph the graph path $W_{|P|}$ visiting all points in increasing order of their $y$ coordinates~\cite{Ang15}, the $y-$monotone minimum spanning tree of $P$ is actually the graph path $W_{|P|}$.
Furthermore, the uniform monotone minimum spanning tree of $P$ can be efficiently produced by a rotational sweep algorithm similar to the one employed for the rooted UMMST.

\item We showed that computing the rooted UMMST (or 2D-UMMST) of a rooted point set can be done in polynomial time. 
But is this also the case for the (rooted) monotone (or 2D-monotone) minimum spanning graph of a (rooted) point set or is the problem NP-hard?

\item
We studied the problem of building rooted minimum cost spanning geometric graphs that possess a specific property, and we focused on the property of monotonicity (w.r.t.{} one or two orthogonal directions).
What if we consider a different requirement/property?
For example, we can ask for the minimum cost spanning geometric graph containing increasing-chord paths or self-approaching paths (see~\cite{IckKL99,AlaCGLP13}) from the root to any other point in the point set.
In this case, is the sought graph a tree and does there exist an efficient algorithm that produces it?
\end{enumerate}

\bibliographystyle{plain}
\bibliography{ic-sa,templib,monotone,upward,pointLoc,NN}
\end{document}